\documentclass[11pt]{article}
\usepackage[utf8]{inputenc}  
\usepackage[T1]{fontenc} 
\usepackage{lmodern}
\usepackage{amsmath} 
\usepackage{amssymb}
\usepackage{amsthm} 
\usepackage{mathrsfs}
\usepackage{graphicx}
\usepackage{ccaption}
\usepackage{hyperref}
\usepackage{listings}
\usepackage{enumitem}
\usepackage{tikz}
\usepackage{float}
\usepackage{pgfplots}
\pgfplotsset{compat=newest}
\usetikzlibrary{shapes}
\usetikzlibrary{patterns}
\usepackage{stmaryrd}
\usepackage{authblk}
\usepackage[pagewise]{lineno}
\usepackage{color}

\usepackage[left=2.5cm,right=2.5cm,top=3cm,bottom=3cm]{geometry}

\theoremstyle{definition}
\newtheorem{defi}{Definition}
\newtheorem{theorem}{Theorem}
\newtheorem{prop}{Proposition}
\newtheorem{asum}{Assumption}
\newtheorem{coro}{Corollary}
\newtheorem{lem}{Lemma}
\newtheorem{rem}{Remark}
\newcommand{\R}{\mathbb{R}}
\newcommand{\N}{\mathbb{N}}

\newcommand{\C}{\mathbb{C}}

\newcommand{\dd}{\mathrm{d}}

\newcommand{\eps}{\varepsilon}
\newcommand{\loc}{\text{loc}}

\newcommand{\ex}{\text{ex}}

\newcommand{\Ai}{\mathcal{A}}
\newcommand{\Bi}{\mathcal{B}}

    \pgfplotsset{
        colormap={parula}{
            rgb255=(53,42,135)
            rgb255=(15,92,221)
            rgb255=(18,125,216)
            rgb255=(7,156,207)
            rgb255=(21,177,180)
            rgb255=(89,189,140)
            rgb255=(165,190,107)
            rgb255=(225,185,82)
            rgb255=(252,206,46)
            rgb255=(249,251,14)
        },
    }
\everymath{\displaystyle}

\newcommand{\g}[1]{\boldsymbol{#1}}

\def\ph{\varphi}
\def\top{\text{top}}
\def\bot{\text{bot}}
\def\app{\text{app}}
\newcommand{\cC}{\mathcal{C}}
\def\cH{\mathcal{H}}
\def\cN{\mathcal{N}}
\def\cO{\mathcal{O}}
\def\tH{\text{H}}
\def\tL{\text{L}}
\DeclareMathOperator\supp{\text{supp}}
\newcommand{\norm}[2]{\left\|{#1}\right\|_{#2}}

\title{Reconstruction of smooth shape defects in waveguides using locally resonant frequencies}
\author[1,*]{Angèle Niclas}
\author[1]{Laurent Seppecher}
\affil[1]{Institut Camille Jordan, \'Ecole Centrale Lyon, France}
\affil[*]{Corresponding author: angele.niclas@ec-lyon.fr}
\date{}

\begin{document}
\maketitle 

\begin{abstract}
This article aims to present a new method to reconstruct slowly varying width defects in 2D waveguides using locally resonant frequencies. At these frequencies, locally resonant modes propagate in the waveguide under the form of Airy functions depending on a parameter called the locally resonant point. In this particular point, the local width of the waveguide is known and its location can be recovered from boundary measurements of the wavefield. Using the same process for different frequencies, we produce a good approximation of the width in all the waveguide. Given multi-frequency measurements taken at the surface of the waveguide, we provide a $\text{L}^\infty$-stable explicit method to reconstruct the width of the waveguide. We finally validate our method on numerical data, and we discuss its applications and limits. 
\end{abstract}

\section{Introduction}

This article presents a new method to reconstruct width variations of a slowly varying waveguide from multi-frequency one side boundary measurements in dimension 2. The considered varying waveguide is described by
\begin{equation} \Omega:=\left\{(x,y)\in \R^2 \, |\, 0<y<h(x)\right\},
\end{equation}
where $h\in \mathcal{C}^2(\R)\cap W^{2,\infty}(\R)$ is a positive profile function defining the top boundary. The bottom boundary is assumed to be flat (see an illustration in Figure \ref{meas}) but a similar analysis could be done when both boundaries are varying. In the time-harmonic regime, the wavefield $u$ satisfies the Helmholtz equation with Neumann boundary conditions
\begin{equation}\label{eqdebut}
\left\{\begin{array}{cl}
\Delta u+k^2u =- f & \text{ in } \Omega,\\ \partial_\nu u =b & \text{ on }\partial\Omega,
\end{array}\right.
\end{equation}
where $k\in (0,+\infty)$ is the frequency, $f$ is an interior source term, and $b$ is a boundary source term. In this work, a waveguide is said to be slowly varying when there exists a small parameter $\eta>0$ such that $\Vert h'\Vert_{\text{L}^\infty(\R)}\leq \eta$ and $\Vert h''\Vert_{\text{L}^\infty(\R)}\leq \eta^2$. Such waveguides are good models of ducts, corroded pipes, or metal plates (see \cite{honarvar1,legrand1}). 

We focus in this work on the recovery of the function $h$ modeling the waveguide shape from the knowledge of the wavefield $d^\text{ex}(x) :=u(x,0)$ on one surface of the waveguide and for multiple frequencies $k$. This model and inverse problem is inspired from non destructive monitoring of plates done in \cite{balogun1,legrand2,ces1}. Hence we assume the knowledge of measurements of $u(x,y)$ for $x\in I$ where $I$ is an interval of $\R$ and $y=0$ in a frequency interval $K\subset (0,+\infty)$, as shown in Figure \ref{meas}. 

If $k$ is chosen such that $k=n\pi/h(x^\star_k)$ with $n\in \N$ and $x^\star_k\in \R$, the Helmholtz problem is not well posed in general. Nevertheless, we prove in \cite{bonnetier2} that there exists a unique solution to this problem as long as the waveguide is slowly varying. In the same work, we also give a suitable explicit approximation of the wavefield that explicitly depends on $x^\star_k$. The aim in this article to recover the position of $x^\star_k$ for different frequencies, and then to recover the shape function $h$. Using these frequencies, the proposed inverse problem is highly non linear but a unique and stable recovering of $h$ is possible up to a controllable approximation error.

\begin{figure}[h]
\begin{center}
\begin{tikzpicture} 
\draw (-2,0) -- (9,0);
\draw (-2,1) -- (1,1); 
\draw (7,1.2998) -- (9,1.2998);  
\draw [domain=1:7, samples=200] plot (\x,{0.3*sin(4*pi*sqrt(\x)/sqrt(7) r)+1-0.3*sin(4*pi/sqrt(7) r)});  
\draw (-2.7,0) node{$y=0$}; 
\draw (-2.8,1) node{$y=h(x)$}; 
\draw [white,fill=gray!40] (-1,0.9)--(1,0.9)--(1,1.1)--(-1,1.1)--(-1,0.9); 
\draw (0,1.2) node[above]{$b$}; 
\draw [white,fill=gray!40] (3,0.6) circle (0.3);
\draw (3,0.6) node{$f$};
\draw (-1.5,0) node[regular polygon,regular polygon sides=4, fill=black, scale=0.5]{};
\draw (0.5,0) node[regular polygon,regular polygon sides=4, fill=black, scale=0.5]{};
\draw (2.5,0) node[regular polygon,regular polygon sides=4, fill=black, scale=0.5]{};
\draw (4.5,0) node[regular polygon,regular polygon sides=4, fill=black, scale=0.5]{};
\draw (6.5,0) node[regular polygon,regular polygon sides=4, fill=black, scale=0.5]{};
\draw (8.5,0) node[regular polygon,regular polygon sides=4, fill=black, scale=0.5]{};
\draw (-0.5,0) node[regular polygon,regular polygon sides=4, fill=black, scale=0.5]{};
\draw (1.5,0) node[regular polygon,regular polygon sides=4, fill=black, scale=0.5]{};
\draw (3.5,0) node[regular polygon,regular polygon sides=4, fill=black, scale=0.5]{};
\draw (5.5,0) node[regular polygon,regular polygon sides=4, fill=black, scale=0.5]{};
\draw (7.5,0) node[regular polygon,regular polygon sides=4, fill=black, scale=0.5]{};

\end{tikzpicture}
\end{center}
\caption{\label{meas} Parametrization of a slowly variable waveguide of width $h$. A wavefield $u$ is generated by an internal source $f$ and/or a boundary source $b$. Squares represent measurements of $u$ taken on the surface $y=0$.}
\end{figure}
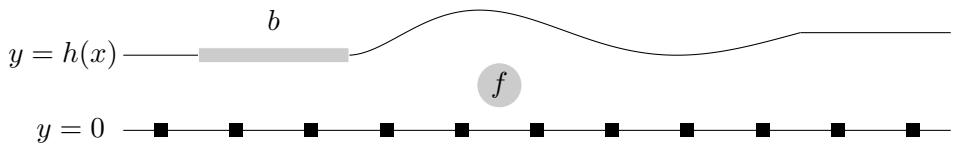

\subsection{Scientific context}

The detection and reconstruction of shape defects in a waveguide are mentioned in different works. In articles \cite{lu1,abra1,abra2}, the authors use a conformal mapping to map the geometry of the perturbed waveguide to that of a regular waveguide. This method is beneficial to understand the propagation of waves in irregular waveguides. Still, it is not easily adaptable to the inverse problem and the reconstruction of defects since the transformation to a regular waveguide is not explicit and proves numerically expensive. Another approach based on the scattering field treatment is developed in \cite{norgren1}. Finally, articles \cite{pagneux2,folguera1} study the forward problem and give leads on how to use one side boundary measurements to reconstruct the width of a slowly varying waveguide.

Our work concerns a different approach, also used in \cite{bao1,bao2}, where we assume the data to be  available for a whole interval of frequencies. This provides additional information that should help to localize and reconstruct the shape of the defect. Moreover, the use of multi-frequency data often provides uniqueness of the reconstruction (see \cite{acosta1}) and a better stability (see \cite{bao3,isakov1,sini1}).

In a previous work \cite{bonnetier1}, we already 
presented a method to reconstruct small width variations using an inverse scattering procedure. However, 
we
avoided all the cut-off frequencies of the waveguide, which are frequencies $k>0$ such that $k=n\pi/h(x^\star_k)$ for a mode $n\in \N$ and a longitudinal position $x^\star_k$. Since experimental works presented in \cite{balogun1,ces1} suggest that these resonant frequencies are helpful to reconstruct width variations, we choose in this article to work only with these frequencies. Using the study of the forward problem already done in \cite{bonnetier2}, we know that if $k$ is a locally resonant frequency, the wavefield $u$ strongly depends on $x^\star_k$. The main idea of our reconstruction method is to use measurements of $u$ to find back $x^\star_k$. Since $h(x^\star_k)=n\pi/k$, it then gives up the information about the width of the waveguide in one point. By taking different locally resonant frequencies $k$, we show that one can obtain a complete approximation of the width $h$ of the waveguide.

\subsection{Outline of the paper}

The key result of this paper is Theorem \ref{th2}, which proves that $u$ is close to a three parameters Airy function, one parameter being $x^\star_k$. As explained in Proposition \ref{leastsq}, it enables us to find the value of $x^\star_k$ for every locally resonant frequency, and to prove that our reconstruction method is $\text{L}^\infty$-stable. The paper is organized as follows. In section 2, we briefly recall results on the modal decomposition and the study of the forward problem. In section 3, we study the inverse problem with measurements taken at the surface of the waveguide and we provide a stability result for the reconstruction of the width of the waveguide. Finally, in section 4, we provide numerical reconstruction of different width defects.

\subsection{Notations}

The varying waveguide is denoted by $\Omega$, its boundary by $\partial\Omega$ and the subscript “top” (resp. “bot”) indicates the upper boundary of the waveguide (resp. lower). We denote $\nu$ the outer normal unit vector. For every $\ell>0$, we set $\Omega_\ell= \{(x,y)\in \Omega\, |\, |x|<\ell\}$. Spaces $\tH^1$ $\tH^2$, $W^{1,1}$, $\tH^{1/2}$ over $\Omega$ or $\R$ are classic Sobolev spaces. The Airy function of the first kind (resp. second kind) is denoted by $\Ai$ (resp. $\Bi$). They are linear independent solutions of the Airy equation $y''-xy=0$ (see \cite{abramowitz1} for more results about Airy functions). See in Figure \ref{airy} the graph of these two functions. The term $\delta_{x=s}$ denotes the Dirac distribution at the point $s\in \R$ and the function $\textbf{1}_{E}$ is the indicator function of the set $E$. Finally, the notation $\{a:b:\ell\}$ designates the uniform discretization of the interval $[a,b]$ with $\ell$ points. 

\begin{figure}[h!]
\begin{center}
\scalebox{.5}{\input{airy}}
\caption{\label{airy} Representation of the Airy functions $\Ai$ and $\Bi$.}
\end{center}
\end{figure}

\section{Brief study of the forward problem}

Before studying the inverse problem associated with the reconstruction of the width in a varying waveguide, we need to study the forward problem in order to find an approximation of the available data. In this section, we briefly recall all the main results on the study of the forward problem. These results and their proofs can be found in \cite{bonnetier1,bourgeois1}.

A useful tool when working in waveguides is the modal decomposition. The following definition provides a modal decomposition in varying waveguides:
\begin{defi}\label{def:modes} We define the sequence of functions $(\ph_n)_{n\in\N}$ by
\begin{equation}\label{phin}
\forall (x,y)\in \Omega,\quad \ph_n(x,y) :=
\left\{\begin{array}{cl}
1/\sqrt{h(x)}\quad  &\text{if } n=0, \\ 
 \frac{\sqrt{2}}{\sqrt{h(x)}}\cos\left(\frac{n\pi y}{h(x)}\right)\quad &\text{if } n\geq 1,
\end{array}\right.
\end{equation}
which for any fixed $x\in \R$ defines an orthonormal basis of $\tL^2(0,h(x))$.\end{defi}

Hence, a solution $u \in \tH^2_{\loc}\big(\Omega\big)$ of \eqref{eqdebut} admits a unique modal decomposition
\begin{equation}\label{decmode}
 u(x,y)=\sum_{n\in \N} u_{n}(x) \ph_n(x,y)\quad\text{where}\quad  u_{n}(x):=\int_0^{h(x)} u(x,y)\ph_n(x,y)\dd y.
\end{equation}
Note that $ u_{n}$ does not satisfy in general any nice equation. However, when $h$ is constant (outside of $\supp(h')$), it satisfies an equation of the form $u_{n}''+k_n^2 u_{n}=- g_n$ where $k_n^2=k^2-n^2\pi^2/h^2$ is the wavenumber. When $h$ is variable, the decomposition \eqref{decmode} motivates the following definition:

\begin{defi}\label{def:wavenumber}  The local wavenumber function of the mode $n\in\N$ is the complex function $k_n:\R\to \C$ defined by
\begin{equation}
k_n^2(x):=k^2-\frac{n^2\pi^2}{h(x)^2},
\end{equation}
with $\text{Re}(k_n), \text{Im}(k_n)\geq 0$. 
\end{defi}

In this work, as $h(x)$ is non constant, $k_n(x)$ may vanish for some $x\in\R$ and change from a positive real number to a purely imaginary one. We then distinguish three different situations: 

\begin{defi}  A mode $n\in\N$ falls in one of these three situations:
\begin{enumerate}
\item If $n<kh(x)/\pi$ for all $x\in\R$ then $k_n(x)\in(0,+\infty)$ for all $x\in\R$ and the mode $n$ is called propagative. 
\item If $n>kh(x)/\pi$ for all $x\in\R$ then $k_n(x)\in i(0,+\infty)$ for all $x\in\R$ and the mode $n$ is called evanescent. 
\item If there exists $x^\star_k\in \R$ such that $n=kh(x^\star_k)/\pi$ the mode $n$ is called locally resonant. Such points $x^\star_k$ are called resonant points, and there are simple if $h'(x^\star_k)\neq 0$, and multiple otherwise. 
\end{enumerate}
A frequency $k>0$ for which there exists at least a locally resonant mode is called a locally resonant frequency.  
\end{defi}

Using the wavenumber function, one can adapt the classic Sommerfeld (or outgoing) condition, defined in \cite{bonnetier1} for regular waveguides, to general varying waveguides $\Omega$. This condition is used to guarantee uniqueness for the source problem given in equation \eqref{eqdebut}.

\begin{defi}\label{def:outgoing}  A wavefield $ u_k \in \tH^2_{\loc}\big( \Omega\big)$ is said to be outgoing if it satisfies 
\begin{equation} \label{sommer}\left|  u_{n}'(x)\frac{x}{|x|}-ik_n(x) u_{n}(x) \right| \underset{|x|\rightarrow +\infty}{\longrightarrow} 0 \qquad \forall n\in \N,
\end{equation}
where $u_{n}$ is given in \eqref{decmode}. 
\end{defi}

In all this work, we make the following assumptions:

\begin{asum}\label{def:slow} We assume that $h\in \mathcal{C}^2(\R)\cap W^{2,\infty}(\R)$ with $h'$ compactly supported and we then define two constants $0<h_{\min}\leq h_{\max}<+\infty$ such that 
\begin{equation}\nonumber
\forall x\in \R \quad h_{\min} \leq h(x) \leq h_{\max},
\end{equation}
and $h(x)=h_{\min}$ or $h(x)=h_{\max}$ if $x\notin \supp(h')$. For such a function, we define a parameter $\eta>0$ that satisfies 
\begin{equation}\nonumber
\Vert h'\Vert_{\text{L}^\infty(\R)} <\eta\quad\text{ and } \quad \Vert h''\Vert_{\text{L}^\infty(\R)}<\eta^2.
\end{equation}
\end{asum}
Such a waveguide is represented in Figure \ref{wg}.

\begin{figure}[h]
\begin{center}
\begin{tikzpicture} 
\draw (-1.5,0) -- (9,0);
\draw (-1.5,1) -- (0,1); 
\draw (7,1.4) -- (9,1.4); 
\draw (4.5,0.6) node{$\Omega$}; 
\draw [domain=0:7, samples=100] plot (\x,{1+12/7/7/7/7/7*\x*\x*\x*(\x*\x/5-7*\x/2+49/3+0.02)});  
\draw (-1.5,1) node[left]{$h_{\min}$};
\draw (9,1.4) node[right]{$h_{\max}$};
\draw [dashed] (0,-0.2)--(0,1.6); 
\draw [dashed] (7,-0.2)--(7,1.6);
\draw [<->] (0.1,-0.1)--(6.9,-0.1); 
\draw (3.5,-0.4) node{supp($h'$)}; 
\end{tikzpicture}
\end{center}
\caption{\label{wg} Representation of an increasing slowly and compactly varying waveguide.}
\end{figure}
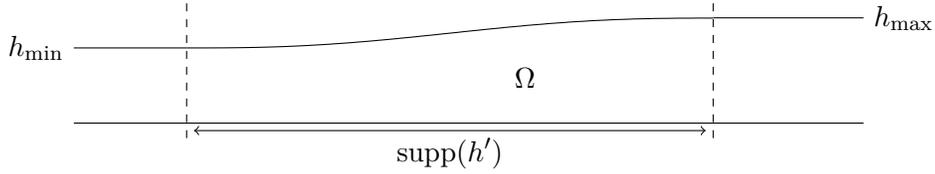

The forward source problem is defined for every frequency $k>0$ by 
\begin{equation}\label{eqmatlab} (\mathcal{H}_k) : \quad 
\left\{\begin{array}{cl} \Delta u +k^2 u= -f  & \text{ in } \Omega, \\
\partial_\nu u =b_\top &\text{ in } \partial\Omega_\text{top},\\
\partial_\nu u =b_\bot &\text{ in } \partial\Omega_\text{bot}, \\
u\text{ is outgoing,} \end{array}\right. 
\end{equation}
As explained in \cite{bourgeois1}, this problem is not well-posed  when $\{x \,|\, k_n(x)=0\}$ is a non-trivial interval of $\R$. This especially happens when $k=n\pi/h_{\min}$ or $k=n\pi/h_{\max}$. We then avoid these two forbidden situations and we set
\begin{equation}\label{delta}
\delta(k):=\min_{n\in \N}\left(\sqrt{\left|k^2-\frac{n^2\pi^2}{{h_{\min}}^2}\right|},\sqrt{\left|k^2-\frac{n^2\pi^2}{{h_{\max}}^2}\right|} \right).
\end{equation}
From now on, we define $(f_n)_{n\in \N}$ the modal decomposition of $f$, and
\begin{equation}\label{gn}
g_n(x):=\frac{f_n(x)}{\sqrt{h(x)}}+\varphi_n(1)b_\top(x)\frac{\sqrt{1+(h'(x))^2}}{\sqrt{h}}+\varphi_n(0)b_\bot(x) \frac{1}{\sqrt{h}}. 
\end{equation}
Using the work done in \cite{bonnetier2}, we are able to provide an approximation of the solution of \eqref{eqmatlab}. If $h$ is increasing, we can state the following result using Theorem 1 and Remark 3 in \cite{bonnetier2}. 

\begin{theorem}\label{th1}
Let $h$ be an increasing function defining a varying waveguide $\Omega$ that satisfies Assumption \ref{def:slow} with a variation parameter $\eta>0$. Consider sources $f\in \tL^\infty_c(\Omega)$, $b:=(b_\bot,b_\top)\in ({\text{H}}^{1/2}(\R))^2\cap (\text{L}^\infty_c(\R)^2)$. Assume that there is a unique locally resonant mode $N\in \N$, associated with a simple resonant point $x^\star_k\in\R$. Let $I\subset \R$ be an interval of length $R>0$, and $\Omega_I:=\{(x,y)\in \Omega \, |\, x\in I\}$. 

There exists $\eta_0>0$ depending only on $h_{\min}$, $h_{\max}$, $\delta(k)$ and $R$ such that if $\eta\leq \eta_0$, then the problem $(\cH_k)$ admits a unique solution $u\in \text{H}^2_{\text{loc}}\big(\Omega)$. Moreover, this solution is approached by $u^{\text{app}}$ defined for almost every $(x,y)\in \Omega$ by 
\begin{equation}\label{greentot}
u^{\text{app}}(x,y):=\sum_{n\in \N} \left(\int_\R G_{n}^{\text{app}}(x,s)g_n(s)\dd s\right)\ph_n\left(y\right),
\end{equation}
where $(f_n)_{n\in \N}$ is the modal decomposition of $f$, $\ph_n$ is defined in \eqref{phin} and $G_{n}^{\text{app}}$ is given by 
\begin{equation}\label{greenfunction}
G_{n}^{\text{app}}(x,s):=
\left\{\begin{aligned}
&\frac{i}{2\sqrt{k_n(s)k_n(x)}}\exp\left(i\left|\int_s^xk_n\right|\right), & \quad\text{ if } n<N,\\
&\frac{1}{2\sqrt{|k_n|(s)|k_n|(x)}}\exp\left(-\left|\int_s^x|k_n|\right|\right), & \quad\text{if } n>N,\\
&\left\{\begin{aligned}
\frac{\pi(\xi(s)\xi(x))^{1/4}}{\sqrt{k_n(s)k_n(x)}}\big(i\Ai+\Bi\big)\circ\xi(s)\Ai\circ\xi(x)& \quad\text{ if } x<s, \\
\frac{\pi(\xi(s)\xi(x))^{1/4}}{\sqrt{k_n(s)k_n(x)}}\big(i\Ai+\Bi\big)\circ\xi(x)\Ai\circ\xi(s)& \quad\text{ if } x>s, \\
\end{aligned}\right.
&\quad\text{ if } n=N.
\end{aligned}\right.\end{equation}
Function $k_n$ is the wavenumber function defined in Definition \ref{def:wavenumber} and the function $\xi$ is given by 
\begin{equation}\label{eq:xi}
\xi(x):=
\left\{\begin{aligned}
\left(-\frac{3}{2}i\int_x^{x^\star_k}k_N(t)\dd t\right)^{2/3} & \text{ if } x<x^\star_k, \\ -\left(\frac{3}{2}\int_{x^\star_k}^x k_N(t) \dd t \right)^{2/3} & \text{ if } x>x^\star_k.
\end{aligned}\right.\end{equation}
Precisely, there exist a constant $C_1>0$ depending only on $h_{\min}$, $h_{\max}$ and $N$ such that
\begin{equation}
\Vert u -u^{\text{app}}\Vert_{\text{H}^1(\Omega_I)}\leq C_1\eta R^2\delta(k)^{-8}\left(\Vert f\Vert_{\text{L}^2(\Omega)}+\Vert b\Vert_{\left({\text{H}}^{1/2}(\R)\right)^2}\right).
\end{equation}

\end{theorem}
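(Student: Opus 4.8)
The plan is to collapse the two-dimensional problem $(\cH_k)$ to a one-dimensional coupled system through the modal decomposition, to solve each scalar equation of that system by a WKB/turning-point construction that reproduces exactly the kernels $G_n^{\app}$ of \eqref{greenfunction}, and then to close a perturbation argument in which the inter-modal coupling and the WKB consistency error are both of size $\cO(\eta)$.

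First I would substitute $u(x,y)=\sum_n u_n(x)\ph_n(x,y)$ into $(\cH_k)$ and test against $\ph_m(x,\cdot)$, using that $(\ph_n(x,\cdot))_n$ is orthonormal in $\tL^2(0,h(x))$ and that $\partial_y^2\ph_n=-(n^2\pi^2/h^2)\ph_n$. Integrating by parts in $y$ converts the Neumann data $b_\top,b_\bot$ and the source $f$ into the right-hand side $g_m$ of \eqref{gn} (the curved top boundary is what produces the factor $\sqrt{1+(h')^2}$ there), while the $x$-dependence of the basis generates a system of the form
\begin{equation}\nonumber
u_m''+k_m^2(x)\,u_m=-g_m-\sum_n\big(2A_{mn}(x)\,u_n'+B_{mn}(x)\,u_n\big),
\end{equation}
with coupling coefficients $A_{mn}=\langle\partial_x\ph_n,\ph_m\rangle$ and $B_{mn}=\langle\partial_x^2\ph_n,\ph_m\rangle$. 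Since $\partial_x\ph_n$ and $\partial_x^2\ph_n$ arise solely from the variation of $h$, one has $|A_{mn}|\lesssim\norm{h'}{\tL^\infty}$ and $|B_{mn}|\lesssim\norm{h'}{\tL^\infty}^2+\norm{h''}{\tL^\infty}$, so by Assumption \ref{def:slow} the whole coupling block is $\cO(\eta)$ and the system decouples at leading order into $u_n''+k_n^2(x)u_n=-g_n$.

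Next I would build the outgoing Green's function of each scalar operator $\partial_x^2+k_n^2(x)$. For a propagative mode $n<N$ one has $k_n^2\geq\delta(k)^2$, so the WKB pair $k_n^{-1/2}\exp(\pm i\int k_n)$ gives an outgoing fundamental system, and the standard Wronskian normalisation produces exactly the first line of \eqref{greenfunction}; for an evanescent mode $n>N$ the same computation with $|k_n|$ and decaying real exponentials gives the second line. The delicate case is the resonant mode $n=N$, for which $k_N^2$ has a \emph{simple} zero at $x^\star_k$ (simple because $h'(x^\star_k)\neq0$). There I would use the Langer variable $\xi$ of \eqref{eq:xi}, which is smooth and monotone across the turning point and conjugates $\partial_x^2+k_N^2$ to the Airy operator up to a lower-order remainder. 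The two outgoing homogeneous solutions then become $\Ai\circ\xi$ (recessive as $x\to+\infty$) and $(i\Ai+\Bi)\circ\xi$ (the Airy counterpart of an outgoing Hankel function as $x\to-\infty$), and assembling them with the correct Wronskian constant reproduces the third line of \eqref{greenfunction}.

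Finally I would estimate errors and conclude. The decoupled Green's operators furnish $u^{\app}$ and, because they are built from outgoing solutions, they are the unique outgoing inverse of the decoupled problem; adding the $\cO(\eta)$ coupling block as a bounded perturbation, a Neumann series (equivalently a contraction in $\tH^1(\Omega_I)$) is summable precisely when $\eta\leq\eta_0$, which yields at once existence, uniqueness and the quantitative bound. The factor $\eta$ comes from the consistency error of the WKB/Airy solutions and from the coupling block; the length $R$ enters through the $\tL^2\to\tH^1$ operator bounds of the Green's operators on $\Omega_I$, giving $R^2$; and each inverse power of $\delta(k)$ arises from a $k_n^{-1/2}$ factor or a derivative of $\xi$ near the thresholds $k=n\pi/h_{\min},\,n\pi/h_{\max}$, which accumulate to $\delta(k)^{-8}$. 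The evanescent tail $\sum_{n>N}$ converges thanks to the exponential decay, so summing over modes is harmless. The main obstacle is the resonant mode: establishing that the Langer/Airy approximation is \emph{uniform} through $x^\star_k$, so that the error stays $\cO(\eta)$ with no spurious blow-up at the turning point, and matching it smoothly to the two WKB branches; this is where both the sharp $\delta(k)$ dependence and the validity of the whole perturbation scheme are decided.
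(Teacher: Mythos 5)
You should first be aware that this paper contains no proof of Theorem \ref{th1} at all: the statement is imported from the authors' earlier work on the forward problem (the text just before the theorem says it follows from Theorem 1 and Remark 3 of \cite{bonnetier2}). So there is no internal argument to compare yours against line by line. What can be said is that your outline reproduces the strategy that the paper attributes to \cite{bonnetier2} and that is implicit in everything quoted here: modal projection producing the coupled system with $\cO(\eta)$ coupling coefficients (bounded via $\Vert h'\Vert_{\tL^\infty}\leq\eta$, $\Vert h''\Vert_{\tL^\infty}\leq\eta^2$), WKB kernels for the propagative and evanescent modes, a Langer--Airy construction at the simple turning point for the resonant mode yielding exactly the kernels \eqref{greenfunction} with the variable \eqref{eq:xi}, and a perturbative (Neumann-series) closure giving existence, uniqueness and the $\eta R^2\delta(k)^{-8}$ bound. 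As a reconstruction of the cited proof's architecture, it is sound, and you correctly identify the uniformity of the Airy approximation through $x^\star_k$ as the technical crux.

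One concrete slip in your resonant-mode discussion: in an \emph{increasing} waveguide the mode $N$ is evanescent for $x<x^\star_k$ (there $h(x)<h(x^\star_k)$, so $k_N^2<0$) and propagative for $x>x^\star_k$; correspondingly $\xi>0$ to the left of $x^\star_k$ and $\xi<0$ to the right. Hence $\Ai\circ\xi$ is the solution recessive as $x\to-\infty$, and $(i\Ai+\Bi)\circ\xi$ is the outgoing one as $x\to+\infty$ --- the reverse of what you wrote. This is not cosmetic: the outgoing Green's function is assembled as the recessive-at-$-\infty$ solution evaluated at the smaller of $x,s$ times the outgoing-at-$+\infty$ solution at the larger, which is exactly the structure of the third line of \eqref{greenfunction}; with your orientation the kernel would satisfy an incoming rather than outgoing radiation condition and would not match Definition \ref{def:outgoing}. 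The fix is a relabelling, not a change of method, but in a full write-up this is precisely the point where the sign conventions determine the result.
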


This result provides an approximation of the measurements of the wavefield for every frequency, and a control of the approximation error. We represent in Figure \ref{direct_multi_freq} the wavefield for different frequencies. We also point out in Figure \ref{direct_multi_freq2} that the source should be located in an area where $h>h(x^\star)$ in order to generated a significant locally resonant mode.

\begin{figure}\begin{center}
\begin{tikzpicture}
\begin{axis}[width=4cm, height=1.7cm, axis on top, scale only axis, xmin=-7, xmax=7, ymin=-0.01, ymax=0.14, colorbar,point meta min=0,point meta max=0.1604, title={$k=30.9$}, colorbar style={ytick distance=0.1}, ytick distance=0.1,axis line style={draw=none},tick style={draw=none}]
\addplot graphics [xmin=-7,xmax=7,ymin=0,ymax=0.1013]{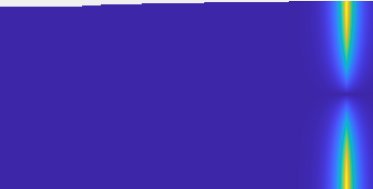};
\draw [red] (6,-0.01)--(6,0.11) node[above]{$s$};
\end{axis} 
\end{tikzpicture}\hspace{2mm}
\begin{tikzpicture}
\begin{axis}[width=4cm, height=1.7cm, axis on top, scale only axis, xmin=-7, xmax=7, ymin=-0.01, ymax=0.14, colorbar,point meta min=0,point meta max=1.2831, title={$k=31.1$}, colorbar style={ytick distance=1.2}, ytick distance=0.1,axis line style={draw=none},tick style={draw=none}]
\addplot graphics [xmin=-7,xmax=7,ymin=0,ymax=0.1013]{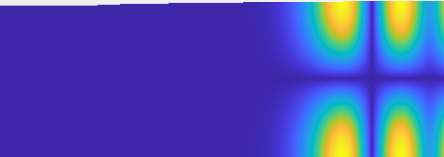};
\draw [red] (6,-0.01)--(6,0.12) node[left]{$s$};
\draw [red] (3,-0.01)--(3,0.12) node[left]{$x^\star_k$};
\end{axis} 
\end{tikzpicture}

\begin{tikzpicture}
\begin{axis}[width=4cm, height=1.7cm, axis on top, scale only axis, xmin=-7, xmax=7, ymin=-0.01, ymax=0.14, colorbar,point meta min=0,point meta max=0.3587, title={$k=31.4$}, colorbar style={ytick distance=0.3}, ytick distance=0.1,axis line style={draw=none},tick style={draw=none}]
\addplot graphics [xmin=-7,xmax=7,ymin=0,ymax=0.1013]{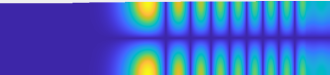};
\draw [red] (6,-0.01)--(6,0.12) node[left]{$s$};
\draw [red] (-1.5,-0.01)--(-1.5,0.12) node[left]{$x^\star_k$};
\end{axis} 
\end{tikzpicture}\hspace{2mm}
\begin{tikzpicture}
\begin{axis}[width=4cm, height=1.7cm, axis on top, scale only axis, xmin=-7, xmax=7, ymin=-0.01, ymax=0.14, colorbar,point meta min=0,point meta max=0.1383, title={$k=32$}, colorbar style={ytick distance=0.1}, ytick distance=0.1,axis line style={draw=none},tick style={draw=none}]
\addplot graphics [xmin=-7,xmax=7,ymin=0,ymax=0.1013]{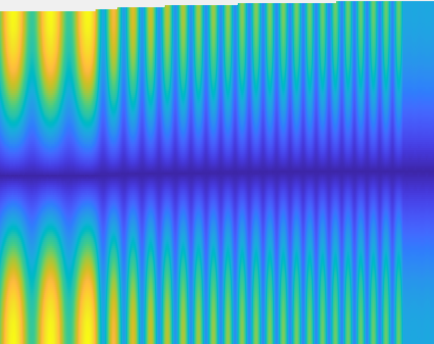};
\draw [red] (6,-0.01)--(6,0.12) node[left]{$s$};
\end{axis} 
\end{tikzpicture}
\end{center}
\caption{\label{direct_multi_freq} Representation of the wavefield amplitude $|u|$ solution of \eqref{eqmatlab} with a monochromatic source $f(x,y):=\delta_{x=s} \varphi_1(y)$ for different frequencies $k$. Data are generated using the finite element method described in Section \ref{num}. When $k=30.9$, the mode $n=1$ is evanescent, and the wavefield decreases very fast around the source. When $k=31.1$ and $k=31.4$, the mode $n=1$ is locally resonant and the wavefield propagates in the waveguide as an Airy function until it reaches the point $x^\star_k$. When $k=32$, the mode $n=2$ is propagative and the wavefield propagates in all the waveguide. In all these representations, the width $h$ is defined by the function \eqref{h4}. }
\end{figure}

\begin{figure}\begin{center}
\begin{tikzpicture}
\begin{axis}[width=4cm, height=1.5cm, axis on top, scale only axis, xmin=-6, xmax=6, ymin=-0.01, ymax=0.14, colorbar,point meta min=0,point meta max=0.3188, title={$k=31$}, colorbar style={ytick distance=0.3}, ytick distance=0.1,axis line style={draw=none},tick style={draw=none}]
\addplot graphics [xmin=-7,xmax=7,ymin=0,ymax=0.1025]{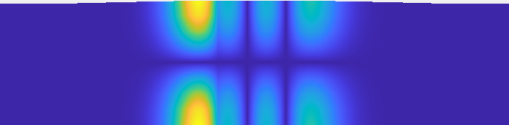};
\draw [red] (-1,-0.01)--(-1,0.125) node[right]{$s$};
\draw [red] (-2,-0.01)--(-2,0.125) node[left]{$x^\star_k$};
\draw [red] (2,-0.01)--(2,0.125) node[right]{$x^\star_k$};
\end{axis} 
\end{tikzpicture}\hspace{2mm}
\begin{tikzpicture}
\begin{axis}[width=4cm, height=1.5cm, axis on top, scale only axis, xmin=-6, xmax=6, ymin=-0.01, ymax=0.14, colorbar,point meta min=0,point meta max=0.1002, title={$k=31$}, colorbar style={ytick distance=0.1}, ytick distance=0.1,axis line style={draw=none},tick style={draw=none}]
\addplot graphics [xmin=-7,xmax=7,ymin=0,ymax=0.1025]{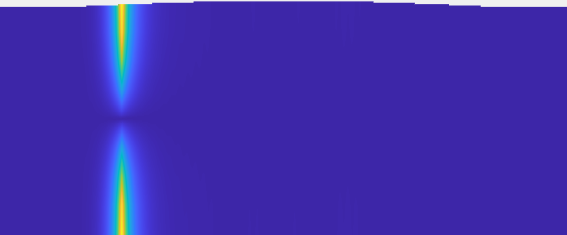};
\draw [red] (-4,-0.01)--(-4,0.12) node[left]{$s$};
\draw [red] (-2,-0.01)--(-2,0.125) node[right]{$x^\star_k$};
\draw [red] (2,-0.01)--(2,0.125) node[right]{$x^\star_k$};
\end{axis} 
\end{tikzpicture}

\begin{tikzpicture}
\begin{axis}[width=4cm, height=1.5cm, axis on top, scale only axis, xmin=-7, xmax=7, ymin=-0.01, ymax=0.14, colorbar,point meta min=0,point meta max= 0.2803, title={$k=32$}, colorbar style={ytick distance=0.2}, ytick distance=0.1,axis line style={draw=none},tick style={draw=none}]
\addplot graphics [xmin=-7,xmax=7,ymin=0,ymax=0.1]{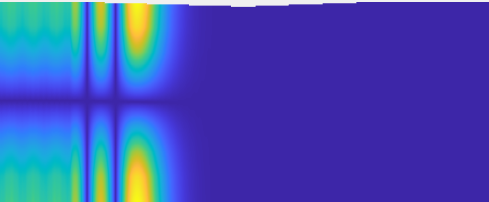};
\draw [red] (-5,-0.01)--(-5,0.12) node[left]{$s$};
\draw [red] (-2.5,-0.01)--(-2.5,0.12) node[right]{$x^\star_k$};
\draw [red] (2,-0.01)--(2,0.12) node[right]{$x^\star_k$};
\end{axis} 
\end{tikzpicture}\hspace{2mm}
\begin{tikzpicture}
\begin{axis}[width=4cm, height=1.5cm, axis on top, scale only axis, xmin=-7, xmax=7, ymin=-0.01, ymax=0.14, colorbar,point meta min=0,point meta max=0.3037, title={$k=32$}, colorbar style={ytick distance=0.3}, ytick distance=0.1,axis line style={draw=none},tick style={draw=none}]
\addplot graphics [xmin=-7,xmax=7,ymin=0,ymax=0.1]{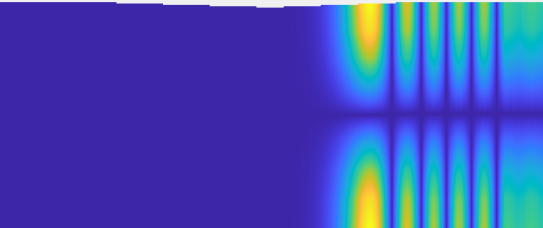};
\draw [red] (6,-0.01)--(6,0.12) node[left]{$s$};
\draw [red] (2,-0.01)--(2,0.12) node[left]{$x^\star_k$};
\draw [red] (-2.5,-0.01)--(-2.5,0.12) node[left]{$x^\star_k$};
\end{axis} 
\end{tikzpicture}
\end{center}
\caption{\label{direct_multi_freq2} Representation of the wavefield amplitude $|u|$ solution of \eqref{eqmatlab} with a monochromatic source $f(x,y):=\delta_{x=s} \varphi_1(y)$ for different positions $s$.  On top, $h$ is a dilation defined in \eqref{h5} and on bottom, $h$ is a shrinkage defined in \eqref{h6}. Depending on the location of the source, we observe different behaviors: while top left, bottom left and bottom right show locally resonant modes, the picture on top right show an evanescent mode. In order to generate a significant locally resonant mode, the source $s$ should be placed at a width where $h(s)>h(x^\star_k)$ and the mode only propagates until it reaches the point $x^\star_k$.}
\end{figure}

We notice that at locally resonant frequencies, the wavefield strongly depends on the position of $x^\star_k$, which justify the idea of using it to develop an inverse method to reconstruct the width $h$. An illustration is provided in Figure \ref{direct} with a representation of the wavefield $u$ and the one side boundary measurement $u(x,0)$ when $k$ is a locally resonant frequency.

\begin{figure}[h]
\begin{center}
\input{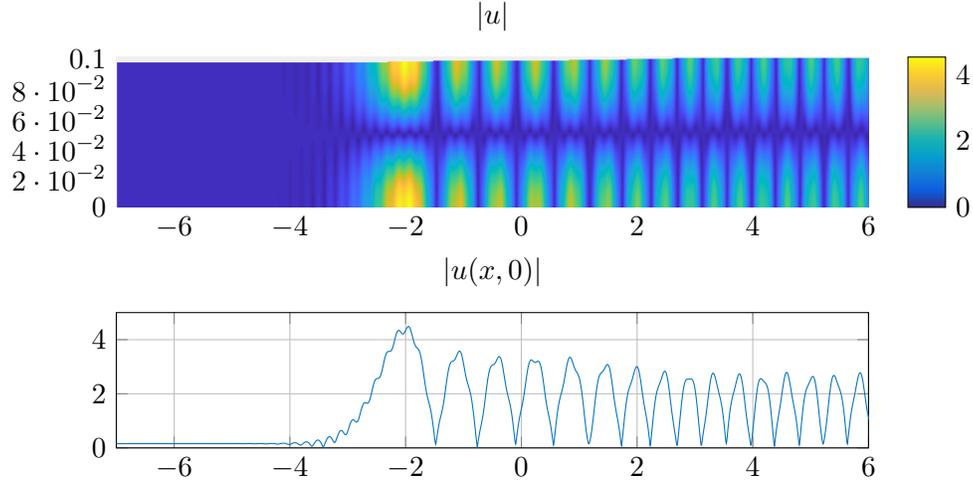}
\end{center}
\caption{\label{direct} Illustration of the amplitude of the wavefield measurements in a varying waveguide. The wavefield $u$ is solution of \eqref{eqmatlab} where the profile $h$ is given in \eqref{h4} and $k=31.5$ is a locally resonant frequency. Data are generated using the finite element method described in Section \ref{num}. Top: amplitude of $|u|$ in the whole waveguide $\Omega$ (non scaled). Bottom: amplitude of the measurements of $|u|$ on the surface $y=0$.}
\end{figure}

\section{Shape Inversion using a monochromatic source \label{section2} }

In this section, we describe the method to recover the width $h$ from one side boundary measurements of the wavefield at locally resonant frequencies. We focus here on the reconstruction of the shape of $h$ on $\supp(h')$, assuming the \emph{a priori} knowledge of the constants $h_{\min}$ and $h_{\max}$ and of an interval containing $\supp(h')$. We detail in Appendix~A how these constants can be estimated. To recover the shape function $h$ on $\supp(h')$, we start by studying the simpler case of a source term $f$ generating only a single locally resonant mode in the waveguide. Hence we assume that $f$ takes the form
\begin{equation}\label{fN}
\forall (x,y)\in \Omega \quad f(x,y)=f_N(x) \varphi_N(y), \qquad f_N\in \text{L}^2(\R), 
\end{equation}
where $f_N$ is compactly supported. We also assume the absence of boundary source term, meaning that $b=0$. This simplified situation is useful to understand the method of reconstruction. It will be generalized to any kind on internal and boundary sources in section \ref{section_general_source}.

In the simplest case of a single internal source \eqref{fN}, we know from the study of the forward problem in Theorem \ref{th1} that the measured data without noise $d^\text{ex}$ satisfies 
\begin{equation} \label{dex}
d^\text{ex}(x):=u(x,0)\approx u_{N}^{\app}(x) \varphi_N(0) \quad \text{where} \quad u_{N}^\app(x)=\int_\R G_{N}^\app(x,s)f_N(s)\dd s.
\end{equation}
Our reconstruction method is based on the recovery of the resonant point $x^\star_k$ for every locally resonant frequencies $k$. It can be seen in Theorem \ref{th1} that the approached Green function $G_{N}^\app $ depends on $x^\star_k$ through the function $\xi$. However, this dependence is intricate and hardly usable to find a direct link between $d^\text{ex}$ and $x^\star_k$. Thus, we need to find a simpler approximation of the measurements. In a first part, we provide an approximation of the measured data $d^\text{ex}$ using a three parameters model function
\begin{equation}\label{dapp}
d^\text{app}_{z,\alpha,x_k^\star} : x\mapsto z\Ai(\alpha(x_k^\star-x)),
\end{equation}
where $z\in \C^*$ is a complex amplitude, $\alpha>0$ is a scaling parameter and $x_k^\star$ is the resonant point playing the role of a longitudinal shift. In a second part, we first control the approximation error between this function and the exact measurements and then, we develop a stable way to reconstruct $x_k^\star$ from this approximated data.

\subsection{Wavefield approximation and measurements approximation}

In order to find a reconstruction of $x^\star_k$, we need to find an exploitable link between $d^\ex:=u(x,0)$ and $x^\star_k$. To do so, we make a Taylor expansion of $G_{N}^{\app}$ around the point $x^\star_k$. For every frequency $k>0$ and $R>0$, we denote
\begin{equation}
\Omega_R(x^\star_k):=\{(x,y)\in \Omega \ |\ |x-x^\star_k| < R\}, \qquad \Gamma_R(x^\star_k):=(x^\star_k-R,x^\star_k+R),
\end{equation}
and we consider the Taylor expansion on the interval $\Gamma_R$. Moreover, we assume that the source is located at the right of the interval $\Gamma_R$, and we thus we define 
\begin{equation}
\Omega_R^+(x^\star_k):=\{(x,y)\in \Omega \, |\, x-x^\star_k> R\},\qquad \Gamma_R^+(x^\star_k)=(x^\star_k+R,+\infty).
\end{equation}
Both these sets are represented in Figure \ref{lambda_R}.

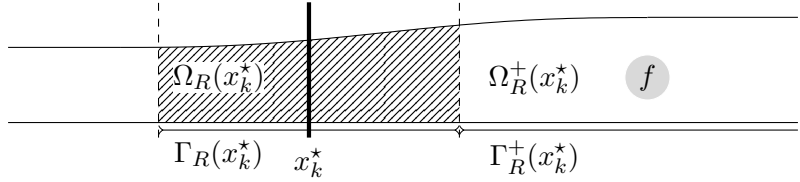
\begin{figure}[h]
\begin{center}
\begin{tikzpicture} 
\draw (-1.5,0) -- (9,0);
\draw (-1.5,1) -- (0,1); 
\draw (7,1.4) -- (9,1.4); 
\draw [domain=0:7, samples=100] plot (\x,{1+12/7/7/7/7/7*\x*\x*\x*(\x*\x/5-7*\x/2+49/3+0.02)});  
\draw [ultra thick] (2.5,-0.2)--(2.5,1.6); 
\draw (2.5,-0.2) node[below]{$x^\star_k$};
\draw [>-<] (0.5,-0.1)--(4.5,-0.1); 
\draw (1.3,-0.1) node[below]{$\Gamma_R(x^\star_k)$}; 
\draw [dashed] (0.5,-0.2)--(0.5,1.6); 
\draw [dashed] (4.5,-0.2)--(4.5,1.6);
\fill[color=gray!20, pattern=north east lines] (0.5,0)--(0.5,1.0013) --  plot [domain=0.5:4.5] (\x,{1+12/7/7/7/7/7*\x*\x*\x*(\x*\x/5-7*\x/2+49/3+0.02)}) --(4.5,1.3028) -- (4.5,0)--(0.5,0);
\fill[white] (1.9,0.4)--(0.7,0.4)--(0.7,0.8)--(1.9,0.8)--(1.9,0.4); 
\draw (1.3,0.6) node{$\Omega_R(x^\star_k)$};
\draw [white,fill=gray!30] (7,0.6) circle (0.3);
\draw (7,0.6) node{$f$};
\draw [>-] (4.5,-0.1)--(9,-0.1); 
\draw (5.5,-0.1) node[below]{$\Gamma_R^+(x^\star_k)$};
\draw (5.5,0.6) node{$\Omega_R^+(x^\star_k)$}; 
\end{tikzpicture}
\end{center}
\caption{\label{lambda_R} Representation of $\Gamma_R(x^\star_k)$, $\Omega_R(x^\star_k)$, $\Gamma_R^+(x^\star_k)$ and $\Omega_R^+(x^\star_k)$. The source $f$ is assumed to be compactly supported in $\Gamma_R^+(x^\star_k)$.}
\end{figure}

The following Proposition shows that $u_{N}^\app$ can be approached by a three parameters function of type $d^\text{app}_{z,\alpha,x_k^\star}$ if the shape function $h$ is steep enough at $x_k^\star$.

\begin{prop} \label{taylor} Assume that $\Omega$ satisfies assumptions \ref{def:slow}, let $R>0$ and $k>0$ be a locally resonant frequency associated to the mode $N\in \N$ and the locally resonant point $x_k^\star$ that satisfies $h'(x^\star_k)\geq \theta\eta$ for some $\theta>0$. There exist $z\in \C^*$, $\alpha>0$ and a constant $C_2>0$ depending only on $h_{\min}$, $h_{\max}$, $N$ and $\theta$ such that
\begin{equation}\label{taylorapp}
\Vert u^\text{app}_{N}-d^\text{app}_{z,\alpha,x_k^\star}\Vert_{\text{L}^2(\Gamma_{R}(x^\star_k))}\leq C_2\left(R^{3/2}\eta^{5/6}+R^{5/2}\eta^{7/6}\right),
\end{equation}
where $d^\text{app}_{z,\alpha,x_k^\star}$ is defined in \eqref{dapp}. 
\end{prop}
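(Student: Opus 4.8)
The plan is to reduce $u^{\app}_{N}$ to a single Airy profile using the localization of the source, and then to run a turning–point (Langer–type) expansion of the phase function $\xi$ at $x^\star_k$. Since $f_N$ is supported in $\Gamma_R^+(x^\star_k)$, for every $x\in\Gamma_R(x^\star_k)$ we have $x<s$ on $\supp f_N$, so in \eqref{dex}–\eqref{greenfunction} only the branch $x<s$ of $G^{\app}_N$ contributes. On that branch the $x$– and $s$–dependences separate, giving
\[
u^{\app}_N(x)=A\,g(x)\,\Ai(\xi(x)),\qquad g(x):=\frac{\xi(x)^{1/4}}{\sqrt{k_N(x)}},\qquad A:=\pi\int_\R\frac{\xi(s)^{1/4}}{\sqrt{k_N(s)}}\,(i\Ai+\Bi)(\xi(s))\,f_N(s)\,\dd s ,
\]
with $A\in\C$ independent of $x$. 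It then suffices to approximate the one–variable profile $g(x)\Ai(\xi(x))$ on $\Gamma_R(x^\star_k)$ by a multiple of $\Ai(\alpha(x^\star_k-x))$, the source amplitude $A$ being absorbed into $z$ (so that $z\in\C^*$ as soon as $A\neq0$, the source actually exciting the mode).

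The core of the argument is the local behaviour of $\xi$ at the turning point $x^\star_k$, where $k_N$ vanishes. The function $k_N^2(x)=k^2-N^2\pi^2/h(x)^2$ is smooth and vanishes at $x^\star_k$, so a Taylor expansion gives
\[
k_N^2(x)=\beta\,(x-x^\star_k)+O\!\big(\eta^2(x-x^\star_k)^2\big),\qquad \beta:=2k^2\,\frac{h'(x^\star_k)}{h(x^\star_k)}>0 ,
\]
where the quadratic remainder is controlled by $\|(k_N^2)''\|_{\text{L}^\infty}\lesssim\|h''\|_{\text{L}^\infty}+\|h'\|_{\text{L}^\infty}^2\lesssim\eta^2$. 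The hypothesis $h'(x^\star_k)\geq\theta\eta$, together with $h'(x^\star_k)<\eta$, $h_{\min}\le h\le h_{\max}$ and $k=N\pi/h(x^\star_k)$, pins down the scale $\beta\sim\eta$ and hence defines
\[
\alpha:=\beta^{1/3}=\Big(2k^2\,\tfrac{h'(x^\star_k)}{h(x^\star_k)}\Big)^{1/3}\sim\eta^{1/3}.
\]
This nondegeneracy is exactly what prevents $\alpha$ from collapsing and is where $\theta$ enters the constant $C_2$.

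Inserting this into \eqref{eq:xi} is the delicate point and the main obstacle: $k_N\sim\sqrt\beta\,|x-x^\star_k|^{1/2}$ is only Hölder at $x^\star_k$, so $k_N$ cannot be expanded directly. Instead I would factor $k_N(t)=\sqrt\beta\,|t-x^\star_k|^{1/2}\big(1+O(\eta|t-x^\star_k|)\big)$ out of the \emph{smooth} expansion of $k_N^2$, integrate, and raise to the power $2/3$, matching the branches of the fractional powers on the two sides of $x^\star_k$ (as in \cite{bonnetier2}) so that $\xi$ is real and $g$ continuous across the turning point. This produces, uniformly on $\Gamma_R(x^\star_k)$,
\[
\xi(x)=\alpha\,(x^\star_k-x)+O\!\big(\alpha\,\eta\,(x-x^\star_k)^2\big),\qquad g(x)=g(x^\star_k)\big(1+O(\eta|x-x^\star_k|)\big),\qquad |g(x^\star_k)|=\alpha^{-1/2} ,
\]
where the control on $g$ uses the Langer identity $(\xi')^2\xi=-k_N^2$, i.e. $|g|=|\xi'|^{-1/2}$, which is branch–free and reduces the estimate on $g$ to the slow variation of $\xi'$. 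Keeping the relative error $O(\eta|x-x^\star_k|)$ uniform through the change of branch (evanescent side $x<x^\star_k$ versus propagative side $x>x^\star_k$) is the technical heart of the proof.

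Finally I would set $z:=A\,g(x^\star_k)$ (so $|z|\sim|A|\,\eta^{-1/6}$) and split
\[
u^{\app}_N-d^{\app}_{z,\alpha,x^\star_k}=A\big(g(x)-g(x^\star_k)\big)\Ai(\xi(x))+A\,g(x^\star_k)\big(\Ai(\xi(x))-\Ai(\alpha(x^\star_k-x))\big).
\]
On $\Gamma_R(x^\star_k)$ the arguments of $\Ai$ stay in a set of size $O(\eta^{1/3}R)$, so $\Ai$ and $\Ai'$ are bounded there. The first term is pointwise $\lesssim|A|\,|g(x)-g(x^\star_k)|\lesssim|A|\,\eta^{5/6}|x-x^\star_k|$, whose $\text{L}^2(\Gamma_R(x^\star_k))$–norm is $\lesssim|A|\,\eta^{5/6}\,\|x-x^\star_k\|_{\text{L}^2(\Gamma_R)}\sim|A|\,\eta^{5/6}R^{3/2}$; the second is $\lesssim|A|\,|g(x^\star_k)|\,\|\Ai'\|_\infty\,|\xi(x)-\alpha(x^\star_k-x)|\lesssim|A|\,\eta^{-1/6}\eta^{4/3}(x-x^\star_k)^2$, whose norm is $\lesssim|A|\,\eta^{7/6}R^{5/2}$. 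Adding the two contributions gives \eqref{taylorapp}, with $|A|$ (and the fixed source) absorbed into $C_2$.
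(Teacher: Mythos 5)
Your proposal follows essentially the same route as the paper's proof: you reduce $u^{\app}_{N}$ to a single Airy profile using the support of $f_N$ and the separated form of $G^{\app}_{N}$ on the branch $x<s$, expand $k_N^2$, $\xi$ and the prefactor at the turning point with $\alpha=\beta^{1/3}$ and $z$ equal to the source amplitude times the prefactor value at $x^\star_k$, and split the misfit into a prefactor error of order $\eta^{5/6}|x-x^\star_k|$ and an Airy-argument error of order $\eta^{7/6}(x-x^\star_k)^2$, which after integration over $\Gamma_R(x^\star_k)$ and the use of $h'(x^\star_k)\geq\theta\eta$ gives exactly the stated bound. The only cosmetic difference is your use of the Langer identity $(\xi')^2\xi=-k_N^2$ to treat both sides of the turning point at once, where the paper performs the one-sided expansion for $x>x^\star_k$ and notes that the case $x<x^\star_k$ is identical.
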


\begin{proof}
Using the information about the support of the source term, we know that 
\begin{equation}\nonumber
u_{N}^{\app}(x)=\int_{x^\star_k+R}^{+\infty}G_{N}^{\app}(x,s)f_N(s)\dd s.
\end{equation}
Using the definition of $G_{N}^\app$ given in \eqref{greenfunction}, there exists a function $q_k$ such that for every $x\in \Gamma_R(x^\star_k)$,
\begin{equation}\nonumber
G_{N}^{\app}(x,s)=q_k(s)\frac{(-\xi(x))^{1/4}}{\sqrt{k_n(x)}}\Ai(\xi(x)).
\end{equation}
It follows that 
\begin{equation}\nonumber 
u_{N}^{\app}(x)=\frac{(-\xi(x))^{1/4}}{\sqrt{k_n(x)}}\Ai(\xi(x))\int_{x^\star_k+R}^{+\infty}q_k(s)f_N(s)\dd s.
\end{equation}
In the following, we denote $\cO(\cdot)$ bounds depending only on $h_{\max}$, $h_{\min}$ and $N$. We see that 
\begin{equation}\nonumber
k_N(x)^2=\frac{2N^2\pi^2h'(x^\star_k)}{h(x^\star_k)^3}(x-x^\star_k) +\cO(\eta^2(x-x^\star_k)^2).
\end{equation}
From now on, we assume that $x>x^\star_k$, which leads to 
\begin{equation}\label{compkn}
k_N(x)=\sqrt{\frac{2N^2\pi^2h'(x^\star_k)}{h(x^\star_k)^3}}(x-x^\star_k)^{1/2} +\cO\left(\frac{\eta^2(x-x^\star_k)^{3/2}}{\sqrt{h'(x^\star_k)}}\right),
\end{equation}
and so
\begin{equation}\nonumber
\xi(x)=\left(\frac{2N^2\pi^2h'(x^\star_k)}{h(x^\star_k)^3}\right)^{1/3}(x^\star_k-x)+\cO\left(\frac{\eta^2(x-x^\star_k)^{2}}{h'(x^\star_k)^{2/3}}\right).
\end{equation}
Then, 
\begin{equation}\nonumber
\Ai(\xi(x))=\Ai\left(\left(\frac{2N^2\pi^2h'(x^\star_k)}{h(x^\star_k)^3}\right)^{1/3}(x^\star_k-x)\right)+\cO\left(\frac{\eta^2(x-x^\star_k)^{2}}{h'(x^\star_k)^{2/3}}\right),
\end{equation}
and 
\begin{equation}\nonumber
\frac{(-\xi(x))^{1/4}}{\sqrt{k_N(x)}}=\left(\frac{2N^2\pi^2h'(x^\star_k)}{h(x^\star_k)^3}\right)^{-1/6}+\cO\left(\frac{\eta^2(x-x^\star_k)}{h'(x^\star_k)^{7/6}}\right).
\end{equation}
We set
\begin{equation}\label{exppk}
z=\left(\frac{2N^2\pi^2h'(x^\star_k)}{h(x^\star_k)^3}\right)^{-1/6}\int_{x^\star_k+R}^{+\infty}q_k(s)f_N(s)\dd s, \qquad \alpha=\left(\frac{2N^2\pi^2h'(x^\star_k)}{h(x^\star_k)^3}\right)^{1/3}, 
\end{equation}
and it follows that 
\begin{equation}\nonumber
u_{N}^{\app}(x)=z\Ai(\alpha(x^\star_k-x))+\cO\left(\frac{\eta^2(x-x^\star_k)}{h'(x^\star_k)^{7/6}}\right)+\cO\left(\frac{\eta^2(x-x^\star_k)^2}{h'(x^\star_k)^{5/6}}\right).
\end{equation}
The exact same study can be done in the case $x<x^\star_k$. Since $|x-x^\star_k|\leq R$, we conclude that 
\begin{equation}\nonumber
\Vert u_{N}^{\app}-d^\text{app}_{z,\alpha,x_k^\star}\Vert_{\text{L}^2(\Gamma_{R}(x^\star_k))}=\eta^2\cO\left(R^{3/2}(h'(x^\star_k))^{-7/6}+R^{5/2}(h'(x^\star_k))^{-5/6}\right).
\end{equation}
To conclude, we use the fact that $h'(x^\star_k)\geq \theta\eta$. 
\end{proof}

\begin{coro}\label{corobis}
Assume that $\Omega$ satisfies assumptions \ref{def:slow}, let $R>0$ and $k>0$ be a locally resonant frequency associated to the mode $N\in \N$ and the locally resonant point $x_k^\star$ that satisfies $h'(x^\star_k)\geq \theta\eta$ for some $\theta>0$. There exist $\eta>0$ such that if $\eta\leq \eta_0$ then there exist $z\in \C^*$, $\alpha>0$ and a constant $C_3>0$ depending only on $h_{\min}$, $h_{\max}$, $N$ and $\theta$ such that
\begin{equation}
\Vert d^\text{ex} - d^\text{app}_{z,\alpha,x_k^\star} \Vert_{\text{L}^2(\Gamma_R(x^\star_k))} \leq C_3\left( \delta(k)^{-8}R^2\eta+R^{3/2}\eta^{5/6}+R^{5/2}\eta^{7/6}\right).
\end{equation}
\end{coro}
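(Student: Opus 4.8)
The plan is to deduce the Corollary from a single triangle inequality, using Theorem~\ref{th1} to control the modelling error between the measured trace and the asymptotic wavefield $u^\text{app}$, and Proposition~\ref{taylor} to control the error between $u^\text{app}$ and the three-parameter Airy model $d^\text{app}_{z,\alpha,x_k^\star}$. Setting $I:=\Gamma_R(x^\star_k)$ so that $\Omega_I=\Omega_R(x^\star_k)$, I would first write
\begin{equation}\nonumber
\Vert d^\text{ex}-d^\text{app}_{z,\alpha,x_k^\star}\Vert_{\text{L}^2(\Gamma_R(x^\star_k))}\leq \Vert d^\text{ex}-u^\text{app}(\cdot,0)\Vert_{\text{L}^2(\Gamma_R(x^\star_k))}+\Vert u^\text{app}(\cdot,0)-d^\text{app}_{z,\alpha,x_k^\star}\Vert_{\text{L}^2(\Gamma_R(x^\star_k))}.
\end{equation}

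For the first term, observe that $d^\text{ex}(x)-u^\text{app}(x,0)=(u-u^\text{app})(x,0)$ is the trace of $u-u^\text{app}$ on the flat bottom boundary $\{y=0\}$. By continuity of the trace operator $\text{H}^1(\Omega_I)\to\text{L}^2(I)$ and by Theorem~\ref{th1} applied on the interval $I=\Gamma_R(x^\star_k)$ of length $2R$ (with $b=0$ and $\Vert f\Vert_{\text{L}^2(\Omega)}=\Vert f_N\Vert_{\text{L}^2(\R)}$, since $\varphi_N$ is $\text{L}^2$-normalised in $y$), this term is bounded by $C\eta R^2\delta(k)^{-8}$, which is exactly the first term of the claimed estimate. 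This forces $\eta_0$ to be at most the threshold of Theorem~\ref{th1}, so that $u$ exists, is unique, and is genuinely approximated by $u^\text{app}$.

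For the second term, note that in the single-mode situation \eqref{fN} with $b=0$ only the mode $n=N$ survives in \eqref{greentot}, hence $u^\text{app}(x,0)=\varphi_N(x,0)\,u_N^\text{app}(x)$ with $u_N^\text{app}$ as in \eqref{dex}. Since $\varphi_N(x,0)=\sqrt{2}/\sqrt{h(x)}$ and $|h(x)-h(x^\star_k)|\leq\eta|x-x^\star_k|\leq\eta R$ on $\Gamma_R(x^\star_k)$, the weight $\varphi_N(\cdot,0)$ lies within $\cO(\eta R)$ of the constant $\varphi_N(x^\star_k,0)$. Rescaling the amplitude delivered by Proposition~\ref{taylor} to $z\varphi_N(x^\star_k,0)$ and using that $\Ai$ is bounded, the mismatch between $\varphi_N(\cdot,0)d^\text{app}_{z,\alpha,x_k^\star}$ and $\varphi_N(x^\star_k,0)d^\text{app}_{z,\alpha,x_k^\star}$ contributes at most $\cO(\eta R\cdot R^{1/2})=\cO(\eta R^{3/2})$ in $\text{L}^2(\Gamma_R(x^\star_k))$, while multiplying the Proposition~\ref{taylor} error by the bounded factor $\Vert\varphi_N(\cdot,0)\Vert_{\text{L}^\infty}$ merely changes constants. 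Invoking Proposition~\ref{taylor} then produces the remaining terms $R^{3/2}\eta^{5/6}+R^{5/2}\eta^{7/6}$.

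It remains to assemble the pieces: choosing $\eta_0\leq 1$ guarantees $\eta R^{3/2}\leq\eta^{5/6}R^{3/2}$, so the extra modulation error is absorbed into the Proposition~\ref{taylor} contribution, and summing the three estimates yields the bound with a constant $C_3$ depending only on $h_{\min}$, $h_{\max}$, $N$ and $\theta$ (the source-dependent factor $\int q_k f_N$ being treated as an order-one quantity, exactly as in Proposition~\ref{taylor}). The only genuinely delicate points are the trace estimate, which transfers the interior $\text{H}^1$ control of Theorem~\ref{th1} to the boundary $\text{L}^2$ norm that is actually measured, and the verification that reabsorbing the slowly varying weight $\varphi_N(\cdot,0)$ into the amplitude produces an error dominated by the terms already present; the rest is routine triangle-inequality bookkeeping.
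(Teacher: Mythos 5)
Your proposal is correct and follows essentially the same route as the paper's proof: a triangle inequality splitting $d^\text{ex}-d^\text{app}_{z,\alpha,x_k^\star}$ through $u^{\text{app}}(\cdot,0)$, with the trace estimate $\Vert u(\cdot,0)-u^{\app}(\cdot,0)\Vert_{\text{L}^2(\Gamma_R(x^\star_k))}\leq \gamma\Vert u-u^{\app}\Vert_{\text{H}^1(\Omega_R(x^\star_k))}$ combined with Theorem \ref{th1} for the first piece and Proposition \ref{taylor} for the second. Your additional care in reabsorbing the slowly varying weight $\varphi_N(\cdot,0)$ into the amplitude $z$ (with its $\cO(\eta R^{3/2})$ error absorbed by the $R^{3/2}\eta^{5/6}$ term) is a point the paper's one-line proof glosses over, but it does not change the argument.
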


\begin{proof}
We apply Proposition \ref{taylor}, Theorem \ref{th1} and trace results which prove that there exists a constant $\gamma>0$ depending only on $R$, $h_{\min}$ and $h_{\max}$ such that 
\begin{equation}\nonumber 
\Vert u(\cdot,0)-u^{\app}(\cdot,0)\Vert_{\text{L}^2(\Gamma_{R}(x^\star_k))}\leq \Vert u-u^{\app}\Vert_{\text{H}^{1/2}(\Gamma_{R}(x^\star_k))}\leq \gamma\Vert u-u^{\app}\Vert_{\text{H}^{1}(\Omega_{R}(x^\star_k))}.
\end{equation}
\end{proof}

This result indicates a strategy to determine $x_k^\star$ from the exact data $d^\text{ex}$. Assuming that $\eta$ is small enough to see $d^\text{app}_{z,\alpha,x_k^\star}$ as a good approximation of $d^\text{ex}$, one may fit the three parameters $(\alpha,z,x_k^\star)$ that minimize the misfit $d^\text{ex} - d^\text{app}_{z,\alpha,x_k^\star}$ for some norm. We also see from this result that the error in the Taylor expansion strongly depends on the values of $\theta$, $R$ and $\delta(k)$. Since we plan on using different locally resonant frequencies, we need to get a uniform control over $k$ on the Taylor expansion. Moreover, we need to control the length $R$ of the measurement interval. If $R$ is too large, the quality of the Taylor expansion diminishes and if $R$ is too small, the data on the interval $\Gamma_{R}(x^\star_k)$ may not contain enough information to fit the three parameters $(z,\alpha,x_k^\star)$ with stability. 

To prevent this, $R$ needs to be scaled depending on the parameter $\alpha$. Indeed, the change of variable $x\mapsto\alpha(x^\star_k-x)$ must cover a large enough interval to perform the desired fitting. Using the expression of $\alpha$ given in \eqref{exppk}, we say that $\alpha(x^\star_k-x)$ covers an interval of fixed radius $\sigma>0$ if 
\begin{equation}\nonumber
R\geq \frac\sigma\alpha\geq\frac{\sigma h_{\max}}{2\beta^{1/3}N^2\pi^2}\eta^{-1/3}. 
\end{equation}
This means that $R$ needs to be scaled as $\eta^{-1/3}$ and we assume now that
\begin{equation}
R:=r\eta^{-1/3}. 
\end{equation}
where $r>0$ is a constant. We can now give a uniform approximation result between $d^\text{ex}$ and $d^\text{app}_{z,\alpha,x_k^\star}$ on the interval $\Gamma_{R(x^\star_k)}$.

\begin{theorem}\label{th2} Assume the same hypotheses than in Theorem \ref{th1}, and fix $N\in \N^*$ and $\theta>0$. There exists $\eta_0>0$ such for any $\eta <\eta_0$ and for any $k>0$ associated to the mode $N$ and the locally resonant point $x_k^\star$ satisfying $h'(x^\star_k)\geq \theta\eta$, there exist $\alpha>0$ and $z\in \C$ such that the following approximation holds on the interval $\Gamma_{R}(x^\star_k)$:
\begin{equation}\nonumber 
d^\text{ex} = d^\text{app}_{z,\alpha,x_k^\star} + \cO(\eta^{1/3}).
\end{equation}
More precisely, there exists a constant $C_4>0$ depending only on $h_{\min}$, $h_{\max}$, $\theta$ and $r$ such that
\begin{equation}
\norm{d^\text{ex} - d^\text{app}_{z,\alpha,x_k^\star}}{\text{L}^2(\Gamma_{R}(x^\star_k))}\leq C_4\eta^{1/3}. 
\end{equation}
\end{theorem}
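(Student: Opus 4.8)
The plan is to obtain Theorem \ref{th2} as a direct consequence of Corollary \ref{corobis} by inserting the scaling $R=r\eta^{-1/3}$ and checking that each of the three error terms collapses to order $\eta^{1/3}$. First I would substitute $R=r\eta^{-1/3}$ into the bound of Corollary \ref{corobis}. The second term becomes $r^{3/2}\eta^{-1/2}\eta^{5/6}=r^{3/2}\eta^{1/3}$ and the third becomes $r^{5/2}\eta^{-5/6}\eta^{7/6}=r^{5/2}\eta^{1/3}$, so both already have the claimed order with constants depending only on $r$. The first term becomes $r^2\delta(k)^{-8}\eta^{1/3}$, which is of order $\eta^{1/3}$ provided $\delta(k)$ is bounded below by a constant independent of $k$ and $\eta$. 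Thus the whole statement reduces to establishing a single uniform lower bound $\delta(k)\ge\delta_0>0$ over all admissible frequencies.

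Securing that lower bound is the heart of the argument and the main obstacle, because $\delta(k)$ vanishes whenever $k$ approaches a forbidden value $n\pi/h_{\min}$ or $n\pi/h_{\max}$; in particular the $n=N$ contributions to \eqref{delta} vanish exactly at the two endpoints $N\pi/h_{\max}$ and $N\pi/h_{\min}$ of the mode-$N$ resonance window, and a priori an admissible $k$ could lie arbitrarily close to them. The key is that the hypothesis $h'(x^\star_k)\ge\theta\eta$, combined with $\Vert h''\Vert_{\text{L}^\infty}<\eta^2$ from Assumption \ref{def:slow}, forces $h(x^\star_k)$ to stay a fixed distance from both $h_{\min}$ and $h_{\max}$. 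Indeed, for $|t-x^\star_k|\le \theta/(2\eta)$ one has $|h'(t)-h'(x^\star_k)|\le \Vert h''\Vert_{\text{L}^\infty}|t-x^\star_k|<\theta\eta/2$, so $h'(t)\ge\theta\eta/2$ on an interval of length $\theta/(2\eta)$ on either side of $x^\star_k$. Integrating $h'$ over the interval to the left of $x^\star_k$ gives $h(x^\star_k)-h_{\min}\ge (\theta\eta/2)(\theta/(2\eta))=\theta^2/4$, and integrating to the right gives $h_{\max}-h(x^\star_k)\ge\theta^2/4$.

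With $h(x^\star_k)\in[h_{\min}+\theta^2/4,\ h_{\max}-\theta^2/4]$ and $k=N\pi/h(x^\star_k)$, the admissible frequency is confined to the fixed compact interval $K_{N,\theta}:=[\,N\pi/(h_{\max}-\theta^2/4),\ N\pi/(h_{\min}+\theta^2/4)\,]$, which is strictly interior to $(N\pi/h_{\max},\,N\pi/h_{\min})$ and hence separated from the two nearest forbidden values by a gap depending only on $h_{\min},h_{\max},\theta,N$. The remaining forbidden values $n\pi/h_{\min},n\pi/h_{\max}$ with $n\ne N$ are kept off the admissible range by the standing hypothesis that $N$ is the unique locally resonant mode, which forces $(N-1)\pi/h_{\min}<k<(N+1)\pi/h_{\max}$; in the regime where the resonance windows of consecutive modes are disjoint this places $K_{N,\theta}$ entirely inside a neighbourhood of forbidden-frequency-free points. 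Since $\delta$ is continuous and strictly positive on that compact set, it attains a positive minimum $\delta_0=\delta_0(h_{\min},h_{\max},\theta,N)>0$, uniformly in $\eta$.

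Combining the three estimates, the first term is at most $r^2\delta_0^{-8}\eta^{1/3}$ and the other two are $\cO(\eta^{1/3})$, so the total is bounded by $C_4\eta^{1/3}$ with $C_4$ depending only on $h_{\min},h_{\max},\theta$ and $r$ for fixed $N$, and the qualitative statement $d^\text{ex}=d^\text{app}_{z,\alpha,x^\star_k}+\cO(\eta^{1/3})$ follows at once. It then remains only to choose $\eta_0$ small enough that the estimates of Corollary \ref{corobis} (hence of Theorem \ref{th1} and Proposition \ref{taylor}) apply and that $K_{N,\theta}$ is nonempty. I expect the only delicate point to be the uniform exclusion of the non-$N$ forbidden frequencies from $K_{N,\theta}$: the quantitative separation from $N\pi/h_{\min}$ and $N\pi/h_{\max}$ is clean from the $h(x^\star_k)$ estimate, whereas the separation from $(N\pm1)\pi/h_{\min/\max}$ relies on the uniqueness hypothesis and should be checked not to degenerate as $\eta\to0$.
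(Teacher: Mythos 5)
Your proposal is correct and follows essentially the same route as the paper's own proof: substitute $R=r\eta^{-1/3}$ into Corollary~\ref{corobis} (so the last two terms become $r^{3/2}\eta^{1/3}$ and $r^{5/2}\eta^{1/3}$) and derive a uniform lower bound on $\delta(k)$ from $h'(x^\star_k)\geq\theta\eta$ together with $\Vert h''\Vert_{\text{L}^\infty(\R)}<\eta^2$ — the paper shows $h_{\max}-h(x^\star_k)\geq\theta^2/2$ and concludes $\delta(k)\geq N\pi\theta/(h_{\min}\sqrt{h_{\max}})$, which is exactly the role played by your two-sided bound $\theta^2/4$ and compactness argument. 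The delicate point you flag at the end, namely excluding the forbidden frequencies of modes $n\neq N$ uniformly in $\eta$, is precisely what the paper glosses over with its ``without loss of generality, $\delta(k)$ is reached by the mode $N$'', so your treatment is no less complete than the published one.
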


\begin{proof}

To prove this result, we need to provide a uniform lower bound for $\delta(k)$ under the hypothesis $h'(x^\star_k)\geq \theta\eta$. From the definition of $\delta(k)$, we assume without loss of generality that $\delta(k)$ is reached by the mode number $N\in \N$ and that $\delta(k)^2=\left|k^2-N^2\pi^2/h_{\max}^2\right|$. Hence, as $k$ is locally resonant, we know that $k=N\pi/h(x_k^\star)$ and that
\begin{equation}\nonumber
\delta(k)^2=N^2\pi^2\left|\frac{1}{h(x_k^\star)^2}-\frac{1}{h_{\max}^2}\right|\leq \frac{2N^2\pi^2}{   h_{\min}^2h_{\max}}(h_{\max}-h(x_k^\star)). \end{equation}
Let us call $t\geq 0$ such that $h(x_k^\star+t)=h_{\max}$, we necessarily have $h'(x_k^\star+t) = 0$. Using the hypothesis $h''\geq -\eta^2$, we know that $h'(x_k^\star+s)\geq \eta\theta-\eta^2s$ for all $s\in (0,t)$. This implies that $t\geq \theta/\eta$. Then,
\begin{equation}\nonumber
h_{\max}-h(x_k^\star) = \int_0^th'(x_k^\star+s)\dd s\geq \eta\theta t - \eta^2\frac {t^2}2\geq \frac{\theta^2}2.
\end{equation}
Then, 
\begin{equation}\nonumber
\delta(k)\geq \frac{N\pi}{h_{\min}\sqrt{h_{\max}}}\theta. 
\end{equation}
The proof of Theorem \ref{th2} is now straightforward, we simply replace $R$ by $r\eta^{-1/3}$ in Corollary \ref{corobis} and use the lower bound on $\delta(k)$. 
\end{proof}

\begin{rem}
We underline the fact that the constant in this estimation depends on the value of $\theta$, and tends toward infinity if $\theta$ tends to zero. This result is illustrated in Figure \ref{etanu} where we clearly notice that the error between $d^\ex$ and $d^\app_{z,\alpha,x^\star_k}$ deteriorates when $\theta$ become too small. 
\end{rem}

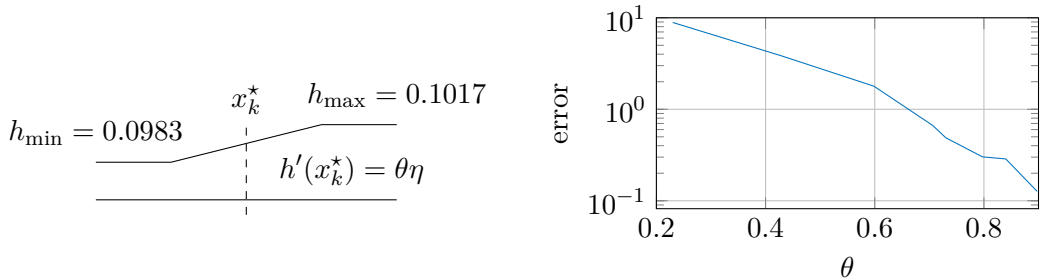
\begin{figure}[h]
\begin{center}
\raisebox{0.5\height}{\begin{tikzpicture}
\draw (-2,0.5)--(-1,0.5)--(1,1)--(2,1); 
\draw (-2,0)--(2,0); 
\draw [dashed] (0,-0.2)--(0,1) node[above]{$x^\star_k$}; 
\draw (0.3,0.4) node[right]{$h'(x^\star_k)=\theta \eta$}; 
\draw (-2,0.6) node[above]{$h_{\min}=0.0983$}; 
\draw (2,1.1) node[above]{$h_{\max}=0.1017$};
\end{tikzpicture}} \hspace{4mm}
%
%
\definecolor{mycolor1}{rgb}{0.00000,0.44700,0.74100}%
\definecolor{mycolor2}{rgb}{0.85000,0.32500,0.09800}%
\begin{tikzpicture}

\begin{semilogyaxis}[%
width=2in,
height=1in,
at={(0in,0in)},
scale only axis,
xmin=0.2,
legend pos=south east,
xmax=0.9,
ymin=0,
xlabel={$\theta$},
ylabel={error},
ymax=10,
grid=major, 
axis background/.style={fill=white},
]
\addplot [color=mycolor1]
  table[row sep=crcr]{%
0.229893184657038	8.88892962598819\\
0.429325049077456	3.83414957190801\\
0.598270761418748	1.79049908594031\\
0.706212630143859	0.665514944667986\\
0.730105393136942	0.488550243946296\\
0.79729674891956	0.3020400865549\\
0.840071041744669	0.286477697737975\\
0.897477962259253	0.126797829758532\\
};
\end{semilogyaxis}

\end{tikzpicture}\end{center}
\caption{\label{etanu} Error of approximation $\Vert d^\ex-d^\app_{z,\alpha,x^\star_k} \Vert_{\text{L}^2(\Gamma_R(x^\star_k))}$ for different values of $\theta$ with a fixed value of $\eta$. The width $h$ is represented on the left of the picture and exact data $d^\ex$ are generated as explained in section \ref{num} with a locally resonant mode $N=1$ and a source $f(x,y)=\delta_{x=6}\varphi_N(y)$. Then, $d^\ex$ is compared with $d^\app_{z,\alpha,x^\star_k}$ where $\alpha$ and $z$ are defined using the expression \eqref{exppk}. Here, $r=0.2$ and $\eta=8.10^{-4}$.}
\end{figure}

\subsection{Stable reconstruction of $x^\star_k$}

We proved so far that the one side boundary measurements are close to the three parameters function

\begin{equation}
d^\text{app}_{z,\alpha,x_k^\star} : x\mapsto z\Ai(\alpha(x_k^\star-x)),
\end{equation}
on the interval $\Gamma_{r\eta^{-1/3}}(x^\star_k)=(x_k^\star-r\eta^{-1/3},x_k^\star+r\eta^{-1/3})$. The question is now to understand if one can get a stable evaluation of $x_k^\star$ from this approached data. We define then the following forward operator 

\begin{equation}\label{eq:F}
F:\left\{\begin{array}{rcl}\C^*\times(0,+\infty)\times\R &\longrightarrow &\cC^0(\Gamma_{R}(\beta/\alpha))\\
(z,\alpha,\beta) &\longmapsto & \left(x\mapsto z\Ai(\beta-\alpha x)\right)\end{array}\right. .
\end{equation}
If the three parameters are uniquely defined, we can deduce an approximation of the position $x_k^\star$ from the solution of $F(z,\alpha,\beta)=d^\text{app}_{z,\alpha,x^\star_k}$ through the identification $x_k^\star=\beta/\alpha$. The following result guarantee the uniqueness of the solution to this problem. 

\begin{prop}  Let $d_0:=F(z_0,\alpha_0,\beta_0)$, there exists $r_0>0$ such that if $r>r_0$ then the problem $F(z,\alpha,\beta)=d_0$ has a unique solution $(z_0,\alpha_0,\beta_0)$. 
\end{prop}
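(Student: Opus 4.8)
The plan is to show that the real-analytic template $x\mapsto z\Ai(\beta-\alpha x)$ determines all three of its parameters, first eliminating the complex amplitude and then pinning down the scaling $\alpha$ and the shift $\beta$ through the distinctive, non-periodic behaviour of $\Ai$. Suppose $(z,\alpha,\beta)$ solves $F(z,\alpha,\beta)=d_0$, i.e.
\[
z\Ai(\beta-\alpha x)=z_0\Ai(\beta_0-\alpha_0 x)
\]
for every $x$ in the observation window. First I would remove $z$: since $\Ai(\beta_0-\alpha_0 x)$ is real-analytic and not identically zero on the window, it is nonzero at some point $x_1$, and there the right-hand side lies on the ray $\R z_0$ while the left-hand side lies on $\R z$; hence $\R z=\R z_0$ and $c:=z/z_0\in\R^*$. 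The problem thus reduces to the real identity $c\Ai(\beta-\alpha x)=\Ai(\beta_0-\alpha_0 x)$.

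Next I would normalise the right-hand side. Setting $t=\beta_0-\alpha_0 x$, $\lambda=\alpha/\alpha_0>0$ and $\mu=\beta-\lambda\beta_0$, the identity becomes
\[
c\,\Ai(\lambda t+\mu)=\Ai(t)
\]
on an interval of $t$. Both sides are restrictions of entire functions of $t$, so by the identity theorem the equality propagates to all $t\in\R$, and it remains to prove $\lambda=1$, $\mu=0$, $c=1$. For this I would feed in the classical asymptotics $\Ai(t)\sim \tfrac{1}{2\sqrt\pi}\,t^{-1/4}\exp(-\tfrac23 t^{3/2})$ as $t\to+\infty$. Taking the ratio of the two sides,
\[
c=\frac{\Ai(t)}{\Ai(\lambda t+\mu)}\sim \lambda^{1/4}\exp\!\Big(\tfrac23\big[(\lambda t+\mu)^{3/2}-t^{3/2}\big]\Big),
\]
and the bracket tends to $+\infty$ if $\lambda>1$ and to $-\infty$ if $\lambda<1$; since $c$ is a fixed nonzero constant this forces $\lambda=1$. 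With $\lambda=1$ the bracket behaves like $\tfrac32\mu\,t^{1/2}$, so finiteness of $c$ forces $\mu=0$, whence $c=1$. Unwinding the substitutions yields $\alpha=\alpha_0$, $\beta=\beta_0$, $z=z_0$, the asserted uniqueness.

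It remains to locate the hypothesis $r>r_0$. With exact data the identity theorem already gives uniqueness on any window of positive length, so $r_0$ is not forced by the bare statement; it becomes meaningful when one wants the argument to be \emph{quantitative} (and later stable), reading off only finitely many visible features of $\Ai$ instead of invoking analytic continuation. In the natural Airy coordinate $s=\beta-\alpha x$ the window $\Gamma_R(\beta/\alpha)$ becomes $(-\alpha R,\alpha R)$; since $\alpha=(2N^2\pi^2 h'(x_k^\star)/h(x_k^\star)^3)^{1/3}\gtrsim \eta^{1/3}$ and $R=r\eta^{-1/3}$, its radius is comparable to $r$, and choosing $r>r_0$ guarantees that the scaled window contains a fixed neighbourhood of the origin covering the first zero $a_1$ of $\Ai$ together with a piece of the super-exponentially decaying tail. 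I expect the main obstacle to be exactly this disentangling of the scaling $\alpha$ from the shift $\beta$ on a \emph{bounded} window: one must exploit that the zero set of $\Ai$ is aperiodic (no translation ambiguity, unlike $\sin$) and that its decay rate $\tfrac23 t^{3/2}$ is genuinely super-exponential (no two scalings can match), and turning these rigidity facts into an estimate valid on a finite interval is the delicate point that $r>r_0$ is designed to secure.
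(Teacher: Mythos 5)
Your proof is correct, but it is a genuinely different argument from the paper's. The paper is constructive: for $r>r_0$ the window contains the point $x_{\max}$ where $|d_0|$ attains the maximal Airy amplitude together with the first two zeros $x_1,x_2$ of $d_0$; since the affine change of variable $x\mapsto\beta_0-\alpha_0x$ must send zeros of $d_0$ to zeros of $\Ai$, matching them with the first two zeros $y_1,y_2$ of $\Ai$ yields the explicit formulas \eqref{meth11}, namely $z_0=d_0(x_{\max})/\Vert\Ai\Vert_\infty$, $\alpha_0=(y_1-y_2)/(x_2-x_1)$, $\beta_0=(y_1x_2-y_2x_1)/(x_2-x_1)$. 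The hypothesis $r>r_0$ is thus genuinely used there, and the payoff is that these same formulas are recycled afterwards as the ``direct parameters reconstruction method'' and as the initial guess for the least-squares fit of Proposition~\ref{leastsq}. Your route --- factor out the complex amplitude using the realness of $\Ai$, extend the identity $c\,\Ai(\lambda t+\mu)=\Ai(t)$ from the window to all of $\R$ by the identity theorem for analytic functions, then force $\lambda=1$, $\mu=0$, $c=1$ with the asymptotics $\Ai(t)\sim\tfrac{1}{2\sqrt{\pi}}\,t^{-1/4}e^{-\frac{2}{3}t^{3/2}}$ --- is sound and in fact proves a stronger statement: uniqueness holds on any window of positive length, so no $r_0$ is needed at all. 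What your argument gains in strength it loses in usability: it is non-constructive, produces no inversion formulas and no handle on stability, which is exactly what the paper exploits next; your closing discussion correctly identifies that $r>r_0$ serves to make finitely many visible landmarks (the maximum and the first two zeros) suffice, and that landmark identification is indeed the paper's actual mechanism.
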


\begin{proof} If $r$ is high enough, the interval $\Gamma_{r\eta^{-1/3}}$ is large enough to contain the maximum amplitude of the function $d_0$ at position $x_{\max}$, and the two first zeros of the function $d_0$ called $x_1$ and $x_2$. Hence $(z_0,\alpha_0,\beta_0)$ are uniquely determined by
\begin{equation}\label{meth11}
z_0=\frac{d_0(x_{\max})}{\norm{\Ai}{\infty}},\qquad \alpha_0=\frac{y_1-y_2}{x_2-x_1}, \qquad \beta_0=\frac{y_1x_2-y_2x_1}{x_2-x_1},
\end{equation}
where $y_1$ and $y_2$ are the two first zeros of the Airy function $\Ai$.
\end{proof}

This result of uniqueness and the corresponding inversion formulas are not directly applicable. Indeed, these formulas are not robust to noise or data uncertainties and we remind the reader that an approximation is already made between $d^\text{ex}$ and $d^\text{app}_{z,\alpha,x^\star_k}$. 

A first strategy would be to use the explicit formula \eqref{meth11} and apply some data regularization with a low-pass filter before doing the inversion (see \cite{mallat1} for more details). This method works perfectly with exact data, and will be called from now the ``direct parameters reconstruction method''.  

However, few issues can be raised with this method. Firstly, it is not sufficiently robust if data are very noisy, as illustrated in Figure \ref{ercomp} where we use this method to reconstruct parameters with a random additive noise of increasing amplitude. Secondly, as mentioned in the previous subsection, we will only make measurements of the wavefield on the interval $\Gamma_R$, and we cannot be sure that neither the first two zeros nor the maximum of $\Ai$ will occur in this interval. Finally, it is more realistic to assume that we only have access to $d^\text{app}_{z,\alpha,x^\star_k}$ on a finite number of receivers positions. For all this reasons, a least squares approach is introduced. On can keep the previous method to find an initial guess of the parameters $\g p:=(z,\alpha,\beta)$. 

We now assume that we have access to the data 
\begin{equation}
d_i:=F(p)(t_i),\quad i=1,\dots,n,
\end{equation}
where $\g p:=(z,\alpha,\beta)$ and $\g t:=(t_i)_{i=1,\dots,n}\in \Gamma_R$ is an increasing subdivision of $\Gamma_R$. We denote by $\g d:=(d_1,\dots,d_n)\in \C^n$ the discrete data on the subdivision $\g t$. We aim to minimize the least squares energy functional  
\begin{equation}
J_{\g d}(\g p):=\frac 12\Vert F(\g p)(\g t) - \g d\Vert_{\ell^2}^2.
\end{equation}
In this expression, the norm $\ell^2$ is the normalized euclidean norm defined by 
\begin{equation}
\Vert \g d \Vert_{\ell^2}^2=\frac{1}{n}\sum_{i=1}^n d_i^2.
\end{equation}
Denoting by $\tau$ the step of the subdivision $\g t$, we know using quadrature results (see \cite{davis1} for more details) that for all functions $f\in \text{H}^{2}(\Gamma_R)$,
\begin{equation}\label{control_l2}
\left| \Vert f\Vert_{\text{L}^2(\Gamma_R)}-\Vert f(\g t)\Vert_{\ell^2}\right|\leq 2\tau \Vert f\Vert_{\text{H}^1(\Gamma_R)}. 
\end{equation}
It shows that the $\ell^2$ norm is a good approximation of the $\text{L}^2(\Gamma_R)$ norm if the step of discretization $\tau$ is small enough. 

We assume the knowledge of an open set $U\subset \C^*\times(0,+\infty)\times\R$ of the form 
\begin{equation}
U:=B_\C(0,z_{\text{max}})\times (\alpha_{\min},\alpha_{\max})\times (\beta_{\min},\beta_{\max}), \quad z_{\max},\alpha_{\min},\alpha_{\max} \in \R_+^\star,\quad \beta_{\min},\beta_{\max} \in \R,
\end{equation}
containing the solution $\g p$.  The following proposition shows that the least squares problem is locally well-posed if the sampling size $n$ is large enough, and it quantifies the error on the recovered parameters. 

\begin{prop}\label{leastsq}
There exists $n_0\in \N^*$ such that if $n\geq n_0$, then for every $\g p_0\in U$ and $\g d_0:=F(p_0)(\g x)$, there exist $\eps>0$ and $U'\subset U$ such that for every $\g d\in \R^{n}$ satisfying $\Vert \g d-\g d_0\Vert_{\ell^2}< \eps$ then the functional $J_{\g d}$ is strictly convex on $U'$ and admits a unique minimizer denoted $\g p_{\text{LS}}=(z_{\text{LS}},\alpha_{\text{LS}},\beta_{\text{LS}})$. Moreover, there exists a constant $C_5>0$ depending only on $U$, $\g p_0$ and $n$ such that 
\begin{equation}
\Vert \g p_{\text{LS}}-\g p_0\Vert_{2}\leq C_5 \Vert \g d-\g d_0\Vert_{\ell^2}. 
\end{equation}
Finally, if we denote 
\begin{equation}
\Lambda : \left\{\begin{array}{rcl} B_2(\g d_0,\eps) & \rightarrow & \R \\ \g d &\mapsto& \frac{\beta_{\text{LS}}}{\alpha_{\text{LS}}}\end{array} \right. ,
\end{equation}
the operator which approach the value of $x^\star_k$, there exists a constant $C_6>0$ depending only on $U$, $\g p_0$ and $n$ such that 
\begin{equation}
|\Lambda(\g d)-\Lambda(\g d_0)|\leq C_6\Vert \g d-\g d_0\Vert_{\ell^2}.
\end{equation}
\end{prop}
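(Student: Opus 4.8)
The plan is to read $J_{\g d}$ as a standard nonlinear least-squares functional and to exploit that $\g p_0$ is an \emph{exact-fit} point: since $\g d_0=F(\g p_0)(\g t)$ we have $J_{\g d_0}(\g p_0)=0$, so $\g p_0$ globally minimizes $J_{\g d_0}$, the gradient $\nabla J_{\g d_0}(\g p_0)$ vanishes, and the Hessian reduces to the Gauss--Newton term $\frac1n DF(\g p_0)^\ast DF(\g p_0)$, the second-order contribution $\frac1n\sum_i(\text{residual})_i\,\nabla^2F_i$ dropping out because the residual is zero. Here I view $F$ (from \eqref{eq:F}) as a smooth, in fact real-analytic, map between finite-dimensional real spaces, splitting $z$ into real and imaginary parts if needed. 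The whole proposition then follows once one knows this Gram matrix is positive definite.

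The core is the injectivity of the Jacobian. Writing $F(z,\alpha,\beta)(x)=z\Ai(\beta-\alpha x)$, the sensitivity functions are $\partial_zF=\Ai(\beta-\alpha x)$, $\partial_\beta F=z\Ai'(\beta-\alpha x)$ and $\partial_\alpha F=-zx\,\Ai'(\beta-\alpha x)$. I would first show that $\Ai(u)$, $\Ai'(u)$ and $u\Ai'(u)$ are linearly independent on any interval: differentiating a hypothetical vanishing combination and substituting the Airy equation $\Ai''=u\Ai$ yields a $2\times2$ homogeneous system in $(\Ai,\Ai')$ whose determinant is a polynomial in $u$; requiring it to vanish identically forces every coefficient, hence every constant in the combination, to be zero. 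Since $z\neq0$ and $\alpha>0$, the substitution $u=\beta-\alpha x$ transfers this to linear independence of $\partial_zF,\partial_\alpha F,\partial_\beta F$ as functions of $x$ on $\Gamma_R(x^\star_k)$. The entries of $\frac1n DF(\g p_0)^\ast DF(\g p_0)$ are $\ell^2$ inner products of these functions sampled on $\g t$, and by the quadrature estimate \eqref{control_l2} they converge, as the step $\tau\to0$, to the $\text{L}^2(\Gamma_R)$ Gram matrix of three linearly independent analytic functions, which is positive definite. This gives an $n_0$ such that for $n\ge n_0$ the discrete Gram matrix is positive definite for every $\g p_0\in U$, with lowest eigenvalue $m=m(\g p_0)>0$. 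I expect this to be the main obstacle, since one must control the smallest eigenvalue uniformly enough in $n$, and produce a single threshold $n_0$ valid across all $\g p_0\in U$.

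Strict convexity, existence and uniqueness under perturbation then come from continuity. The full Hessian $\nabla^2J_{\g d}(\g p)=\frac1n DF(\g p)^\ast DF(\g p)+\frac1n\sum_i(F(\g p)(\g t)-\g d)_i\,\nabla^2F_i$ depends continuously on $(\g p,\g d)$, and at $(\g p_0,\g d_0)$ the residual is zero, so there exist a neighbourhood $U'\subset U$ of $\g p_0$ and $\eps>0$ such that for $\Vert\g d-\g d_0\Vert_{\ell^2}<\eps$ the Hessian stays $\succeq(m/2)\,\mathrm{Id}$ on $U'$; hence $J_{\g d}$ is strictly convex on $U'$. Since $\nabla J_{\g d}(\g p_0)=\frac1n DF(\g p_0)^\ast(\g d_0-\g d)$ is of order $\eps$ while the Hessian is bounded below, the inverse function theorem (or a Newton--Kantorovich estimate) produces exactly one critical point $\g p_{\text{LS}}$ in $U'$, which by convexity is the unique minimizer.

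The two Lipschitz bounds close the argument. From $\nabla J_{\g d}(\g p_{\text{LS}})=0$ and a mean-value identity one gets $\g p_{\text{LS}}-\g p_0=-\bar H^{-1}\nabla J_{\g d}(\g p_0)$ for an averaged Hessian $\bar H\succeq(m/2)\,\mathrm{Id}$, so using $\nabla J_{\g d}(\g p_0)=\frac1n DF(\g p_0)^\ast(\g d_0-\g d)$ yields $\Vert\g p_{\text{LS}}-\g p_0\Vert_2\le\frac2m\Vert\nabla J_{\g d}(\g p_0)\Vert\le C_5\Vert\g d-\g d_0\Vert_{\ell^2}$, with $C_5$ depending on $U$, $\g p_0$ and $n$ through $m$ and $\Vert DF(\g p_0)^\ast\Vert$. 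Finally, since $\alpha\ge\alpha_{\min}>0$ on $U$, the map $(z,\alpha,\beta)\mapsto\beta/\alpha$ is Lipschitz near $\g p_0$, so $|\Lambda(\g d)-\Lambda(\g d_0)|\le L\Vert\g p_{\text{LS}}-\g p_0\Vert_2\le C_6\Vert\g d-\g d_0\Vert_{\ell^2}$, which is the stated stability of the reconstruction of $x^\star_k=\beta/\alpha$.
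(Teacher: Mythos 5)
Your proposal is correct in outline and shares the paper's overall architecture --- positive definiteness of the Gauss--Newton part of the Hessian at the exact-fit point, perturbation of the residual term, an implicit/inverse function theorem to produce the minimizer, and a first-order bound for the two Lipschitz estimates --- but your key positivity lemma is genuinely different. The paper never passes to the continuum: it bounds the number of zeros of $x\mapsto h_1\Ai(\beta-\alpha x)+z\Ai'(\beta-\alpha x)(h_3-h_2x)$ by $3\ell+3$, where $\ell$ is the number of zeros of $\Ai'$ on a fixed compact interval determined by $U$ and the range of $\g t$ (its proof, analyzing $\Ai/\Ai'$ through the Airy equation and a polynomial identity, is a close cousin of your differentiation argument). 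Taking $n_0=3\ell+3$, the sum of squares in the Gauss--Newton form cannot vanish as soon as there are more than $n_0$ \emph{distinct} sample points, however they are placed in $\Gamma_R$. Your route --- linear independence of $\Ai(u)$, $\Ai'(u)$, $u\Ai'(u)$ in $\text{L}^2$ plus the quadrature estimate \eqref{control_l2} --- is conceptually cleaner (injectivity of the continuum Jacobian, standard nonlinear least-squares theory) and would generalize to other model families, but it proves something slightly weaker: it needs the subdivision step $\tau$ to be small, not merely $n$ to be large, so it covers uniform or quasi-uniform subdivisions only, whereas the paper's counting argument is insensitive to the geometry of $\g t$.

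Two caveats you should close. First, your claim that a single $n_0$ works for every $\g p_0\in U$ does not follow from pointwise convergence of the discrete Gram matrix: the limiting $\text{L}^2$ Gram matrix degenerates as $z\to 0$ (the $\alpha$- and $\beta$-sensitivities carry a factor $z$), and $U$ is not compactly contained in $\C^*\times(0,+\infty)\times\R$. The fix is the factorization $\frac1n DF(\g p)^\ast DF(\g p)=D(z)^\ast G(\alpha,\beta)D(z)$ with $D(z)=\mathrm{diag}(1,z,z)$: positive definiteness reduces to that of $G$, whose parameters range over the compact rectangle $[\alpha_{\min},\alpha_{\max}]\times[\beta_{\min},\beta_{\max}]$, so a single quadrature threshold works there, while the eigenvalue lower bound $m(\g p_0)$ (which may degenerate like $|z_0|^2$) is only needed near $\g p_0$ --- consistent with $C_5$, $C_6$ being allowed to depend on $\g p_0$. (The paper itself is terse on this point, restricting to an unspecified subset $U_1\subset U$.) Second, in the mean-value identity $\g p_{\text{LS}}-\g p_0=-\bar H^{-1}\nabla J_{\g d}(\g p_0)$ you need the segment $[\g p_0,\g p_{\text{LS}}]$ to stay in the region where the Hessian bound holds, so take $U'$ convex (a ball around $\g p_0$); with that, your derivation of $C_5$, and the final bound for $\Lambda$ using $\alpha\geq\alpha_{\min}>0$, match the paper's conclusion.
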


The proof of this result is given in Appendix B. In this proof we can see that the choice of the data discretization points $\g t$ is important in order to improve the accuracy of the reconstruction. We illustrate that by choosing $\g p_0=(2+i,1.4,-2.8)$ and comparing the direct reconstruction of the parameters \eqref{meth11} with the least squares method for three different sets $\g t$:
\begin{itemize}
\item $\g t_1$ is a discretization of $[-6,-1]$ with 100 points, where the tail of the Airy function is located.
\item $\g t_2$ is a discretization of $[-2,6]$ with 100 points, where the Airy function varies.
\item $\g t_3$ is a discretization of $[-6,6]$ with 200 points. 
\end{itemize}
In Figure \ref{airycomp}, we represent the four reconstructions where $\g d = \g d_0+\cN$ and $\cN$ is a normal random additive noise of amplitude $0.3$. We see that reconstructions with $\g t_2$ and $\g t_3$ seems more accurate than the one with $\g t_1$ and the direct reconstruction. We detail this in Figure \ref{ercomp}, where we compare the reconstruction errors with different noise amplitudes. We clearly see that using a least squares method improves the reconstruction if $\g t$ is well-chosen.

\begin{figure}[h]
\begin{center}
\input{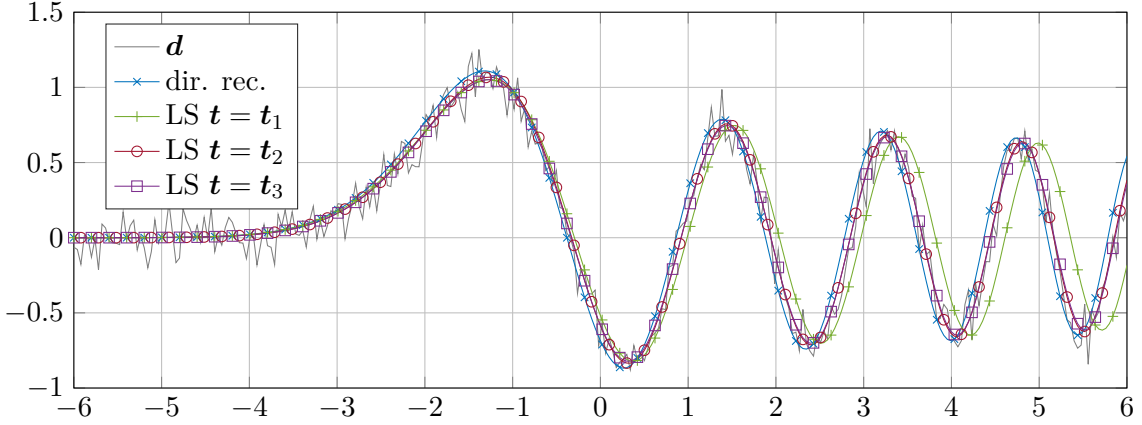}
\caption{\label{airycomp} Comparison of four different reconstructions of $F(\g p_0)$ where $\g d-\g d_0$ is a normal random additive noise of amplitude 0.3. Blue: direct reconstruction. Green: least square reconstruction with $\g t=\g t_1$. Red: least square reconstruction with $\g t=\g t_2$. Purple: least square reconstruction with $\g t=\g t_3$.}
\end{center}
\end{figure}

\begin{figure}[h]
\begin{center}
%
%
\definecolor{mycolor1}{rgb}{0.00000,0.44700,0.74100}%
\definecolor{mycolor2}{rgb}{0.4660 0.6740 0.1880}%
\definecolor{mycolor3}{rgb}{0.6350 0.0780 0.1840}%
\definecolor{mycolor4}{rgb}{0.49400,0.18400,0.55600}
\begin{tikzpicture}

\begin{loglogaxis}[%
width=10cm,
height=5cm,
at={(0in,0in)},
scale only axis,
xmin=0,
grid=minor,
xmax=1,
ymin=0,
ymax=0.34,
axis background/.style={fill=white},
ylabel={error on $\g p_{LS}$},
xlabel={$\Vert \g d-\g d_0\Vert_{\ell^2}/\Vert \g d_0\Vert_{\ell^2}$},
legend style={legend cell align=left, align=left, draw=white!15!black},
legend pos=north west,
]
\addplot [color=mycolor1,color=mycolor1,mark=x, mark size={2}]
  table[row sep=crcr]{%
0.05	0.0229091419009965\\
0.1	0.0296839032476869\\
0.15	0.0347675394993671\\
0.2	0.0395283993122679\\
0.25	0.0628113101305817\\
0.3	0.0819517079689435\\
0.35	0.0777339667472371\\
0.4	0.0843712736534832\\
0.45	0.0957897980154847\\
0.5	0.0931753649821667\\
0.55	0.131933905199308\\
0.6	0.122406842425223\\
0.65	0.172179152195183\\
0.7	0.165111352047512\\
0.75	0.172954573790091\\
0.8	0.188576692071903\\
0.85	0.20265548895462\\
0.9	0.160928507841859\\
0.95	0.191466725639646\\
1	0.190134483684818\\
};
\addlegendentry{dir. rec.}

\addplot [color=mycolor2, mark=+, mark size={2}]
  table[row sep=crcr]{%
0.05	0.014992189339849\\
0.1	0.0342105662184248\\
0.15	0.0433283728385579\\
0.2	0.0657137064865755\\
0.25	0.0799428237130169\\
0.3	0.0926905046838449\\
0.35	0.0913708230185353\\
0.4	0.101400163994652\\
0.45	0.129283815345702\\
0.5	0.147878645195776\\
0.55	0.182070659264914\\
0.6	0.180165126088297\\
0.65	0.210074112316479\\
0.7	0.206489977487207\\
0.75	0.259709749726001\\
0.8	0.260501086067901\\
0.85	0.266883825282437\\
0.9	0.276103623797072\\
0.95	0.231433585669496\\
1	0.311586100498883\\
};
\addlegendentry{LS $\g t=\g  t_1$}

\addplot [color=mycolor3, mark=o, mark size={2}]
  table[row sep=crcr]{%
0.05	0.00484929195048885\\
0.1	0.00906851707845026\\
0.15	0.012254152271014\\
0.2	0.0198246650523362\\
0.25	0.0268277025268608\\
0.3	0.0279021597240093\\
0.35	0.0282236515506412\\
0.4	0.0394460612108527\\
0.45	0.0435476548727766\\
0.5	0.033815092650431\\
0.55	0.0556446916725012\\
0.6	0.0483566485883064\\
0.65	0.0583038178349852\\
0.7	0.0622343193334649\\
0.75	0.0720811034918239\\
0.8	0.081368556405391\\
0.85	0.0690268998358236\\
0.9	0.08990620021772\\
0.95	0.0819579690696529\\
1	0.0919920663508916\\
};
\addlegendentry{LS $\g t=\g  t_2$}

\addplot [color=mycolor4, mark=square, mark size={2}]
  table[row sep=crcr]{%
0.05	0.00346665111091236\\
0.1	0.00751465065240954\\
0.15	0.0101095764445857\\
0.2	0.0138325245532753\\
0.25	0.0177118684533145\\
0.3	0.0209961154523075\\
0.35	0.0308169268930622\\
0.4	0.0336087464766161\\
0.45	0.0362292738581135\\
0.5	0.0371641494779487\\
0.55	0.0461517683496637\\
0.6	0.0440334149834422\\
0.65	0.0414759237500773\\
0.7	0.0474404901698698\\
0.75	0.0546005441414062\\
0.8	0.0566473001357812\\
0.85	0.0599101412815351\\
0.9	0.056259954122419\\
0.95	0.0765400102542804\\
1	0.0870218020203807\\
};
\addlegendentry{LS $\g t=\g  t_3$}

\end{loglogaxis}

\end{tikzpicture}
%
%
\definecolor{mycolor1}{rgb}{0.00000,0.44700,0.74100}%
\definecolor{mycolor2}{rgb}{0.4660 0.6740 0.1880}%
\definecolor{mycolor3}{rgb}{0.6350 0.0780 0.1840}%
\definecolor{mycolor4}{rgb}{0.49400,0.18400,0.55600}
\begin{tikzpicture}

\begin{loglogaxis}[%
width=10cm,
height=5cm,
at={(0in,0in)},
scale only axis,
xmin=0,
xmax=1,
ymin=0,
ymax=0.08,
grid=minor,
axis background/.style={fill=white},
ylabel={error on $\Lambda(\g d)$},
xlabel={$\Vert \g d-\g d_0\Vert_{\ell^2}/\Vert \g d_0\Vert_{\ell^2}$},
legend style={legend cell align=left, align=left, draw=white!15!black},
legend pos=north west,
]

\addplot [color=mycolor1,color=mycolor1,mark=x, mark size={2}]
  table[row sep=crcr]{%
0.05	0.00354414789692381\\
0.1	0.00354414789692381\\
0.15	0.00666947497363536\\
0.2	0.00822073288174485\\
0.25	0.0207220411885911\\
0.3	0.0238473682653026\\
0.35	0.0209355064545601\\
0.4	0.0265343862847349\\
0.45	0.0320875893841144\\
0.5	0.0543461869320482\\
0.55	0.0459153118077599\\
0.6	0.049276915410933\\
0.65	0.0610465828779018\\
0.7	0.0619118224671191\\
0.75	0.073547891184748\\
0.8	0.0722329098031001\\
0.85	0.0603833758195022\\
0.9	0.0666226514475841\\
0.95	0.0655211353319053\\
1	0.0750192825762284\\
};
\addlegendentry{dir. rec.}

\addplot [color=mycolor2, mark=+, mark size={2}]
  table[row sep=crcr]{%
0.05	0.00238518344199795\\
0.1	0.00485142312434327\\
0.15	0.00927175437146035\\
0.2	0.0116174912355427\\
0.25	0.0137511010196861\\
0.3	0.0133877721564196\\
0.35	0.0182821589301048\\
0.4	0.0190527848673077\\
0.45	0.0228393991737257\\
0.5	0.027113741289705\\
0.55	0.0317961000519244\\
0.6	0.033092686831826\\
0.65	0.0316426748433885\\
0.7	0.0347121983940702\\
0.75	0.0452207255787851\\
0.8	0.0517463595170373\\
0.85	0.0472799728041939\\
0.9	0.0377977457324468\\
0.95	0.0434884907802953\\
1	0.0595625361408784\\
};
\addlegendentry{LS $\g t=\g  t_1$}

\addplot [color=mycolor3, mark=o, mark size={2}]
  table[row sep=crcr]{%
0.05	0.00128448075672102\\
0.1	0.00211688563726466\\
0.15	0.00320897978578655\\
0.2	0.00346445749810406\\
0.25	0.0063019740023234\\
0.3	0.00682193831635691\\
0.35	0.00843660710771867\\
0.4	0.0116505231959879\\
0.45	0.0122485312695519\\
0.5	0.0106418599404478\\
0.55	0.0121034220512995\\
0.6	0.0117036995281449\\
0.65	0.013847745396327\\
0.7	0.017306907721515\\
0.75	0.0152614836674537\\
0.8	0.0188059270322733\\
0.85	0.0150015606018552\\
0.9	0.0222407737705644\\
0.95	0.0207988997297874\\
1	0.0214219389521763\\
};
\addlegendentry{LS $\g t=\g  t_2$}

\addplot [color=mycolor4, mark=square, mark size={2}]
  table[row sep=crcr]{%
0.05	0.000784079447607485\\
0.1	0.0019154184728061\\
0.15	0.00242772902299415\\
0.2	0.00365513846516604\\
0.25	0.00476374840547367\\
0.3	0.00571128213029019\\
0.35	0.00670591819166667\\
0.4	0.00793636527009343\\
0.45	0.00853924884297249\\
0.5	0.00922043793957073\\
0.55	0.0115854192524997\\
0.6	0.0125349719111053\\
0.65	0.0128007196403181\\
0.7	0.0136579756279535\\
0.75	0.0139401797330852\\
0.8	0.0197095520908092\\
0.85	0.0148773165143485\\
0.9	0.0196882307538461\\
0.95	0.0183585399323251\\
1	0.0200480945662725\\
};
\addlegendentry{LS $\g t=\g  t_3$}

\end{loglogaxis}

\end{tikzpicture}
\caption{\label{ercomp} Error of reconstruction of $\g p_0$ and $\Lambda(\g d_0)$ with respect to the noise on the data for different method of reconstruction. On the left, $\Vert p_{\text{LS}}-p_0\Vert_{2}$ with respect to $\Vert \g d-\g d_0\Vert_{\ell^2}/\Vert \g d\Vert_{\ell^2}$. On the right, $|\Lambda(\g d_0)-\Lambda(\g d)|$ with respect to $\Vert \g d-\g d_0\Vert_{\ell^2}/\Vert \g d\Vert_{\ell^2}$. }
\end{figure}
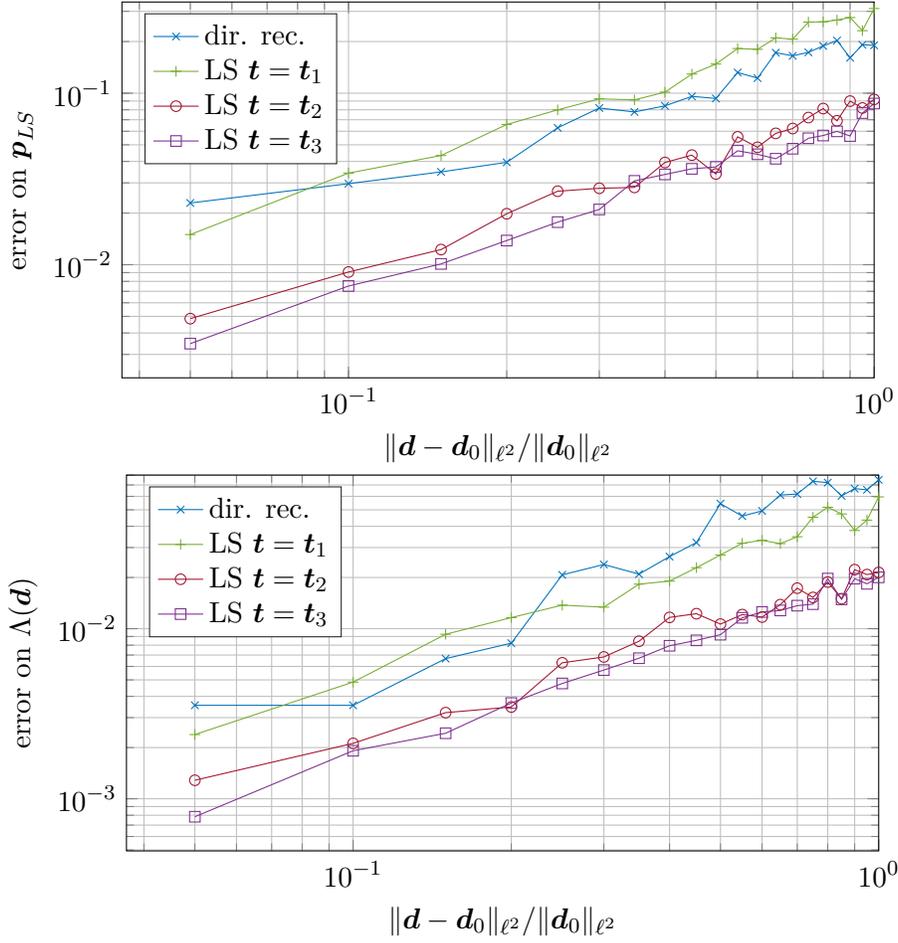

We now have a stable method to reconstruct an approximation $x^{\star,\app}_k$ of $x^\star_k$ from given measurements of $d(x):=u(x,0)$. Since $h(x^\star_k)=N\pi/k$, we can approximate the width at positions $x^{\star,\app}_k$ with the formula
\begin{equation}
h^\text{app}(x^{\star,\app}_k)=\frac{N\pi}{k}.
\end{equation}
This provides an approximation of $h$ in one point. 
Then, we change the frequency $k>0$ to get approximations of $h$ all along the support of $h'$. From now on, we denote $u_k$ the wavefield propagating at frequency $k$. As mentioned before, we assume that we already have an approximation of $\supp(h')$, $h_{\min}$ and $h_{\max}$ (see Appendix A). We denote 
\begin{equation}
k_{\max}:=\frac{N\pi}{h_{\min}}, \quad k_{\min}:=\frac{N\pi}{h_{\max}}.
\end{equation}
and we take a finite set of frequencies $K\subset (k_{\min},k_{\max})$.  For every frequency $k\in K$, we introduce $\g t_k$ a discretization of $\Gamma_R$ and we set
\begin{equation}
\forall k\in K\qquad \g d_k:=u_k(\g t_k,0)+\zeta_k,
\end{equation}
the measurement of $u_k$ corrupted by an error term $\zeta_k$. For every $k\in K$, using Proposition \ref{leastsq}, we have an $\eps_k$ and a constant $C_6^k$ associated with $\g d_{k,0}=d^\app_{z,\alpha,x^\star_k}$. We define
\begin{equation} \eps:=\min_{k\in K}(\eps_k), \qquad C_6:=\max_{k\in K}(C_6^k),
\end{equation}
and 
\begin{equation}
X^{\star,\app}:=\{\Lambda(\g d_k) ,\, k\in K\}.
\end{equation} 
Using the known approximation of $\supp(h')$ provided by Appendix A, we set the coordinates $x_0$ and $x_{n+1}$ such that $\text{supp} (h')\subset (x_0,x_{n+1})$ and $X^{\star,\app}\subset (x_0,x_{n+1})$. We then define the function $h^{\app}$ as the piecewise linear function such that 
\begin{equation}\label{happ}
 h^{\app}(x^{\star,\app}_{k})=\frac{N\pi}{k} \quad \forall\,  k\in K, \qquad  h^{\app}(x_0)=\frac{N\pi}{k_{\max}}, \qquad h^{\app}(x_{n+1})=\frac{N\pi}{k_{\min}}. 
\end{equation}

Using all the previous results, we are able to quantify the error of reconstruction between $h^{\app}$ and $h$:
\begin{theorem}\label{th3}
Let $K$ be a finite subset of $(k_{\min},k_{\max})$. Assume the same hypotheses than in Theorem \ref{th1}, and fix $N\in \N^*$ and $\theta>0$ such that for every $k\in K$,  $h'(x^\star_k)\geq \theta\eta$. There exist $\eta_1>0$ and $\zeta_0>0$ such that if $\eta <\eta_1$ and $\max_{k\in K}\Vert \zeta_k\Vert_{\ell^2}\leq \zeta_0$, then there exists a constant $C_7>0$ depending only on $h_{\min}$, $h_{\max}$, $N$, $r$, $\mu$ and $\theta$ such that 
\begin{equation}
 \Vert h^{\app}(X^{\star,\app})-h(X^{\star,\app})\Vert_\infty\leq \eta C_7\left(\eta^{1/3}+\max_{k\in K}\Vert \zeta_k\Vert_{\ell^2}\right). 
\end{equation}
\end{theorem}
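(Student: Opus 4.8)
The plan is to reduce the whole statement to the pointwise stability of the reconstruction of $x^\star_k$ already established, and then to use the Lipschitz regularity of $h$ to turn that stability into a bound on the width. The first observation is that, although the two terms are evaluated at the reconstructed point $x^{\star,\app}_k$, the error collapses. By the definition of $h^{\app}$ in \eqref{happ} we have $h^{\app}(x^{\star,\app}_k)=N\pi/k$, and since $k$ is locally resonant, $h(x^\star_k)=N\pi/k$ as well, so that
\[
h^{\app}(x^{\star,\app}_k)-h(x^{\star,\app}_k)=h(x^\star_k)-h(x^{\star,\app}_k).
\]
Assumption \ref{def:slow} gives $\|h'\|_{\text{L}^\infty}<\eta$, and the mean value theorem then yields
\[
\big|h^{\app}(x^{\star,\app}_k)-h(x^{\star,\app}_k)\big|\leq \eta\,\big|x^\star_k-x^{\star,\app}_k\big|.
\]
This isolates the prefactor $\eta$ and reduces the theorem to bounding the reconstruction error $|x^\star_k-x^{\star,\app}_k|$, uniformly over the finite set $K$.

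Next I would invoke Proposition \ref{leastsq}. Writing $\g d_{k,0}:=d^\text{app}_{z,\alpha,x^\star_k}(\g t_k)$, the identification $x^\star_k=\Lambda(\g d_{k,0})$ and $x^{\star,\app}_k=\Lambda(\g d_k)$ gives, as soon as $\|\g d_k-\g d_{k,0}\|_{\ell^2}<\eps$,
\[
\big|x^\star_k-x^{\star,\app}_k\big|=\big|\Lambda(\g d_{k,0})-\Lambda(\g d_k)\big|\leq C_6\,\|\g d_k-\g d_{k,0}\|_{\ell^2}.
\]
It remains to estimate this discrete perturbation. Splitting $\g d_k-\g d_{k,0}=\big(d^\text{ex}(\g t_k)-d^\text{app}_{z,\alpha,x^\star_k}(\g t_k)\big)+\zeta_k$, the triangle inequality produces the noise contribution $\|\zeta_k\|_{\ell^2}$, while the model part is handled by passing from the discrete $\ell^2$ norm to the continuous $\text{L}^2(\Gamma_R)$ norm via the quadrature estimate \eqref{control_l2}. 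Theorem \ref{th2} then supplies $\|d^\text{ex}-d^\text{app}_{z,\alpha,x^\star_k}\|_{\text{L}^2(\Gamma_R)}\leq C_4\eta^{1/3}$, and the quadrature remainder $2\tau\|d^\text{ex}-d^\text{app}_{z,\alpha,x^\star_k}\|_{\text{H}^1(\Gamma_R)}$ is rendered harmless by taking the sampling size $n$ large (equivalently the step, governed by $\mu$, small). The needed $\text{H}^1$ bound follows from the trace estimate in Corollary \ref{corobis} together with the fact that, under the scaling $R=r\eta^{-1/3}$, the rescaled Airy model $x\mapsto z\Ai(\alpha(x^\star_k-x))$ has $\cO(1)$ Sobolev norm on $\Gamma_R$. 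Altogether $\|\g d_k-\g d_{k,0}\|_{\ell^2}\leq C\big(\eta^{1/3}+\|\zeta_k\|_{\ell^2}\big)$.

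To close, I would fix the thresholds. Since $K$ is finite I take $\eps:=\min_{k\in K}\eps_k$ and $C_6:=\max_{k\in K}C_6^k$ as in the statement, and choose $\eta_1>0$, $\zeta_0>0$ small enough that $C(\eta^{1/3}+\zeta_0)<\eps$ for all $\eta<\eta_1$; this guarantees the hypothesis $\|\g d_k-\g d_{k,0}\|_{\ell^2}<\eps$ of Proposition \ref{leastsq} simultaneously for every $k\in K$. Combining the three steps gives
\[
\big|h^{\app}(x^{\star,\app}_k)-h(x^{\star,\app}_k)\big|\leq \eta\,C_6\,C\big(\eta^{1/3}+\|\zeta_k\|_{\ell^2}\big),
\]
and taking the maximum over $k\in K$ while absorbing $C_6C$ into a single constant $C_7$ yields the announced estimate.

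The main obstacle lies in the uniform applicability of Proposition \ref{leastsq} across all frequencies, combined with the passage from the continuous $\text{L}^2$ bound of Theorem \ref{th2} to the discrete $\ell^2$ norm. One must check that the steepness hypothesis $h'(x^\star_k)\geq\theta\eta$ forces $C_4$, $\eps_k$ and $C_6^k$ to be controlled uniformly in $k$ — which is precisely what the uniform lower bound $\delta(k)\geq N\pi\theta/(h_{\min}\sqrt{h_{\max}})$ established inside the proof of Theorem \ref{th2} delivers — and that the $\text{H}^1$ quadrature remainder does not degrade the $\eta^{1/3}$ rate even though $\Gamma_R$ grows like $\eta^{-1/3}$.
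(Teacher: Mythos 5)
Your proposal is correct and follows essentially the same route as the paper's proof: the reduction $h^{\app}(x^{\star,\app}_k)-h(x^{\star,\app}_k)=h(x^\star_k)-h(x^{\star,\app}_k)$ plus the Lipschitz bound $\|h'\|_{\text{L}^\infty}<\eta$, then Proposition \ref{leastsq} applied after splitting $\g d_k-\g d_{k,0}$ into the model error (controlled via Theorem \ref{th2} and the quadrature estimate \eqref{control_l2}, with the $\text{H}^1$ norms bounded uniformly in $k$) and the noise, and finally the choice of $\eta_1,\zeta_0$ so the perturbation stays below $\eps=\min_k\eps_k$. Your added remarks on uniformity in $k$ via the lower bound on $\delta(k)$ are exactly the point the paper relies on implicitly.
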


\begin{proof}

For every $k\in K$, we notice that 
\begin{equation}\nonumber
|h^{\app}(x^{\star,\app}_k)-h(x^{\star,\app}_k)|=|h(x^\star_{k})-h(x^{\star,\app}_k)|\leq \eta |x^\star_{k}-x^{\star,\app}_k|=\eta|\Lambda(\g d_0)-\Lambda(\g d_{k,0})|.
\end{equation}
Using Theorem \ref{th2} combined with the control \eqref{control_l2}, we know that
\begin{equation}\nonumber
\Vert \g d_k-\g d_{0,k}\Vert_{\ell^2}\leq 2\eta^{1/3} \tau C_4  \left(\Vert d^\ex\Vert_{\text{H}^1(\Gamma_R)}+\Vert d^\app_{z,\alpha,x^\star_k}\Vert_{\text{H}^1(\Gamma_R)}\right).
\end{equation}
Using Theorem \ref{th1}, we can control both these norm by a constant $c_2>0$ which does no depend on $k$. Then, if $\eta$ and $\max_{k\in K}\Vert \zeta_k\Vert_{\ell^2}$ are small enough then,
\begin{equation}\nonumber
2\tau C_4c_2\eta^{1/3}+\max_{k\in K}\Vert \zeta_k\Vert_{\ell^2}\leq \eps,
\end{equation}
and using Proposition \ref{leastsq},
\begin{equation}\nonumber
|h^{\app}(x^{\star,\app}_k)-h(x^{\star,\app}_k)|\leq \eta C_6\left(2\tau C_4c_2\eta^{1/3}+\max_{k\in K}\Vert \zeta_k\Vert_{\ell^2}\right).
\end{equation}
\end{proof}

The first error term in this theorem is a consequence of the approximation of the data $d^\ex$ by the Airy function $d^\app$, while the second error term is caused by the eventual presence of measurements noises. We illustrate this reconstruction in Figure \ref{pointsxp} where we choose a subset $K$ with $20$ points, and we compare $h^{\app}(x^{\star,\app}_k)$ and the exact values $h(x^\star_k)$.

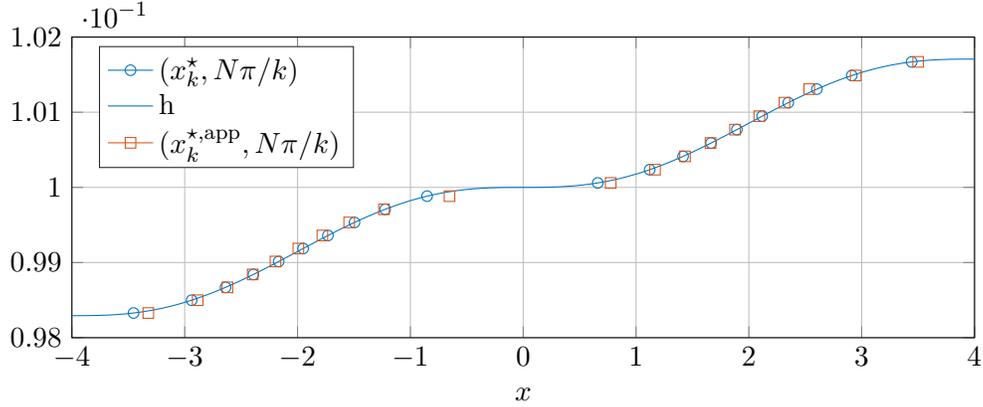
\begin{figure}[h]
\begin{center}
%
%
\definecolor{mycolor1}{rgb}{0.00000,0.44700,0.74100}%
\definecolor{mycolor2}{rgb}{0.85000,0.32500,0.09800}%
\begin{tikzpicture}

\begin{axis}[%
width=12cm,
height=4cm,
at={(0in,0in)},
scale only axis,
xmin=-4,
xmax=4,
grid=major, 
ymin=0.098,
legend pos=north west,
scaled y ticks=base 10:1,
xlabel={$x$},
ymax=0.102,
axis background/.style={fill=white},
legend style={legend cell align=left, align=left, draw=white!15!black}
]
\addplot [color=mycolor1, draw=none, mark=o, mark options={solid, mycolor1}]
  table[row sep=crcr]{%
3.44295382970716	0.101669665164718\\
2.91211164964376	0.101488158493932\\
2.60300353301074	0.101307298741015\\
2.34921961801124	0.101127082453547\\
2.11798956807277	0.100947506203629\\
1.89384234676567	0.10076856658767\\
1.66494226469951	0.100590260226165\\
1.41685552668203	0.100412583763489\\
1.12118126669493	0.100235533867684\\
0.660674568637204	0.100059107230251\\
-0.851939407222899	0.0998833005659405\\
-1.2237657830579	0.099708110612555\\
-1.49541249826153	0.0995335341307422\\
-1.73063372532789	0.099359567903797\\
-1.95055226459867	0.0991862087374644\\
-2.16729523867338	0.099013453459743\\
-2.39160351386512	0.0988412989206923\\
-2.63814483531309	0.0986697419922408\\
-2.93854449859568	0.0984987795679968\\
-3.45359757015163	0.0983284085630608\\
};
\addlegendentry{$(x^\star_k,N\pi/k)$}

\addplot [color=mycolor1]
  table[row sep=crcr]{%
-4	0.0982933333333333\\
-3.91919191919192	0.0982934698164101\\
-3.83838383838384	0.0982943919127539\\
-3.75757575757576	0.0982967955620121\\
-3.67676767676768	0.0983012826369804\\
-3.5959595959596	0.0983083650784094\\
-3.51515151515152	0.0983184690298115\\
-3.43434343434343	0.0983319389722671\\
-3.35353535353535	0.0983490418592318\\
-3.27272727272727	0.0983699712513428\\
-3.19191919191919	0.0983948514512255\\
-3.11111111111111	0.0984237416383004\\
-3.03030303030303	0.0984566400035896\\
-2.94949494949495	0.0984934878845235\\
-2.86868686868687	0.0985341738997473\\
-2.78787878787879	0.0985785380839283\\
-2.70707070707071	0.0986263760225616\\
-2.62626262626263	0.0986774429867776\\
-2.54545454545455	0.0987314580681482\\
-2.46464646464646	0.0987881083134937\\
-2.38383838383838	0.0988470528596894\\
-2.3030303030303	0.0989079270684723\\
-2.22222222222222	0.0989703466612474\\
-2.14141414141414	0.0990339118538952\\
-2.06060606060606	0.0990982114915775\\
-1.97979797979798	0.0991628271835444\\
-1.8989898989899	0.0992273374379414\\
-1.81818181818182	0.0992913217966152\\
-1.73737373737374	0.099354364969921\\
-1.65656565656566	0.0994160609715292\\
-1.57575757575758	0.0994760172532316\\
-1.49494949494949	0.0995338588397485\\
-1.41414141414141	0.0995892324635352\\
-1.33333333333333	0.0996418106995885\\
-1.25252525252525	0.0996912961002538\\
-1.17171717171717	0.0997374253300313\\
-1.09090909090909	0.0997799733003831\\
-1.01010101010101	0.0998187573045395\\
-0.92929292929293	0.0998536411523058\\
-0.848484848484849	0.0998845393048691\\
-0.767676767676768	0.0999114210096048\\
-0.686868686868687	0.0999343144348833\\
-0.606060606060606	0.0999533108048768\\
-0.525252525252525	0.0999685685343657\\
-0.444444444444445	0.0999803173635455\\
-0.363636363636364	0.0999888624928335\\
-0.282828282828283	0.0999945887176753\\
-0.202020202020202	0.0999979645633514\\
-0.121212121212121	0.0999995464197843\\
-0.0404040404040402	0.0999999826763445\\
0.0404040404040407	0.100000017323656\\
0.121212121212121	0.100000453580216\\
0.202020202020202	0.100002035436649\\
0.282828282828283	0.100005411282325\\
0.363636363636363	0.100011137507166\\
0.444444444444445	0.100019682636454\\
0.525252525252525	0.100031431465634\\
0.606060606060606	0.100046689195123\\
0.686868686868687	0.100065685565117\\
0.767676767676767	0.100088578990395\\
0.848484848484849	0.100115460695131\\
0.929292929292929	0.100146358847694\\
1.01010101010101	0.100181242695461\\
1.09090909090909	0.100220026699617\\
1.17171717171717	0.100262574669969\\
1.25252525252525	0.100308703899746\\
1.33333333333333	0.100358189300412\\
1.41414141414141	0.100410767536465\\
1.49494949494949	0.100466141160251\\
1.57575757575758	0.100523982746768\\
1.65656565656566	0.100583939028471\\
1.73737373737374	0.100645635030079\\
1.81818181818182	0.100708678203385\\
1.8989898989899	0.100772662562059\\
1.97979797979798	0.100837172816456\\
2.06060606060606	0.100901788508423\\
2.14141414141414	0.100966088146105\\
2.22222222222222	0.101029653338753\\
2.3030303030303	0.101092072931528\\
2.38383838383838	0.101152947140311\\
2.46464646464646	0.101211891686506\\
2.54545454545455	0.101268541931852\\
2.62626262626263	0.101322557013222\\
2.70707070707071	0.101373623977438\\
2.78787878787879	0.101421461916072\\
2.86868686868687	0.101465826100253\\
2.94949494949495	0.101506512115477\\
3.03030303030303	0.10154335999641\\
3.11111111111111	0.1015762583617\\
3.19191919191919	0.101605148548775\\
3.27272727272727	0.101630028748657\\
3.35353535353535	0.101650958140768\\
3.43434343434343	0.101668061027733\\
3.51515151515152	0.101681530970189\\
3.5959595959596	0.101691634921591\\
3.67676767676768	0.10169871736302\\
3.75757575757576	0.101703204437988\\
3.83838383838384	0.101705608087246\\
3.91919191919192	0.10170653018359\\
4	0.101706666666667\\
};
\addlegendentry{h}

\addplot [color=mycolor2, draw=none, mark=square, mark options={solid, mycolor2}]
  table[row sep=crcr]{%
3.5	0.101669665164718\\
2.94680614901442	0.101488158493932\\
2.53502632365852	0.101307298741015\\
2.31540963554168	0.101127082453547\\
2.09267349197802	0.100947506203629\\
1.87692049807251	0.10076856658767\\
1.66021515752679	0.100590260226165\\
1.43418851056657	0.100412583763489\\
1.16820757566861	0.100235533867684\\
0.775311044853517	0.100059107230251\\
-0.653049538280735	0.0998833005659405\\
-1.23377527151105	0.099708110612555\\
-1.54221274636732	0.0995335341307422\\
-1.77985748877314	0.099359567903797\\
-1.99284190069748	0.0991862087374644\\
-2.19562111753366	0.099013453459743\\
-2.39712586804073	0.0988412989206923\\
-2.62136478736978	0.0986697419922408\\
-2.88491718671545	0.0984987795679968\\
-3.3226927471198	0.0983284085630608\\
};
\addlegendentry{$(x^{\star,\text{app}}_k,N\pi/k)$}

\end{axis}
\end{tikzpicture}%
\end{center}
\caption{\label{pointsxp} Representation of $x^\star_k$ and $x^{\star,\app}_k$ when $K=\{30.9:31.95:20\}$. Here, $h$ is defined in \eqref{h2}, $N=1$, and data are generated as explained in section \ref{num} with a source $f(x,y)=\delta_{x=6}\varphi_1(y)$.}
\end{figure}

To conclude, we have proved in this section that we are able to reconstruct in a stable way a set of resonant positions $x^\star_k$, which leads to a stable reconstruction of the function $h$. We present in the next section the general idea needed to generalize this reconstruction to more realistic non-monochromatic sources terms. 

\section{Inversion using a general source term \label{section_general_source}}

We now consider the general case without any \text{a priori} assumptions on the source terms $f$ and $b$. We use the same arguments as before to reconstruct the width using the locally resonant mode $N$. However, other modes may also be present in the wavefield, and the previous approximation of $d^\ex$ by $u_N^\app$ (see \eqref{dex}) may not be valid. We provide here two methods to exploit the framework developed in the previous section. 

The first idea is to treat all resonant and evanescent modes as an added noise to the locally resonant mode. Indeed, we know using Theorem \ref{th1} that 
\begin{equation}
d^\ex:=u(x,0)\approx \sum_{n\in \N} \int_\R G_{n}^{\text{app}}(x,s)g_n(s)\dd s\,\ph_n\left(y\right).
\end{equation}
Defining the noise
\begin{equation}
\zeta_k(x):=\sum_{n\neq N}\int_\R G_{n}^{\text{app}}(x,s)g_n(s)\,\dd s\ph_n\left(y\right), 
\end{equation}
we see that 
\begin{equation}
d^\ex\approx u_{N}^{\app}(x)\varphi_N(0)+\zeta_k(x).
\end{equation}
To make a successful use of Theorem \ref{th3}, the noise $\zeta_k$ needs to be smaller than the parameter $\zeta_0$. Using the expression of $G_{n}^{\app}$ given in \eqref{greenfunction}, we notice that 
\begin{equation}
\forall n\in \N, \qquad |G_{n}^{\app}(x)|=\cO\left(\frac{1}{\min(|k_n|)}\right).
\end{equation}
Since
\begin{equation}
\frac{|k_N|}{|k_n|}\leq\frac{|k_N|^2h_{\max}^2}{\sqrt{2N-1}\pi},
\end{equation}
the amplitude of $\zeta_k$ seems to be neglectable compared to the amplitude of $u_{N}^{\app}$ if the width $h_{\max}$ is small enough and if $\Vert f_n\Vert_{\text{L}^2(\R)}=\cO( \Vert f_N\Vert_{\text{L}^2(\R)})$ for every $n\neq N$. We illustrate this in Figure \ref{ampli} where we compare the amplitude of $u(x,0)$ with the amplitude of $u_{N}^{\app}(x)\varphi_N(0)$ for source terms $f(x,y)=y^2\delta_{x=6}$ and $b_{\top}(x)=\delta_{x=6}$. We see that it is possible to fit directly the Airy function on $u(x,0)$, since the noise is very small. 

\begin{figure}[h]
\begin{center}
\input{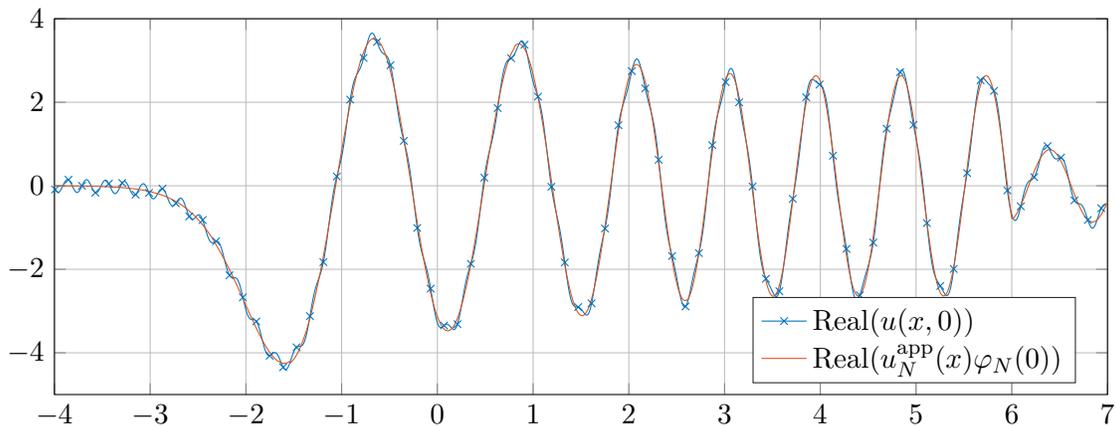}
\end{center}
\caption{\label{ampli} Representation of $u(x,0)$ and $u_{N}^{\app}(x)\varphi_N(0)$ generated by a source $f(x,y)=y^2\delta_{x=6}$ and $b_{\top}(x)=\delta_{x=6}$. Here $N=1$, $k=31.7$ and $h$ is defined as in \eqref{h2}. }
\end{figure}

However, in the general case, the noise of the other modes can not always be neglected. In this case, we can filter the measurements to keep only the locally resonant part. Evanescent modes vanish away from the source and so their contribution is negligible in $u(x,0)$. As for propagative modes, we notice that $k_n(x)$ is almost constant when $n<N$. Using \eqref{greenfunction}, it means that every propagative mode is oscillating with a frequency almost equal to $k_n$. We illustrate it in Figure \ref{fourier} with the Fourier transform of $u(x,0)$ and the contribution of each mode. Using a filter cutting all frequencies around $k_n(x)$ for $n<N$, we can clean the signal and get a good approximation of $u_{N}(x)\varphi_N(0)$ (see \cite{mallat1} for more details). We illustrate this in Figure \ref{passbas} where we represent $u(x,0)$, $u_{N}(x)\varphi_n(0)$ and the approximation obtained using this Fourier filtering.

\begin{figure}[h]
\begin{center}
\input{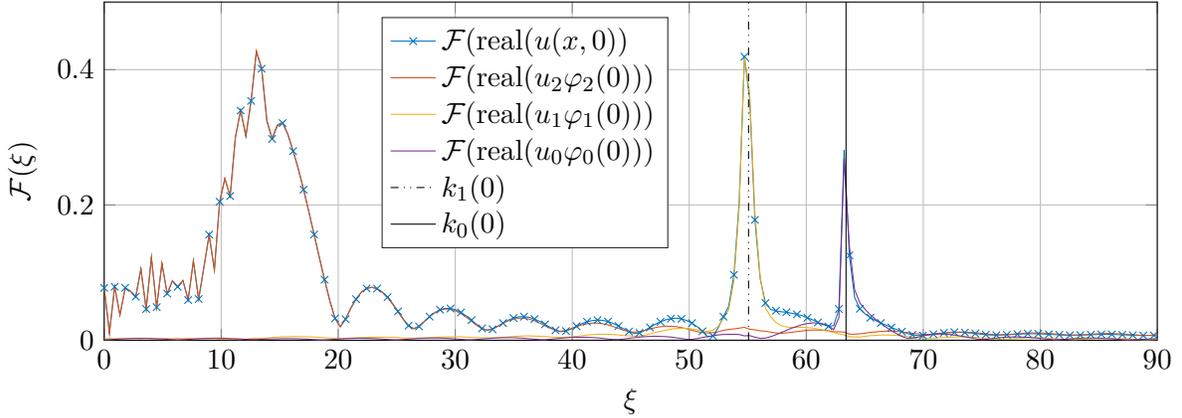}
\end{center}
\caption{\label{fourier} Influence of each mode on the Fourier transform of the measurements real part. Fourier transform of $\text{Real}(u_{n}\varphi_n(0))$ are plotted for every non evanescent mode ($n=0,1,2$). For comparison purpose, $k_n(0)$ is represented for every propagative modes ($n=1,2$). Here, $k=63.4$, $N=2$, $f(x,y)=\delta_{x=6}(3\varphi_0(y)+2\varphi_1(y)+\varphi_2(y))$ and $b_\top=\delta_{x=6}$.}
\end{figure}

\begin{figure}[h]
\begin{center}
\input{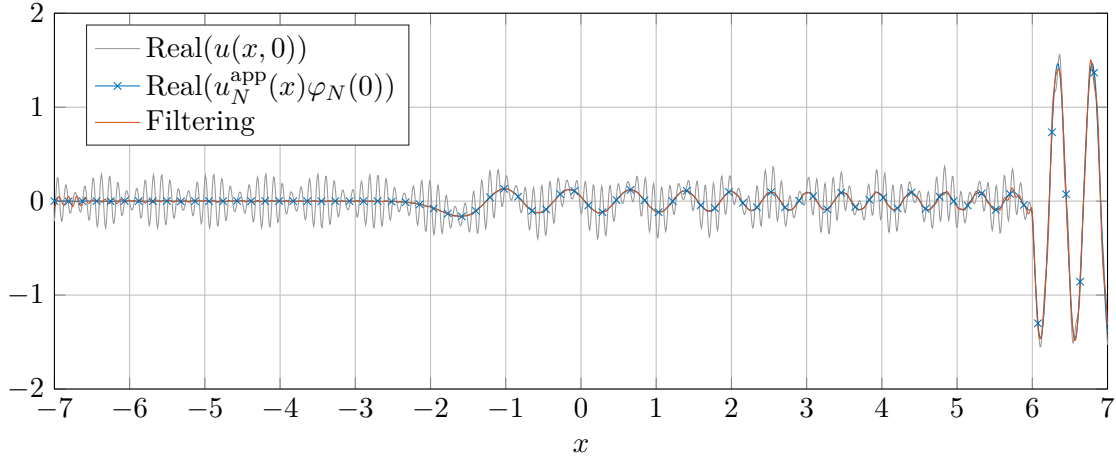}
\end{center}
\caption{\label{passbas} Fourier filtering of $\text{Real}(u_k(x,0))$ and comparison with $\text{Real}(u_{k,N}^{\app}(x)\varphi_N(0))$. Here, $k=63.4$, $N=2$, $f(x,y)=\delta_{x=6}(3\varphi_0(y)+2\varphi_1(y)+\varphi_2(y))$ and $b_\top=\delta_{x=6}$.}
\end{figure}

We can now fit the three parameters Airy function on the measurements, and find an approximation of $x^\star_k$ as before. It proves that our method can be used in the case of general sources, providing a filtering of propagative mode, and does not need any \emph{a priori} information on sources as long as their locally resonant part does not vanish. Using the extension of Theorem \ref{th1} to non monotone waveguides provided in Section 4 of \cite{bonnetier2}, all the results presented in this section remain true in any kind of slowly varying waveguide.

\section{Numerical computations}

In this section, we show some numerical applications of our reconstruction method on slowly varying waveguides. We simulate one side boundary measurements using numerically generated data, and we provide reconstructions of increasing an non monotone waveguides with different shape profiles. 

\subsection{Generation of data \label{num}}
In the following, numerical solutions of \eqref{eqmatlab} are generated using the software Matlab to solve numerically the equation in the waveguide $\Omega$. In every numerical simulation, we assume that $h'$ is supported between $x=-7$ and $x=7$. To generate the solution $u$ of \eqref{eqmatlab} on $\Omega_7$, we use a self-coded finite element method and a perfectly matched layer \cite{berenger1} on the left side of the waveguide between $x=-15$ and $x=-8$ and on the right side between $x=8$ and $x=15$. The coefficient of absorption for the perfectly matched layer is defined as $\alpha=-k((x-8)\textbf{1}_{x\geq 8}-(x+8)\textbf{1}_{x\leq -8})$ and $k^2$ is replaced in the Helmholtz equation with $k^2+i\alpha$. The structured mesh is built with a stepsize of $10^{-3}$. 

\subsection{Method of reconstruction}
Using all the previous results, we summarize here all the steps to reconstruct the approximation $h^{\app}$ of the width $h$. 
\begin{enumerate}
\item Find an approximation of $\supp(h')$, $k_{\min}$ and $k_{\max}$ using the method described at the beginning of Appendix A.
\item Choose a set of frequencies $K\subset (k_{\min},k_{\max})$ and sources $f$, $b_{\top}$, $b_{\bot}$. Then, for every frequency, measure the wavefield $d(x):=u(x,0)$ solution of \eqref{eqmatlab}. 
\item Filter the data by eliminating in the Fourier transform responses around $k_n$ for every a propagative mode $n$. 
\item Find an approximation of the coordinate $x_{\max}$ where $|u(x,0)|$ is maximal. Then, choose a length $R>0$, and discretize the interval $[x_{\max}-R,x_{\max}+R]$ into the set $\g t_k$. The available data $\g t_k$ are then the measurements of $u(\g t_k,0)$.
\item For every frequency, minimize the quantity $\Vert \g d_k-F(\g p)(\g t_k)\Vert_{\ell^2}$ using a gradient descent to find the approximation $x^{\star,\app}_k$ of $x^\star_k$. The direct reconstruction method can be used to initialize the gradient descent method. 
\item Compute $h^{\app}$ using expression \eqref{happ}.
\end{enumerate}

In step $3$, we should normally discretize the interval $\Gamma_R(x^\star_k)$. However, since we do not know yet the value of $x^\star_k$, we use the fact that $x^\star_k$ is located near $x_{\max}$, and that the interval $[x_{\max}-R,x_{\max}+R]$ can be included into a bigger interval $\Gamma_{R'}(x^\star_k)$ where $R'>R$.

\subsection{Numerical results}

We now apply this method to reconstruct different profiles of slowly varying waveguides. Firstly, we present in Figure \ref{increasing} the reconstruction $h^{\app}$ obtained for different increasing functions $h$. We choose four different waveguide profiles
\begin{equation}\label{h1}
\Omega_1: \quad h_1(x)=0.1+\gamma_1\left(\frac{x^5}{5}-32\frac{x^3}{3}+256x\right)\textbf{1}_{-4\leq x\leq 4}-\gamma_2\textbf{1}_{x<-4}+\gamma_2\textbf{1}_{x>4},
\end{equation}
\begin{equation}\label{h2}
\Omega_2:\quad h_2(x)=0.1+\gamma_3\left(\frac{x^5}{5}-2x^4+16\frac{x^3}{3}\right)\left(\textbf{1}_{0\leq x\leq 4}-\textbf{1}_{-4\leq x<0}\right)+ \gamma_4\left(\textbf{1}_{x>4}-\textbf{1}_{x<-4}\right),
\end{equation}
\begin{equation}\label{h3}
\Omega_3 :\quad h_3(x)=0.1+\gamma_5 x\textbf{1}_{-4\leq x\leq 4}+4\gamma_5\textbf{1}_{x>4}-4\gamma_4\textbf{1}_{x<-4},
\end{equation}
\begin{equation}\label{h4}
\Omega_4:\quad h_4(x)=0.1-4\gamma_5+4\gamma_5\frac{\sqrt{x+4}}{\sqrt{2}}\,\textbf{1}_{-4\leq x\leq 4}+8\gamma_5\textbf{1}_{x>4}.
\end{equation}
where $\gamma_1=3.10^{-6}$, $\gamma_2=8192/5.10^{-6}$, $\gamma_3=5.10^{-5}$, $\gamma_4=53/3.10^{-5}$, $\gamma_5=0.01/30$. All these profiles are represented in black in Figure \ref{increasing}. We impose that sources of \eqref{eqmatlab} need to be located at the right of the waveguide in order to generate significant locally resonant mode (see Figure \ref{direct_multi_freq2}), and we define
\begin{equation}\label{sourceincr}
f(x,y)=y\delta_{x=6}, \qquad b_{\top}(x)=\delta_{x=6}, \qquad b_{\bot}=0.
\end{equation}
We choose to work with the following sets of frequencies
\begin{equation}\label{Ki}
K_1=\{30.92:31.93:20\}, \quad K_2=\{30.9:31.95:20\}, \quad  K_3=K_4=\{31.01:31.83:20\}.
\end{equation}

The profile $h_1$ and the set $K_1$ satisfy all the assumptions of Theorem \ref{th2}, while the derivative of $h_2\in \mathcal{C}^2(\R)$ vanishes once. The last two profiles $h_3$ and $h_4$ are not in $\mathcal{C}^2(\R)$ and show corners where the derivative of the profile is not continuous. The function $h_3'$ is piecewise continuous and bounded, contrary to $h_4'$  which explodes at $x=-4$. 

We plot in red in Figure \ref{increasing} the reconstructions $h^{\app}$, slightly shifted, obtained using our method of reconstruction. We also compute the $L^\infty$ norm of $h-h^{\app}$ in each situation. We clearly see that the reconstruction deteriorates when $h'(x^\star_k)$ is too small or when the function $h$ is not sufficiently smooth. 

\begin{figure}[h]
\begin{center}
\scalebox{0.8}{\input{rec6}}\hspace{5mm} \scalebox{0.8}{
%
%
\begin{tikzpicture}

\begin{axis}[%
width=3.5in,
height=1in,
at={(0in,0in)},
scale only axis,
xmin=-5,
xmax=5,
ymin=0.092,
hide axis, 
ymax=0.102006666666667,
axis background/.style={fill=white},
legend style={legend cell align=left, align=left, draw=white!15!black}
]
\addplot [color=black]
  table[row sep=crcr]{%
-7	0.0982933333333333\\
-6.85858585858586	0.0982933333333333\\
-6.71717171717172	0.0982933333333333\\
-6.57575757575758	0.0982933333333333\\
-6.43434343434343	0.0982933333333333\\
-6.29292929292929	0.0982933333333333\\
-6.15151515151515	0.0982933333333333\\
-6.01010101010101	0.0982933333333333\\
-5.86868686868687	0.0982933333333333\\
-5.72727272727273	0.0982933333333333\\
-5.58585858585859	0.0982933333333333\\
-5.44444444444444	0.0982933333333333\\
-5.3030303030303	0.0982933333333333\\
-5.16161616161616	0.0982933333333333\\
-5.02020202020202	0.0982933333333333\\
-4.87878787878788	0.0982933333333333\\
-4.73737373737374	0.0982933333333333\\
-4.5959595959596	0.0982933333333333\\
-4.45454545454546	0.0982933333333333\\
-4.31313131313131	0.0982933333333333\\
-4.17171717171717	0.0982933333333333\\
-4.03030303030303	0.0982933333333333\\
-3.88888888888889	0.0982936840590019\\
-3.74747474747475	0.0982972311626275\\
-3.60606060606061	0.0983073224717585\\
-3.46464646464646	0.0983264747021903\\
-3.32323232323232	0.0983564338179318\\
-3.18181818181818	0.0983982428961426\\
-3.04040404040404	0.0984523099920701\\
-2.8989898989899	0.0985184760039864\\
-2.75757575757576	0.098596082538126\\
-2.61616161616162	0.0986840397736227\\
-2.47474747474747	0.0987808943274465\\
-2.33333333333333	0.0988848971193416\\
-2.19191919191919	0.0989940712367626\\
-2.05050505050505	0.0991062797998125\\
-1.90909090909091	0.0992192938261793\\
-1.76767676767677	0.0993308600960737\\
-1.62626262626263	0.0994387690171657\\
-1.48484848484848	0.0995409224895225\\
-1.34343434343434	0.0996354017705449\\
-1.2020202020202	0.099720535339905\\
-1.06060606060606	0.0997949667644835\\
-0.91919191919192	0.0998577225633063\\
-0.777777777777778	0.0999082800724822\\
-0.636363636363637	0.0999466353101399\\
-0.494949494949495	0.0999733708413651\\
-0.353535353535354	0.099989723643138\\
-0.212121212121212	0.09999765296927\\
-0.0707070707070709	0.0999999082153415\\
0.0707070707070709	0.100000091784659\\
0.212121212121212	0.10000234703073\\
0.353535353535354	0.100010276356862\\
0.494949494949495	0.100026629158635\\
0.636363636363637	0.10005336468986\\
0.777777777777778	0.100091719927518\\
0.91919191919192	0.100142277436694\\
1.06060606060606	0.100205033235517\\
1.2020202020202	0.100279464660095\\
1.34343434343434	0.100364598229455\\
1.48484848484848	0.100459077510478\\
1.62626262626263	0.100561230982834\\
1.76767676767677	0.100669139903926\\
1.90909090909091	0.100780706173821\\
2.05050505050505	0.100893720200188\\
2.19191919191919	0.101005928763237\\
2.33333333333333	0.101115102880658\\
2.47474747474747	0.101219105672553\\
2.61616161616162	0.101315960226377\\
2.75757575757576	0.101403917461874\\
2.8989898989899	0.101481523996014\\
3.04040404040404	0.10154769000793\\
3.18181818181818	0.101601757103857\\
3.32323232323232	0.101643566182068\\
3.46464646464647	0.10167352529781\\
3.60606060606061	0.101692677528242\\
3.74747474747475	0.101702768837372\\
3.88888888888889	0.101706315940998\\
4.03030303030303	0.101706666666667\\
4.17171717171717	0.101706666666667\\
4.31313131313131	0.101706666666667\\
4.45454545454546	0.101706666666667\\
4.5959595959596	0.101706666666667\\
4.73737373737374	0.101706666666667\\
4.87878787878788	0.101706666666667\\
5.02020202020202	0.101706666666667\\
5.16161616161616	0.101706666666667\\
5.3030303030303	0.101706666666667\\
5.44444444444444	0.101706666666667\\
5.58585858585859	0.101706666666667\\
5.72727272727273	0.101706666666667\\
5.86868686868687	0.101706666666667\\
6.01010101010101	0.101706666666667\\
6.15151515151515	0.101706666666667\\
6.29292929292929	0.101706666666667\\
6.43434343434343	0.101706666666667\\
6.57575757575758	0.101706666666667\\
6.71717171717172	0.101706666666667\\
6.85858585858586	0.101706666666667\\
7	0.101706666666667\\
};

\addplot [color=black]
  table[row sep=crcr]{%
-7	0.092\\
7	0.092\\
};

\addplot [color=red]
  table[row sep=crcr]{%
7	0.102006666666667\\
4	0.102006666666667\\
3.4	0.101969665164718\\
2.94626062729308	0.101788158493932\\
2.53473482654324	0.101607298741015\\
2.31529753323391	0.101427082453547\\
2.09189548788103	0.101247506203629\\
1.86329821637842	0.10106856658767\\
1.65360212915866	0.100890260226165\\
1.4288794947909	0.100712583763489\\
1.16766662294117	0.100535533867684\\
0.78362706150052	0.100359107230251\\
-0.453092004208764	0.100183300565941\\
-1.23413938374612	0.100008110612555\\
-1.54266569749724	0.0998335341307422\\
-1.77958329298222	0.099659567903797\\
-1.99289787057962	0.0994862087374644\\
-2.19518483535566	0.099313453459743\\
-2.40547492125573	0.0991412989206923\\
-2.62074337405034	0.0989697419922408\\
-2.88689939735507	0.0987987795679968\\
-3.31565557885294	0.0986284085630608\\
-4	0.0985933333333333\\
-7	0.0985933333333333\\
};

\end{axis}
\end{tikzpicture}
\scalebox{0.8}{\input{rec4}} \hspace{5mm} \scalebox{0.8}{
%
%
\begin{tikzpicture}

\begin{axis}[%
width=3.5in,
height=1in,
at={(0in,0in)},
scale only axis,
xmin=-5,
xmax=5,
ymin=0.092,
ymax=0.101609018174453,
axis background/.style={fill=white},
axis x line*=bottom,
axis y line*=left,
hide axis,
legend style={legend cell align=left, align=left, draw=white!15!black}
]

\addplot [color=black]
  table[row sep=crcr]{%
-5	0.0986666666666667\\
-4.8989898989899	0.0986666666666667\\
-4.7979797979798	0.0986666666666667\\
-4.6969696969697	0.0986666666666667\\
-4.5959595959596	0.0986666666666667\\
-4.49494949494949	0.0986666666666667\\
-4.39393939393939	0.0986666666666667\\
-4.29292929292929	0.0986666666666667\\
-4.19191919191919	0.0986666666666667\\
-4.09090909090909	0.0986666666666667\\
-3.98989898989899	0.0987614225406025\\
-3.88888888888889	0.098980936347194\\
-3.78787878787879	0.0991008926315193\\
-3.68686868686869	0.0991942450447354\\
-3.58585858585859	0.0992734002997043\\
-3.48484848484848	0.0993433589585634\\
-3.38383838383838	0.0994067337003642\\
-3.28282828282828	0.0994650938523382\\
-3.18181818181818	0.0995194695320891\\
-3.08080808080808	0.0995705800937857\\
-2.97979797979798	0.0996189514140923\\
-2.87878787878788	0.0996649817455035\\
-2.77777777777778	0.0997089812799608\\
-2.67676767676768	0.0997511972146642\\
-2.57575757575758	0.0997918303285369\\
-2.47474747474747	0.0998310463394739\\
-2.37373737373737	0.0998689839205145\\
-2.27272727272727	0.0999057605030461\\
-2.17171717171717	0.0999414765708512\\
-2.07070707070707	0.0999762188986377\\
-1.96969696969697	0.100010063035842\\
-1.86868686868687	0.100043075240153\\
-1.76767676767677	0.100075314002809\\
-1.66666666666667	0.100106831266313\\
-1.56565656565657	0.100137673407143\\
-1.46464646464646	0.100167882036619\\
-1.36363636363636	0.100197494659432\\
-1.26262626262626	0.100226545219596\\
-1.16161616161616	0.100255064556473\\
-1.06060606060606	0.100283080788355\\
-0.959595959595959	0.100310619637176\\
-0.858585858585859	0.100337704705043\\
-0.757575757575758	0.100364357711043\\
-0.656565656565657	0.100390598695086\\
-0.555555555555555	0.100416446194225\\
-0.454545454545454	0.100441917395864\\
-0.353535353535354	0.100467028271447\\
-0.252525252525253	0.100491793693594\\
-0.151515151515151	0.100516227539111\\
-0.0505050505050502	0.100540342779907\\
0.0505050505050502	0.100564151563519\\
0.151515151515151	0.100587665284635\\
0.252525252525253	0.100610894648846\\
0.353535353535354	0.100633849729613\\
0.454545454545454	0.100656540019319\\
0.555555555555555	0.100678974475142\\
0.656565656565657	0.100701161560369\\
0.757575757575758	0.100723109281708\\
0.858585858585859	0.100744825223036\\
0.959595959595959	0.100766316576024\\
1.06060606060606	0.100787590167955\\
1.16161616161616	0.100808652487061\\
1.26262626262626	0.100829509705637\\
1.36363636363636	0.10085016770117\\
1.46464646464646	0.100870632075681\\
1.56565656565657	0.100890908173469\\
1.66666666666667	0.100911001097414\\
1.76767676767677	0.100930915723968\\
1.86868686868687	0.100950656716983\\
1.96969696969697	0.100970228540461\\
2.07070707070707	0.100989635470342\\
2.17171717171717	0.101008881605417\\
2.27272727272727	0.101027970877428\\
2.37373737373737	0.101046907060453\\
2.47474747474747	0.101065693779616\\
2.57575757575758	0.101084334519189\\
2.67676767676768	0.101102832630137\\
2.77777777777778	0.101121191337153\\
2.87878787878788	0.101139413745217\\
2.97979797979798	0.10115750284573\\
3.08080808080808	0.101175461522247\\
3.18181818181818	0.101193292555839\\
3.28282828282828	0.101210998630128\\
3.38383838383838	0.101228582335994\\
3.48484848484848	0.101246046176006\\
3.58585858585859	0.101263392568577\\
3.68686868686869	0.101280623851876\\
3.78787878787879	0.101297742287503\\
3.88888888888889	0.101314750063958\\
3.98989898989899	0.101331649299906\\
4.09090909090909	0.101333333333333\\
4.19191919191919	0.101333333333333\\
4.29292929292929	0.101333333333333\\
4.39393939393939	0.101333333333333\\
4.49494949494949	0.101333333333333\\
4.5959595959596	0.101333333333333\\
4.6969696969697	0.101333333333333\\
4.7979797979798	0.101333333333333\\
4.8989898989899	0.101333333333333\\
5	0.101333333333333\\
};

\addplot [color=red]
  table[row sep=crcr]{%
4.15280508030634	0.101609018174453\\
3.2947429719728	0.101468218196651\\
2.49818988813794	0.101327809045252\\
1.76363989947096	0.101187789095252\\
1.06075765337634	0.101048156730646\\
0.418194869046171	0.100908910344361\\
-0.177702021882812	0.100770048338197\\
-0.732886730813372	0.100631569122764\\
-1.24117895070901	0.100493471117425\\
-1.70380783960414	0.100355752750232\\
-2.13500132448921	0.100218412457869\\
-2.50647201385863	0.100081448685589\\
-2.83919688290037	0.099944859887161\\
-3.1295055442451	0.099808644524808\\
-3.38117118301614	0.0996728010691496\\
-3.59450084312227	0.0995373279991456\\
-3.79696295565863	0.0994022238020389\\
-3.94918808128174	0.0992674869732994\\
-4.10606783275681	0.0991331160165677\\
-4.30760348431977	0.0989991094436002\\
};

\addplot [color=red]
  table[row sep=crcr]{%
5	0.101609018174453\\
4.15280508030634	0.101609018174453\\
};

\addplot[color=black] 
table[row sep=crcr]{
-7 0.092 \\
7 0.092 \\};

\addplot [color=red]
  table[row sep=crcr]{%
-5	0.0989991094436002\\
-4.30760348431977	0.0989991094436002\\
};

\end{axis}

\end{tikzpicture}
\end{center}
\caption{\label{increasing} Reconstruction of four different increasing profiles. In black, the initial shape of $\Omega_5$ (not scaled), and in red the reconstruction, slightly shifted for comparison purposes. In each case, $K=K_i$ is defined in \eqref{Ki}, $h=h_i$ is defined in \eqref{h1}, \eqref{h2}, \eqref{h3}, \eqref{h4}, and the sources of \eqref{eqmatlab} are defined in \eqref{sourceincr}. We also compute the relative error $\Vert h-h^{\app}\Vert_\infty/h_{\max}$ between the exact reconstruction and the approximation. Top left: $i=1$, $\Vert h-h^{\app}\Vert_\infty/h_{\max}=0.49 \%$. Top right: $i=2$, $\Vert h-h^{\app}\Vert_\infty/h_{\max}=0.94 \%$. Bottom left: $i=3$, $\Vert h-h^{\app}\Vert_\infty/h_{\max}=0.40 \%$. Bottom right: $i=4$, $\Vert h-h^{\app}\Vert_\infty/h_{\max}=1.6 \%$.}
\end{figure}

Secondly, we present in Figure \ref{recgene} the reconstruction $h^{\app}$ obtained for different non monotoneous widths. We choose three different profiles defined by 
\begin{equation}\label{h5}
\Omega_5: \quad h_5(x)=0.1+\gamma_6\sin\left(\frac{\pi}{10}(x+5)\right)\textbf{1}_{-5\leq x\leq 5},
\end{equation}
\begin{equation}\label{h6}
\Omega_6: \quad h_6(x)=0.1-\gamma_7(x+5)\textbf{1}_{-5\leq x\leq 0} +\frac{\gamma_6}{4}(x-4)\textbf{1}_{0<x\leq 4},
\end{equation}
\begin{equation}\label{h7}
\Omega_7: \quad h_7(x)=0.1-\gamma_8\sqrt{3}+2\gamma_8\sin\left(\frac{4\pi\sqrt{x+5}}{3}\right)\textbf{1}_{-3.5\leq x\leq 4}+2\gamma_8\sin\left(\frac{4\pi\sqrt{1.5}}{3}\right)\textbf{1}_{x<-3.5},
\end{equation}
where $\gamma_6=25.10^{-4}$, $\gamma_7=5.10^{-4}$, $\gamma_8=4.10^{-4}$. All these profiles are represented in black in Figure \ref{recgene}. The profile $h_5$ represent a dilation of the waveguide, while $h_6$ represent a compression of the waveguide. The profile $h_7$ is the more general one with both compressions and dilations. Again, sources in \eqref{eqmatlab} are located in every large area of the waveguide to generate significant locally resonant modes and are defined by
\begin{equation}\label{f5}
f^5(x,y)=\delta_{x=0}y, \quad b_{\top}^5(x)=\delta_{x=0}(x), \quad b_{\bot}^5=0,
\end{equation}
\begin{equation}\label{f6}
f^6(x,y)=(\delta_{x=6}+\delta_{x=-6})y, \quad b_{\top}^6(x)=(\delta_{x=6}+\delta_{x=-6}), \quad b_{\bot}^6=0,
\end{equation}
\begin{equation}\label{f7}
f^7(x,y)=(\delta_{x=-1.5}+\delta_{x=6})y, \quad b_{\top}^7(x)=(\delta_{x=-1.5}+\delta_{x=6}), \quad b_{\bot}^7=0.
\end{equation}
We also define the frequency sets
\begin{equation}\label{Ki2}
K_5=\{30.65:31.4:20\}, \quad K_6=\{31.42:32.21:20\}, \quad K_7=\{30.97:31.43:20\}. 
\end{equation}
We plot in red in Figure \ref{recgene} the reconstructions $h^{\app}$, slightly shifted, obtained using our method of reconstruction. We also compute the $L^\infty$ norm of $h-h^{\app}$ in each situation.

\begin{figure}[h]
\begin{center}
\scalebox{0.8}{
%
%
\begin{tikzpicture}

\begin{axis}[%
width=3.5in,
height=1in,
at={(0in,0in)},
scale only axis,
xmin=-6,
xmax=6,
ymin=0.0929,
ymax=0.104,
hide axis,
axis background/.style={fill=white},
legend style={legend cell align=left, align=left, draw=white!15!black}
]

\addplot [color=black]
  table[row sep=crcr]{%
-6	0.1\\
-5.87878787878788	0.1\\
-5.75757575757576	0.1\\
-5.63636363636364	0.1\\
-5.51515151515152	0.1\\
-5.39393939393939	0.1\\
-5.27272727272727	0.1\\
-5.15151515151515	0.1\\
-5.03030303030303	0.1\\
-4.90909090909091	0.100071390126984\\
-4.78787878787879	0.100166476329958\\
-4.66666666666667	0.100261321158169\\
-4.54545454545454	0.100355787095683\\
-4.42424242424242	0.100449737175922\\
-4.3030303030303	0.100543035180251\\
-4.18181818181818	0.100635545835484\\
-4.06060606060606	0.100727135010013\\
-3.93939393939394	0.100817669908294\\
-3.81818181818182	0.100907019263377\\
-3.6969696969697	0.10099505352724\\
-3.57575757575758	0.101081645058616\\
-3.45454545454545	0.101166668308064\\
-3.33333333333333	0.10125\\
-3.21212121212121	0.101331519311437\\
-3.09090909090909	0.101411108047167\\
-2.96969696969697	0.101488650811132\\
-2.84848484848485	0.101564035173737\\
-2.72727272727273	0.101637151834863\\
-2.60606060606061	0.101707894782342\\
-2.48484848484848	0.101776161445663\\
-2.36363636363636	0.101841852844691\\
-2.24242424242424	0.101904873733177\\
-2.12121212121212	0.101965132736857\\
-2	0.102022542485937\\
-1.87878787878788	0.10207701974177\\
-1.75757575757576	0.102128485517542\\
-1.63636363636364	0.102176865192799\\
-1.51515151515152	0.102222088621637\\
-1.39393939393939	0.102264090234408\\
-1.27272727272727	0.102302809132787\\
-1.15151515151515	0.102338189178075\\
-1.03030303030303	0.102370179072587\\
-0.909090909090909	0.102398732434036\\
-0.787878787878788	0.102423807862777\\
-0.666666666666667	0.102445369001835\\
-0.545454545454546	0.102463384589619\\
-0.424242424242424	0.102477828505251\\
-0.303030303030303	0.102488679806433\\
-0.181818181818182	0.102495922759815\\
-0.0606060606060606	0.102499546863809\\
0.0606060606060606	0.102499546863809\\
0.181818181818182	0.102495922759815\\
0.303030303030303	0.102488679806433\\
0.424242424242424	0.102477828505251\\
0.545454545454546	0.102463384589619\\
0.666666666666667	0.102445369001835\\
0.787878787878788	0.102423807862777\\
0.909090909090909	0.102398732434036\\
1.03030303030303	0.102370179072588\\
1.15151515151515	0.102338189178075\\
1.27272727272727	0.102302809132787\\
1.39393939393939	0.102264090234408\\
1.51515151515152	0.102222088621637\\
1.63636363636364	0.102176865192799\\
1.75757575757576	0.102128485517542\\
1.87878787878788	0.10207701974177\\
2	0.102022542485937\\
2.12121212121212	0.101965132736857\\
2.24242424242424	0.101904873733177\\
2.36363636363636	0.101841852844691\\
2.48484848484848	0.101776161445663\\
2.60606060606061	0.101707894782342\\
2.72727272727273	0.101637151834863\\
2.84848484848485	0.101564035173737\\
2.96969696969697	0.101488650811132\\
3.09090909090909	0.101411108047167\\
3.21212121212121	0.101331519311437\\
3.33333333333333	0.10125\\
3.45454545454546	0.101166668308064\\
3.57575757575758	0.101081645058616\\
3.6969696969697	0.10099505352724\\
3.81818181818182	0.100907019263377\\
3.93939393939394	0.100817669908294\\
4.06060606060606	0.100727135010013\\
4.18181818181818	0.100635545835484\\
4.3030303030303	0.100543035180251\\
4.42424242424242	0.100449737175922\\
4.54545454545454	0.100355787095683\\
4.66666666666667	0.100261321158169\\
4.78787878787879	0.100166476329958\\
4.90909090909091	0.100071390126984\\
5.03030303030303	0.1\\
5.15151515151515	0.1\\
5.27272727272727	0.1\\
5.39393939393939	0.1\\
5.51515151515152	0.1\\
5.63636363636364	0.1\\
5.75757575757576	0.1\\
5.87878787878788	0.1\\
6	0.1\\
};

\addplot [color=red]
  table[row sep=crcr]{%
-0.00153005984029793	0.102798944652196\\
-0.785148742456522	0.102667107559949\\
-1.30107876308287	0.102535609177368\\
-1.68848549942323	0.102404448200831\\
-2.02209249144821	0.102273623333401\\
-2.31958400780897	0.102143133284774\\
-2.55515006926237	0.102012976771247\\
-2.77918578919583	0.101883152515667\\
-2.98997620014521	0.101753659247397\\
-3.19100074177301	0.101624495702268\\
-3.38395048570807	0.101495660622541\\
-3.56657794928122	0.101367152756868\\
-3.74592669871445	0.101238970860245\\
-3.91868503648166	0.101111113693981\\
-4.08788248040528	0.100983580025649\\
-4.25547718939068	0.100856368629053\\
-4.42289441889939	0.100729478284186\\
-4.59485645444949	0.10060290777719\\
-4.78493147158005	0.100476655900321\\
-5.05787095937147	0.100350721451904\\
};

\addplot [color=red]
  table[row sep=crcr]{%
0.000575362637826596	0.102798944652196\\
0.785148742456547	0.102667107559949\\
1.30107876308287	0.102535609177368\\
1.68848549942323	0.102404448200831\\
2.02209249144821	0.102273623333401\\
2.31958400780897	0.102143133284774\\
2.55515006926237	0.102012976771247\\
2.77918578919583	0.101883152515667\\
2.98997620014521	0.101753659247397\\
3.19100074177301	0.101624495702268\\
3.38395048570807	0.101495660622541\\
3.56657794928122	0.101367152756868\\
3.74592669871445	0.101238970860245\\
3.91868503648166	0.101111113693981\\
4.08788248040528	0.100983580025649\\
4.25547718939068	0.100856368629053\\
4.42289441889939	0.100729478284186\\
4.59485645444949	0.10060290777719\\
4.78493147158004	0.100476655900321\\
5.05787095937147	0.100350721451904\\
};

\addplot [color=red]
  table[row sep=crcr]{%
6	0.100350721451904\\
5.05787095937147	0.100350721451904\\
};

\addplot [color=red]
  table[row sep=crcr]{%
-6	0.100350721451904\\
-5.05787095937147	0.100350721451904\\
};

\addplot [color=black]
  table[row sep=crcr]{%
-6	0.093\\
6 0.093\\
};
\end{axis}
\end{tikzpicture}
\scalebox{0.8}{
%
%
\begin{tikzpicture}

\begin{axis}[%
width=4in,
height=1in,
at={(0in,0in)},
scale only axis,
xmin=-6,
xmax=6,
ymin=0.094,
ymax=0.102,
hide axis,
axis background/.style={fill=white},
legend style={legend cell align=left, align=left, draw=white!15!black}
]
\addplot [color=black]
  table[row sep=crcr]{%
-6	0.0999616366371927\\
-5.87878787878788	0.0999616366371927\\
-5.75757575757576	0.0999616366371927\\
-5.63636363636364	0.0999616366371927\\
-5.51515151515152	0.0999616366371927\\
-5.39393939393939	0.0999616366371927\\
-5.27272727272727	0.0999616366371927\\
-5.15151515151515	0.0999616366371927\\
-5.03030303030303	0.0999616366371927\\
-4.90909090909091	0.0999616366371927\\
-4.78787878787879	0.0999616366371927\\
-4.66666666666667	0.0999616366371927\\
-4.54545454545454	0.0999616366371927\\
-4.42424242424242	0.0999616366371927\\
-4.3030303030303	0.0999616366371927\\
-4.18181818181818	0.0999616366371927\\
-4.06060606060606	0.0999616366371927\\
-3.93939393939394	0.0999616366371927\\
-3.81818181818182	0.0999616366371927\\
-3.6969696969697	0.0999616366371927\\
-3.57575757575758	0.0999616366371927\\
-3.45454545454545	0.0999888303537817\\
-3.33333333333333	0.100078539241176\\
-3.21212121212121	0.10018836423498\\
-3.09090909090909	0.1003124149704\\
-2.96969696969697	0.100445246697379\\
-2.84848484848485	0.100581938921968\\
-2.72727272727273	0.100718138916011\\
-2.60606060606061	0.100850079043653\\
-2.48484848484848	0.100974574641194\\
-2.36363636363636	0.101089007594362\\
-2.24242424242424	0.101191299583359\\
-2.12121212121212	0.101279878083132\\
-2	0.101353637531035\\
-1.87878787878788	0.101411897549878\\
-1.75757575757576	0.101454359702865\\
-1.63636363636364	0.101481063930527\\
-1.51515151515152	0.101492345558981\\
-1.39393939393939	0.101488793559165\\
-1.27272727272727	0.10147121056736\\
-1.15151515151515	0.101440575040186\\
-1.03030303030303	0.101398005806045\\
-0.909090909090909	0.101344729184849\\
-0.787878787878788	0.101282048774839\\
-0.666666666666667	0.101211317946531\\
-0.545454545454546	0.101133915036576\\
-0.424242424242424	0.101051221196844\\
-0.303030303030303	0.100964600824526\\
-0.181818181818182	0.10087538447624\\
-0.0606060606060606	0.100784854151878\\
0.0606060606060606	0.100694230821302\\
0.181818181818182	0.100604664058278\\
0.303030303030303	0.100517223640416\\
0.424242424242424	0.100432892971053\\
0.545454545454546	0.100352564178248\\
0.666666666666667	0.100277034747143\\
0.787878787878788	0.10020700554449\\
0.909090909090909	0.100143080097805\\
1.03030303030303	0.100085764996273\\
1.15151515151515	0.100035471285829\\
1.27272727272727	0.0999925167367554\\
1.39393939393939	0.0999571288684098\\
1.51515151515152	0.0999294486222348\\
1.63636363636364	0.0999095345809337\\
1.75757575757576	0.0998973676384635\\
1.87878787878788	0.0998928560322859\\
2	0.0998958406560292\\
2.12121212121212	0.0999061005773121\\
2.24242424242424	0.0999233586919168\\
2.36363636363636	0.0999472874517412\\
2.48484848484848	0.0999775146099796\\
2.60606060606061	0.100013628932761\\
2.72727272727273	0.100055185831986\\
2.84848484848485	0.100101712879365\\
2.96969696969697	0.100152715166607\\
3.09090909090909	0.100207680481423\\
3.21212121212121	0.100266084273386\\
3.33333333333333	0.100327394387834\\
3.45454545454546	0.100391075549818\\
3.57575757575758	0.1004565935837\\
3.6969696969697	0.100523419357251\\
3.81818181818182	0.100591032442201\\
3.93939393939394	0.100658924485919\\
4.06060606060606	0.100692820323028\\
4.18181818181818	0.100692820323028\\
4.3030303030303	0.100692820323028\\
4.42424242424242	0.100692820323028\\
4.54545454545454	0.100692820323028\\
4.66666666666667	0.100692820323028\\
4.78787878787879	0.100692820323028\\
4.90909090909091	0.100692820323028\\
5.03030303030303	0.100692820323028\\
5.15151515151515	0.100692820323028\\
5.27272727272727	0.100692820323028\\
5.39393939393939	0.100692820323028\\
5.51515151515152	0.100692820323028\\
5.63636363636364	0.100692820323028\\
5.75757575757576	0.100692820323028\\
5.87878787878788	0.100692820323028\\
6	0.100692820323028\\
};

\addplot [color=black]
  table[row sep=crcr]{%
-7	0.094\\
7	0.094\\
};

\addplot [color=red]
  table[row sep=crcr]{%
-1.5	0.1018\\
-1.74798000708606	0.101739866115266\\
-1.99937291817646	0.101583614709515\\
-2.2221656085855	0.101427843922559\\
-2.41781535192187	0.101272551540276\\
-2.5883139179328	0.101117735362126\\
-2.73817267520958	0.10096339320104\\
-2.89406349312197	0.100809522883325\\
-3.058862406904	0.100656122248556\\
-3.22307385327289	0.100503189149478\\
-3.44788525280986	0.100350721451904\\
-3.5	0.100261636637193\\
-7	0.100261636637193\\
};

\addplot [color=red]
  table[row sep=crcr]{%
-1.5	0.1018\\
-1.23464231455301	0.101739866115266\\
-0.893301684596608	0.101572731409822\\
-0.593642809347912	0.101406146548572\\
-0.331580651325293	0.101240108822634\\
-0.10102240857888	0.101074615540892\\
0.140360789282997	0.100909664029848\\
0.374855688060956	0.100745251633479\\
0.615127330727862	0.100581375713099\\
0.899145936852404	0.100418033647208\\
1.29091172163694	0.100255222831365\\
2	0.100192828716468\\
};

\addplot [color=red]
  table[row sep=crcr]{%
6	0.100992820323028\\
4.8	0.100992820323028\\
4.5267397659487	0.100927567379558\\
3.90082370365366	0.100852416096974\\
3.61795513976012	0.100777376980484\\
3.34643663009154	0.100702449779156\\
3.20927722450648	0.100627634242808\\
3.10350721439954	0.100552930122002\\
2.99187701550717	0.100478337168042\\
2.85584665898262	0.100403855132973\\
2.68045791931868	0.100329483769575\\
2.4888524419966	0.100255222831365\\
2	0.100192828716468\\
};

\end{axis}
\end{tikzpicture}
\end{center}
\caption{\label{recgene} Reconstruction of three different general profiles. In black, the initial shape of $\Omega_6$, and in red, the reconstruction, slightly shifted for comparison purposes. In each case, $K=K_i$ is defined in \eqref{Ki2}, $h=h_i$ is defined in \eqref{h5}, \eqref{h6}, \eqref{h7}, and sources of \eqref{eqmatlab} are defined in \eqref{f5}, \eqref{f6}, \eqref{f7}. We also compute the relative error $\Vert h-h^{\app}\Vert_\infty/h_{\max}$ between the exact reconstruction and the approximation. Top left: $i=5$, $\Vert h-h^{\app}\Vert_\infty/h_{\max}=0.57\%$. Top right: $i=6$, $\Vert h-h^{\app}\Vert_\infty/h_{\max}=0.81\%$. Bottom: $i=7$, $\Vert h-h^{\app}\Vert_\infty/h_{\max}=0.97\%$. }
\end{figure}
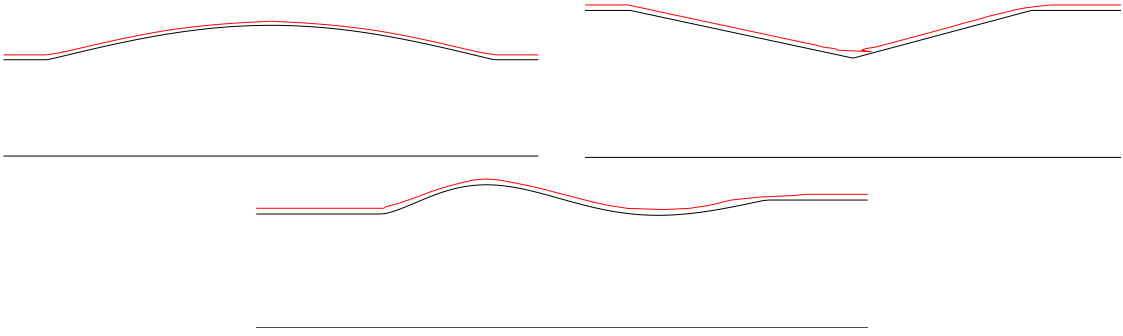

To conclude, we have provided in this section a method to reconstruct slowly varying width defects given one side boundary measurements of the wavefield at local resonant frequencies. This method works for every type of source and does not require any \textit{a priori} information on the source except the fact that it is located away from the defect. This reconstruction method is very sensitive to small defects and works numerically to reconstruct dilatations or compressions of the waveguide. 

\section{Conclusion}

In this article, we have used the study of the forward problem in slowly varying waveguide presented in \cite{bonnetier2} and the approximation of the solutions as combination of Airy functions to develop a new inverse method to reconstruct the width of slowly varying waveguides. Given wavefield measurements at the surface of the waveguide for different locally resonant frequencies, we reconstruct the associated locally resonant points which lead to a good approximation of the width of the waveguide.

One main advantage of this new method is that is does not require any \emph{a priori} information on the sources, and we believe that it could be applied to develop new non destructive passive monitoring methods. Moreover, using locally resonant modes, this method can detect small variations of the width with a high sensibility. More importantly, when we compare this new method with the usual back scattering method developed for instance in \cite{bonnetier1}, we notice that this new method seems a lot more precise: while the best relative reconstruction errors are of the order of $8\%$ in \cite{bonnetier1}, we reach in this papers relative errors of the order of less than $1\%$. Even if measurements are not taken in the same area, this improvement of the reconstruction precision must be underlined. 

We believe that this work could be extended to elastic waveguides in two dimensions, using the modal decomposition in Lamb modes presented in \cite{pagneux2}. After generalizing it to elastic waveguide, we plan in future works to test the method developed in this article on experimental data to see if the good numerical results obtained using data generated by finite element methods can be reproduced. Indeed, different physical experiments have already been conducted to recover width defects using locally resonant frequencies (see \cite{balogun1,ces1}) and we hope that the present reconstruction method could both justify and improve these width reconstructions.

\section*{Appendix A: Identification of $\supp(h)$, $k_{\min}$ and $k_{\max}$}

Giving a compactly perturbed waveguide $\Omega$, we describe here how one side boundary measurements enable to approximate very precisely the quantities $\supp(h)$, $k_{\min}$ and $k_{\max}$. The article \cite{bourgeois1} mentions that the problem \eqref{eqmatlab} is not well-defined when $k_n(x)=0$ in a non-trivial interval, which especially happens when $k=n\pi/h_{\min}$ or $k=n\pi/h_{\max}$. Numerically, this results in an explosion of the solution when $k$ tends to $n\pi/h_{\min}$ (resp. $n\pi/h_{\max}$) with a source term located in the area where $h(x)=h_{\min}$ (resp. $h(x)=h_{\max}$). This behavior is illustrated in Figure \ref{explosion}. 

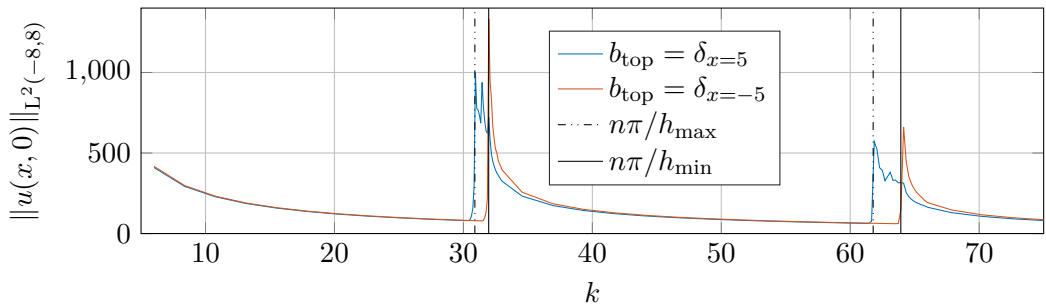
\begin{figure}[h]
\begin{center}
%
%
\definecolor{mycolor1}{rgb}{0.00000,0.44700,0.74100}%
\definecolor{mycolor2}{rgb}{0.85000,0.32500,0.09800}%
\begin{tikzpicture}

\begin{axis}[%
width=12cm,
height=3cm,
at={(0in,0in)},
scale only axis,
xlabel={$k$},
ylabel={$\Vert u(x,0)\Vert_{\text{L}^2(-8,8)}$},
xmin=5,
grid=major,
xmax=75,
ymin=0,
ymax=1400,
axis background/.style={fill=white},
legend style={at={(8.5cm,2.7cm)},legend cell align=left, align=left, draw=white!15!black}
]
\addplot [color=mycolor1]
  table[row sep=crcr]{%
6	410.245138762319\\
8.37931034482759	292.967205367403\\
10.7586206896552	228.394659208477\\
13.1379310344828	186.681966481227\\
15.5172413793103	158.056947753752\\
17.8965517241379	136.816202145284\\
20.2758620689655	120.684274659858\\
22.6551724137931	107.830082102251\\
25.0344827586207	97.4826672528024\\
27.4137931034483	88.8749821547431\\
29.7931034482759	81.7786237711394\\
30	81.258276741279\\
30.1034482758621	81.0217273192172\\
30.2068965517241	80.8192270415955\\
30.3103448275862	80.7149319928121\\
30.4137931034483	81.0361808252224\\
30.5172413793103	86.5956121981685\\
30.6206896551724	107.656256907935\\
30.7241379310345	151.390173055551\\
30.8275862068966	290.980869676186\\
30.9310344827586	997.918142378506\\
31.0344827586207	776.985489427848\\
31.1379310344828	767.007117709666\\
31.2413793103448	734.939558861001\\
31.3448275862069	685.437802785247\\
31.448275862069	938.326857508582\\
31.551724137931	786.852117337351\\
31.6551724137931	696.970238208881\\
31.7586206896552	636.257381220773\\
31.8620689655172	623.945818851526\\
31.9655172413793	720.699416155504\\
32.0689655172414	582.502411397012\\
32.1724137931034	494.074055903955\\
32.1724137931034	494.074055903955\\
32.2758620689655	449.2967518857\\
32.3793103448276	420.883211672971\\
32.4827586206897	395.985318088254\\
32.5862068965517	377.627611991647\\
32.6896551724138	363.308504622313\\
32.7931034482759	349.929754518736\\
32.8965517241379	336.634893010215\\
33	324.954963802868\\
34.551724137931	232.618943951631\\
36.9310344827586	173.937761568224\\
39.3103448275862	143.384741342182\\
41.6896551724138	123.597726853387\\
44.0689655172414	109.506334390173\\
46.448275862069	98.8694410874196\\
48.8275862068966	90.4245504418973\\
51.2068965517241	83.1753371885913\\
53.5862068965517	77.2631193482921\\
55.9655172413793	72.2097686658713\\
58.3448275862069	68.0106981873192\\
60	65.1117454649222\\
60.2068965517241	64.7869198058239\\
60.4137931034483	64.4719137956091\\
60.6206896551724	64.164285272805\\
60.7241379310345	64.016558747165\\
60.8275862068966	63.871568318067\\
61.0344827586207	63.6002947250793\\
61.2413793103448	63.3665795188129\\
61.448275862069	63.3510305496846\\
61.6551724137931	76.8542605276116\\
61.8620689655172	570.901610344758\\
62.0689655172414	523.267860178151\\
62.2758620689655	408.532154144122\\
62.4827586206897	393.76490030418\\
62.6896551724138	328.537244014766\\
62.8965517241379	354.171775639734\\
63.1034482758621	379.984023476524\\
63.1034482758621	379.984023476524\\
63.3103448275862	330.883057724956\\
63.5172413793103	331.243458885134\\
63.7241379310345	315.378664184825\\
63.9310344827586	316.444150550194\\
64.1379310344828	312.935187849312\\
64.3448275862069	253.785756207483\\
64.551724137931	227.960356880653\\
64.7586206896552	211.582614645311\\
64.9655172413793	199.928963036967\\
65.1724137931034	190.831594460409\\
65.3793103448276	183.620520910992\\
65.4827586206897	179.160470828435\\
65.5862068965517	177.144852888643\\
65.7931034482759	170.885122926828\\
66	163.254497517519\\
67.8620689655172	129.689871066379\\
70.2413793103448	106.362993679447\\
72.6206896551724	91.4595586510619\\
75	80.8603858835082\\
};
\addlegendentry{$b_\top=\delta_{x=5}$}

\addplot [color=mycolor2]
  table[row sep=crcr]{%
6	417.299041468994\\
8.37931034482759	298.001091907732\\
10.7586206896552	232.314403744097\\
13.1379310344828	189.879983425277\\
15.5172413793103	160.75879713723\\
17.8965517241379	139.148177746629\\
20.2758620689655	122.734084435149\\
22.6551724137931	109.651316722234\\
25.0344827586207	99.1153777548369\\
27.4137931034483	90.3394231908115\\
29.7931034482759	83.0065133712697\\
30	82.4204967346873\\
30.1034482758621	82.1296294914368\\
30.2068965517241	81.8376864963243\\
30.3103448275862	81.5504381498211\\
30.4137931034483	81.2628588888073\\
30.5172413793103	80.9817577328703\\
30.6206896551724	80.7014640488677\\
30.7241379310345	80.4280988622646\\
30.8275862068966	80.1564941496669\\
30.9310344827586	79.8907698270402\\
31.0344827586207	79.6282008581382\\
31.1379310344828	79.3710251609212\\
31.2413793103448	79.1218710375566\\
31.3448275862069	78.889565112494\\
31.448275862069	78.7378457230911\\
31.551724137931	81.7580555999544\\
31.6551724137931	100.08966590555\\
31.7586206896552	134.012895243036\\
31.8620689655172	218.259786052287\\
31.9655172413793	1334.10185819293\\
32.0689655172414	944.098973008468\\
32.1724137931034	840.036197960668\\
32.1724137931034	840.036197960668\\
32.2758620689655	668.572659473977\\
32.3793103448276	591.036320939775\\
32.4827586206897	527.168019857259\\
32.5862068965517	502.634255227743\\
32.6896551724138	459.506778371188\\
32.7931034482759	433.134667642224\\
32.8965517241379	416.922937897623\\
33	395.201527455345\\
34.551724137931	256.911053199194\\
36.9310344827586	185.189044116559\\
39.3103448275862	150.31541885455\\
41.6896551724138	128.812766426102\\
44.0689655172414	113.513600038883\\
46.448275862069	101.997476882375\\
48.8275862068966	92.9772658540498\\
51.2068965517241	85.7042003742583\\
53.5862068965517	79.3683374746706\\
55.9655172413793	74.0474846854562\\
58.3448275862069	69.4964246893368\\
60	66.793602169312\\
60.2068965517241	66.448413449259\\
60.4137931034483	66.0997833344706\\
60.6206896551724	65.7664836630519\\
60.7241379310345	65.5996922061466\\
60.8275862068966	65.4363189215697\\
61.0344827586207	65.1022451860952\\
61.2413793103448	64.7725679330173\\
61.448275862069	64.4403237439952\\
61.6551724137931	64.111193923064\\
61.8620689655172	63.7767571942917\\
62.0689655172414	63.4553476045515\\
62.2758620689655	63.1312423879686\\
62.4827586206897	62.7994909374348\\
62.6896551724138	62.4752313870747\\
62.8965517241379	62.1465107346859\\
63.1034482758621	61.8165855465304\\
63.1034482758621	61.8165855465304\\
63.3103448275862	61.4956654630587\\
63.5172413793103	61.2090467576091\\
63.7241379310345	62.1735580650928\\
63.9310344827586	147.764125777794\\
64.1379310344828	661.548372831516\\
64.3448275862069	443.148475331862\\
64.551724137931	351.604865907013\\
64.7586206896552	307.615706027244\\
64.9655172413793	276.237214280065\\
65.1724137931034	253.30393989544\\
65.3793103448276	236.634874680944\\
65.4827586206897	228.925986223694\\
65.5862068965517	221.243142053035\\
65.7931034482759	203.860998070681\\
66	192.337342169935\\
67.8620689655172	146.400226711883\\
70.2413793103448	116.254341035746\\
72.6206896551724	98.080019190676\\
75	85.9578703885515\\
};
\addlegendentry{$b_\top=\delta_{x=-5}$}

\addplot [color=black, dash dot dot]
  table[row sep=crcr]{%
30.8887583926631	0\\
30.8887583926631	1400\\
};
\addlegendentry{$n\pi/h_{\max}$}

\addplot [color=black]
  table[row sep=crcr]{%
31.961401114926	0\\
31.961401114926	1400\\
};
\addlegendentry{$n\pi/h_{\min}$}

\addplot [color=black]
  table[row sep=crcr]{%
63.922802229852	0\\
63.922802229852	1400\\
};

\addplot [color=black, dash dot dot]
  table[row sep=crcr]{%
61.7775167853263	0\\
61.7775167853263	1400\\
};

\end{axis}
\end{tikzpicture}%
\caption{\label{explosion} $\text{L}^2$-norm of $u(x,0)$ on the interval $(-8,8)$ with respect to $k$ for a source $b_\top=\delta_{x=-5}$ at the left of $\supp(h)$, and a source $b_\top=\delta_{x=5}$ at the right of $\supp(h)$. For comparison purposes, $n\pi/h_{\max}$ and $n\pi/h_{\min}$ are plotted for $n=1$ and $n=2$.}
\end{center}
\end{figure}

Measuring the surface wavefield while $k$ varies and detecting its explosions provides a good approximation of the width at the left and the right of the waveguide. Then, we choose a frequency $k=n\pi/h_{\max}$ or $k=n\pi/h_{\min}$ and we move the sources while measuring the amplitude of the wavefield. It the source is located outside of the support of $h$, the wavefield is supposed to explode, which provides a good approximation of the support of $h$. This behavior is illustrated in Figure \ref{explosionsource}. 

\begin{figure}[h]
\begin{center}
%
%
\definecolor{mycolor1}{rgb}{0.00000,0.44700,0.74100}%
\definecolor{mycolor2}{rgb}{0.85000,0.32500,0.09800}%
\begin{tikzpicture}

\begin{axis}[%
width=12cm,
height=3.5cm,
scale only axis,
xmin=-8,
xmax=8,
ymin=0,
grid=major,
ymax=300,
xlabel={$s$},
ylabel={$\Vert u(x,0)\Vert_{\text{L}^2(-8,8)}$},
axis background/.style={fill=white},
legend style={at={(7cm,3.5cm)},legend cell align=left, align=left, draw=white!15!black}
]
\addplot [color=mycolor1]
  table[row sep=crcr]{%
-8	257.79473896045\\
-7.79746835443038	257.500960002422\\
-7.59493670886076	256.640053798649\\
-7.39240506329114	255.20864285822\\
-7.18987341772152	253.19717629802\\
-6.9873417721519	250.591082753004\\
-6.78481012658228	247.372478656442\\
-6.58227848101266	243.522611646454\\
-6.37974683544304	239.025587384581\\
-6.17721518987342	233.873481081541\\
-5.9746835443038	228.074442203938\\
-5.77215189873418	221.664398499861\\
-5.56962025316456	214.72497486165\\
-5.36708860759494	207.409970982114\\
-5.16455696202532	199.983538603264\\
-4.9620253164557	192.874412364155\\
-4.75949367088608	186.743487601021\\
-4.55696202531646	182.552809090449\\
-4.35443037974684	181.596428945067\\
-4.15189873417722	185.451932960771\\
-3.9493670886076	192.908848652084\\
-3.74683544303797	180.647316671848\\
-3.54430379746835	128.787416692778\\
-3.34177215189873	93.8751631259312\\
-3.13924050632911	140.162741514071\\
-2.93670886075949	151.900683733612\\
-2.73417721518987	100.399992683896\\
-2.53164556962025	107.700870696516\\
-2.32911392405063	145.667939201803\\
-2.12658227848101	107.161155475572\\
-1.92405063291139	98.9701092434727\\
-1.72151898734177	139.111727369496\\
-1.51898734177215	101.170093225401\\
-1.31645569620253	103.25832096072\\
-1.11392405063291	133.371318206171\\
-0.911392405063291	89.0931722385449\\
-0.708860759493671	115.86459466311\\
-0.506329113924051	120.464184969915\\
-0.30379746835443	86.0550876936622\\
-0.10126582278481	126.639464731257\\
0.10126582278481	96.692050528981\\
0.30379746835443	105.136428869084\\
0.506329113924051	118.700583885827\\
0.708860759493671	85.6063192034788\\
0.911392405063291	122.392182884456\\
1.11392405063291	91.0678523587159\\
1.31645569620253	109.587342353088\\
1.51898734177215	107.265731199676\\
1.72151898734177	92.9866151363349\\
1.92405063291139	116.959212698144\\
2.12658227848101	85.0516088977993\\
2.32911392405063	116.854370055179\\
2.53164556962025	88.6377678631456\\
2.73417721518987	109.987302857864\\
2.93670886075949	97.2133889655881\\
3.13924050632911	100.77219672181\\
3.34177215189873	104.832892302709\\
3.54430379746835	92.7901429013525\\
3.74683544303797	109.378195154259\\
3.9493670886076	87.8618807895942\\
4.15189873417722	111.109002968433\\
4.35443037974684	86.4762990877706\\
4.55696202531646	111.185308310165\\
4.75949367088608	87.6221334539322\\
4.9620253164557	109.383363705854\\
5.16455696202532	90.8082248599516\\
5.36708860759494	106.108087810291\\
5.56962025316456	95.1350549880584\\
5.77215189873418	101.966055314798\\
5.9746835443038	99.6368774579187\\
6.17721518987342	97.7054425163506\\
6.37974683544304	103.509473415895\\
6.58227848101266	94.1107490041678\\
6.78481012658228	106.205679839587\\
6.9873417721519	91.8424001924636\\
7.18987341772152	107.452628560087\\
7.39240506329114	91.2554319424553\\
7.59493670886076	107.233394127504\\
7.79746835443038	92.2904892448335\\
8	105.755332608686\\
};
\addlegendentry{$k=n\pi/h_{\min}$}

\addplot [color=mycolor2]
  table[row sep=crcr]{%
-8	4.41120387792682\\
-7.79746835443038	4.41120436307943\\
-7.59493670886076	4.41120416620812\\
-7.39240506329114	4.41120433869579\\
-7.18987341772152	4.41120432276895\\
-6.9873417721519	4.4112038173335\\
-6.78481012658228	4.41120355254779\\
-6.58227848101266	4.41120361147306\\
-6.37974683544304	4.41120398956229\\
-6.17721518987342	4.41120430752882\\
-5.9746835443038	4.41120418130879\\
-5.77215189873418	4.41120446292364\\
-5.56962025316456	4.411204173799\\
-5.36708860759494	4.41120384265491\\
-5.16455696202532	4.41120511472569\\
-4.9620253164557	4.41123403664579\\
-4.75949367088608	4.4117807514606\\
-4.55696202531646	4.42206358221671\\
-4.35443037974684	4.61234260757265\\
-4.15189873417722	7.48488187777834\\
-3.9493670886076	13.0454014436152\\
-3.74683544303797	7.09325265973953\\
-3.54430379746835	5.65155853743271\\
-3.34177215189873	5.77984755460665\\
-3.13924050632911	6.03791754765899\\
-2.93670886075949	6.30234669592664\\
-2.73417721518987	6.56768253315139\\
-2.53164556962025	6.83555978448673\\
-2.32911392405063	7.10898201353008\\
-2.12658227848101	7.39011125449505\\
-1.92405063291139	7.68080522972715\\
-1.72151898734177	7.98331247415601\\
-1.51898734177215	8.29914695149029\\
-1.31645569620253	8.63101723211177\\
-1.11392405063291	8.98052083989453\\
-0.911392405063291	9.35060666608337\\
-0.708860759493671	9.74373659934539\\
-0.506329113924051	10.1635165620467\\
-0.30379746835443	10.613447719317\\
-0.10126582278481	11.0983178639703\\
0.10126582278481	11.6238079324692\\
0.30379746835443	12.1972033237013\\
0.506329113924051	12.8209649269128\\
0.708860759493671	13.5106536730882\\
0.911392405063291	14.2773342523045\\
1.11392405063291	15.1377764436077\\
1.31645569620253	16.1144438547319\\
1.51898734177215	17.2416277779749\\
1.72151898734177	18.5720373425189\\
1.92405063291139	20.1949823330564\\
2.12658227848101	22.2632118649963\\
2.32911392405063	25.0359693180714\\
2.53164556962025	28.9232726730064\\
2.73417721518987	34.5021612189022\\
2.93670886075949	42.4599111324391\\
3.13924050632911	53.44743613183\\
3.34177215189873	67.8810487059051\\
3.54430379746835	85.7168644194464\\
3.74683544303797	106.218356143713\\
3.9493670886076	127.742886264645\\
4.15189873417722	147.860078471303\\
4.35443037974684	165.752805584987\\
4.55696202531646	181.59611513224\\
4.75949367088608	195.573504570956\\
4.9620253164557	207.865231340894\\
5.16455696202532	218.640020793299\\
5.36708860759494	228.052107035235\\
5.56962025316456	236.24046952568\\
5.77215189873418	243.32906250302\\
5.9746835443038	249.427407130117\\
6.17721518987342	254.631363231928\\
6.37974683544304	259.024104966125\\
6.58227848101266	262.6767685402\\
6.78481012658228	265.649178738063\\
6.9873417721519	267.990520213281\\
7.18987341772152	269.739872213256\\
7.39240506329114	270.926635161313\\
7.59493670886076	271.570836480811\\
7.79746835443038	271.683359567865\\
8	271.266217421248\\
};
\addlegendentry{$k=n\pi/h_{\max}$}

\addplot [color=black, ultra thick]
  table[row sep=crcr]{%
-4 0 \\
-4 300\\
};
\addlegendentry{$\supp(h)$}

\addplot [color=black,  ultra thick]
  table[row sep=crcr]{%
6 0 \\
6 300\\
};

\end{axis}
\end{tikzpicture}%
\caption{\label{explosionsource} $\text{L}^2$-norm of $u(x,0)$ on the interval $(-8,8)$ with respect to the position $s$ of the source $b_\top=\delta_{x=s}$, for a frequency $k=pi/h_{\max}$ and $k=\pi/h_{\min}$. For comparison purposes the support of $h$ is plotted.}
\end{center}
\end{figure}
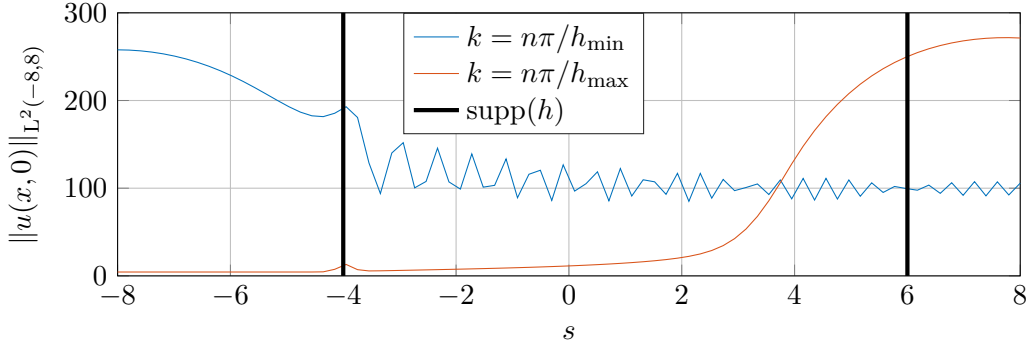

\section*{Appendix B: Proof of Proposition \ref{leastsq}}
\begin{proof}
For every $\g p\in U$, we compute 
\begin{equation}
\nabla J_{\g d}(\g p)=\left(\begin{array}{c} \sum_{i=1}^{n}\Ai(\beta-\alpha t_i)(z\Ai(\beta-\alpha t_i)-d_i)\\
\sum_{i=1}^{n}-zt_i\Ai'(\beta-\alpha t_i)(z\Ai(\beta-\alpha t_i)-d_i) \\
\sum_{i=1}^{n}z\Ai'(\beta-\alpha t_i)(z\Ai(\beta-\alpha t_i)-d_i) \end{array}\right), 
\end{equation}
\begin{equation}
\nabla^2J_{\g d}(\g p)=\sum_{i=1}^{n}M_i(\beta-\alpha t_i), 
\end{equation}
with 
\begin{equation}
M_i=\left(\begin{array}{ccc} \Ai^2 & -t_i\Ai'\times (2z\Ai-d_i) & \Ai'(2z\Ai-d_i) \\ -t_i\times \Ai'\times (2z\Ai-d_i) & \begin{array}{c} zt_i^2\Ai''\times (z\Ai-d_i)\\ \qquad  +z^2t_i^2\times (\Ai')^2\end{array} & \begin{array}{c} -zt_i\times \Ai''(z\Ai-d_i)\\ \qquad  -t_iz^2(\Ai')^2 \end{array}\\\Ai'\times (2z\Ai-d_i) & \begin{array}{c} -zt_i\times \Ai''(z\Ai-d_i)\\ \qquad -t_iz^2\times (\Ai')^2 \end{array} & \begin{array}{c} z\Ai''\times (z\Ai-d_i)\\ \qquad z^2(\Ai')^2 \end{array}\end{array}\right).\end{equation}
For every $h\in \C\times\R^2$, $h\neq 0$, 
\begin{equation}
(\nabla^2J_{\g d}(\g p)h|h)=A(\g p,h)+B(\g p,h), 
\end{equation}
where 
\begin{equation}
A(\g p,h)=\sum_{i=1}^{n} (\Ai(\beta-\alpha t_i)h_1+z\Ai'(\beta-\alpha t_i)(h_3-h_2t_i))^2,
\end{equation}
\begin{equation}
B(\g p,h)=\sum_{i=1}^{n}(z\Ai(\beta-\alpha t_i)-d_i)(2\Ai'(\beta-\alpha t_i)h_1(h_3-h_2t_i)+z\Ai''(\beta-\alpha t_i)(h_3-h_2t_i)^2).
\end{equation}
We want to prove that $A(\g p,h)>0$. To do so, we use the following Lemma. 

\begin{lem}
For every $(h_1,h_2,h_3)\in \C\times \R\times \R$ and $(z,\alpha,\beta)\in U$, the set of zeros of
\begin{equation}
x\mapsto h_1\Ai(\beta-\alpha x)+z\Ai'(\beta-\alpha x)\left(h_3-h_2x\right),
\end{equation} 
is finite, and at most equal to $3\ell+3$ where $\ell$ is the number of zeros of $\Ai'$ on 
\begin{equation}
I:=[\beta_{\min}-t,\beta_{\max}+t], \qquad \text{where} \qquad t=\alpha_{\max}\max(|t_1|,|t_n|).
\end{equation}
\end{lem}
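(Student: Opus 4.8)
The plan is to first strip the statement down to a real counting problem, then exploit the Riccati equation satisfied by $\Ai/\Ai'$. Write $\Phi(x):=h_1\Ai(\beta-\alpha x)+z(h_3-h_2x)\Ai'(\beta-\alpha x)$. Although $\Phi$ is complex-valued, the functions $\Ai,\Ai'$ and the scalars $\alpha,\beta,h_2,h_3$ are real, so $\mathrm{Re}\,\Phi$ and $\mathrm{Im}\,\Phi$ are each of the form $A\,\Ai(\beta-\alpha x)+B(h_3-h_2x)\Ai'(\beta-\alpha x)$ with $A,B\in\R$. A zero of $\Phi$ is a common zero of both, so (taking whichever of the two is not identically zero) it suffices to bound the real zeros of a single real function of this type. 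I would then apply the affine change of variable $s=\beta-\alpha x$, which is a bijection and turns the problem into counting the real zeros on $I$ of $\widetilde R(s):=A\,\Ai(s)+(c_0+c_1s)\Ai'(s)$, where $h_3-h_2x$ has become an affine function $c_0+c_1 s$ of $s$.

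Next I would partition $I$ at the $\ell$ zeros of $\Ai'$, giving at most $\ell+1$ subintervals. The degenerate case $A=0$ is immediate: then $\widetilde R=(c_0+c_1s)\Ai'$, whose zeros are the $\ell$ zeros of $\Ai'$ together with at most one zero of $c_0+c_1s$. When $A\neq0$, on a subinterval $J$ free of zeros of $\Ai'$ the zeros of $\widetilde R$ coincide with those of $g:=\widetilde R/\Ai'=A\,u+c_0+c_1s$, where $u:=\Ai/\Ai'$. The decisive structural observation is that every zero of $\Ai'$ is negative (the largest is $a_1'\approx-1.02$), so each bounded subinterval lies in $(-\infty,0)$; there the Riccati identity $u'=1-su^2$ forces $u'\ge1>0$, so $u$ is a strictly increasing bijection of $J$ onto $\R$. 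The zeros of $g$ are thus the crossings $u(s)=L(s)$ of this monotone graph with the affine line $L(s):=-(c_0+c_1s)/A$.

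The heart of the argument, and the step I expect to be the main obstacle, is bounding these crossings by $3$ per subinterval. Naive Rolle chains fail because $u$ solves a nonlinear equation, so polynomial expressions in $u,s$ do not close under differentiation. The way around this is to count via $(u-L)'=1-L'-s\,u^2$, whose zeros solve $\Psi(s)=L'-1$ for $\Psi(s):=-s\,u(s)^2$. One computes $\Psi'=-u\bigl(u+2s-2s^2u^2\bigr)$, and the quadratic factor $2s^2u^2-u-2s$ has discriminant $1+16s^3$, which is \emph{negative} precisely on the region $s\le a_1'$ where $\Ai'$ has its zeros. Hence on each bounded subinterval $\Psi'$ vanishes only at the unique interior zero of $\Ai$ (where $u=0$), so $\Psi$ is unimodal, the equation $\Psi=L'-1$ has at most two solutions, and Rolle then yields at most three zeros of $g$ on $J$.

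Finally I would assemble the estimate: at most $3$ zeros on each of the $\le\ell+1$ subintervals gives the bound $3(\ell+1)=3\ell+3$. Two bookkeeping points remain. First, the rightmost subinterval may extend past $a_1'$ into $s>0$, where monotonicity of $u$ can fail; there I would argue directly with the same Rolle scheme, estimating the critical points of $\Psi$ by hand to keep the per-subinterval count at $\le3$ (the asymptotics $\Ai'/\Ai\sim-\sqrt s$ show $\widetilde R$ has no zeros for large $s$). Second, multiple zeros are controlled because a common zero of $\widetilde R$ and $\widetilde R'$ forces $\det M(s)=0$, where $M$ is the $2\times2$ matrix expressing $(\widetilde R,\widetilde R')$ in terms of $(\Ai,\Ai')$; since $\det M(s)=A(A+c_1)-s(c_0+c_1s)^2$ is cubic, such coincidences occur at no more than three points and do not inflate the total count.
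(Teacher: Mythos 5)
Your proof is correct and follows essentially the same route as the paper: after the identical change of variables and partition of $I$ at the zeros of $\Ai'$, your quantity $\Psi'=u\left(2s^{2}u^{2}-u-2s\right)$ is exactly the paper's $g_2''$, and the per-subinterval count of three zeros via Rolle is the paper's argument verbatim, giving the same total $3(\ell+1)=3\ell+3$. If anything you are more careful than the paper at its two weakest points: your discriminant computation $1+16s^{3}<0$ supplies the positivity of the quadratic factor that the paper merely asserts with an unexplained ``$>0$'' brace, and the rightmost subinterval reaching past the largest zero of $\Ai'$ (where that positivity really does require a separate argument, e.g.\ via the bound $-\Ai'(s)/\Ai(s)\le\sqrt{s}+\tfrac{1}{4s}$ for $s>0$, which makes the factor positive and the function concave there) is a gap you flag explicitly while the paper passes over it in silence.
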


\begin{proof}
We do a change of variable $u=\beta-\alpha x$, and we see that for every $x\in [x_1,x_n]$, $u\in I$. We now look for $u\in I$ satisfying 
\begin{equation}
g_1(u):= h_1\Ai(u)+z\Ai'(u)\left(h_3-\frac{h_2}{\alpha}(\beta-u)\right)=0.
\end{equation}
We notice that if $h_1=0$ then $g_1(u)=0$ if and only if $\Ai'(u)=0$ or $x=h_3/h_2$, which gives $\ell+1$ zeros of $g_1$. Otherwise, we notice that if $\Ai'(u)=0$ then $g_1(u)\neq 0$ since every zero of the Airy function is simple (see \cite{abramowitz1}). It means that there exist $\alpha\in \C$ and $\beta \in \R$ such that
\begin{equation}
g_1(u)=0 \quad \Leftrightarrow \quad \frac{\Ai}{\Ai'}(u)=\alpha u +\beta.
\end{equation} 
We define 
\begin{equation}
g_2(u)=\frac{\Ai}{\Ai'}(u)-\alpha u -\beta, 
\end{equation}
and then 
\begin{equation}
g_2'(u)=\left(1-u\left(\frac{\Ai(u)}{\Ai'(u)}\right)^2\right), \qquad g_2''(u)=\frac{\Ai(u)}{\Ai'(u)}\underbrace{\left(2-2u\left(\frac{\Ai(u)}{\Ai'(u)}\right)^2-\frac{\Ai(u)}{\Ai'(u)}\right)}_{>0}. 
\end{equation}
Between two zeros of $\Ai'$, $\Ai$ vanishes only once, meaning that depending of the value of $\alpha$, $g_2'$ vanishes at most twice, and so depending of the value of $\beta$, $g_2$ vanishes at most three times.
\end{proof}

Back to the proof of Proposition \ref{leastsq}, we now set $n_0=3\ell+3$, and if $n>n_0$ then it shows that $A(\g p,h)>0$. We denote $\lambda_1(\g p)=\min_{h\in \R^3, h\neq 0}a(\g p,h)/\Vert h \Vert_2^2$. This function is continuous on a subset $U_1\subset U$, and we denote by $m$ the minimum of $\lambda_1$ on $U_1$. We also notice that
\begin{equation}
|b(\g p,h)|\leq \Vert z\Ai(\beta-\alpha X) -\g d\Vert_1(\Vert \Ai'\Vert _\infty+2z\Vert \Ai''\Vert_\infty)(1+\Vert \g t\Vert_\infty)\Vert h\Vert_2^2.
\end{equation}
We see that 
\begin{equation}
\Vert z\Ai(\beta-\alpha X) -\g d\Vert_1=\Vert F(\g p)-\g d\Vert_1\leq \Vert F(\g p)-F(\g p_0)\Vert_1+\Vert \g d_0-\g d\Vert_1.
\end{equation}
There exists a constant $M>0$ depending on $U$ such that for every $\g p\in U$, 
\begin{equation}
\Vert F(\g p)-F(\g p_0)\Vert_1\leq M\Vert \g p-\g p_0\Vert_1.
\end{equation}
We define $U_2=U_1\cap B_1(p_0,\eps_1/M)$ where 
\begin{equation}
\eps_1:=\frac{m}{4(\Vert \Ai'\Vert _\infty+2z_{\max}\Vert \Ai''\Vert_\infty)(1+\max(|x_1|,|x_n|))}.
\end{equation}
It follows that for every $\g p\in U_2$ and $\g d\in B_1(\g d_0,\eps_1)$,
\begin{equation}
\Vert F(\g p)-F(\g p_0)\Vert_1\leq \eps_1, \quad |b(\g p,h)|\leq \frac{m}{2}\Vert h \Vert_2^2.
\end{equation}
The operator $J_{\g d}$ is then strictly convex on $U_2$ since 
\begin{equation}
\Vert \nabla^2J_{\g d}(\g p)\Vert_2\geq \min_{h\in \C\times \R^2, h\neq 0}\frac{| A(\g p,h)|-|B(\g p,h)|}{\Vert h \Vert_2^2}\geq \frac{m}{2}>0
\end{equation}
We now need to prove that the minimum of $J_{\g d}$ is located inside of $U_2$ and not on its boundary. To do so, we look for a point $\g p\in U_2$ such that $\nabla J_{\g d}(\g p)=0$. We already know that $\nabla J_{\g d_0}(\g p_0)=0$. Using the implicit function theorem, there exists an open set $V\subset \R^{n}$ containing $\g d_0$ such that there exists a unique continuously differentiable function $G_1: V\rightarrow \C\times \R\times \R$ such that $\nabla F_{\g d}(G_1(\g d))=0$. We define $U'=G_1(V)\cap U_2$ and there exists $\eps>0$ such that $B_2(\g d_0,\eps)\subset G_1^{-1}(U')\cap B_1(\g d_0,\eps_1)$. It shows that the application 
\begin{equation}
G:\left\{\begin{array}{rcl} B_2(\g d_0,\eps) & \rightarrow & U' \\ \g d& \mapsto& \text{argmin}_{\g p\in U'}J_{\g d}(\g p)\end{array}\right. ,
\end{equation}
is well-defined and continuously differentiable. Moreover, we also know that  
\begin{equation}
\partial_{d_j}G(\g d)=-[\nabla^2J_{\g d}(G(\g d))]^{-1}[\partial_{d_j}\nabla J_{\g d}(G(\g d))].
\end{equation}
We denote $\g p_\text{LS}=G(\g d)$, and  
\begin{equation}
\Vert \partial_{d_j}\nabla J_{\g d}(\g p_\text{LS})\Vert_2=\left\Vert\left(\begin{array}{c} -\Ai(\beta_\text{LS}-\alpha_\text{LS}t_j) \\
z_\text{LS}t_j\Ai'(\beta_\text{LS}-\alpha_\text{LS}t_j) \\
-z_\text{LS}\Ai'(\beta_\text{LS}-\alpha_\text{LS}t_j)\end{array}\right) \right\Vert_2\leq \sqrt{\Vert \Ai\Vert_\infty^2+\alpha_{\max}\Vert \Ai'\Vert_\infty^2(1+\Vert \g t\Vert_\infty)}:=c_1.
\end{equation}
If follows that 
\begin{equation}
\Vert\g p_\text{LS}-\g p_0\Vert_2=\Vert G(\g d)-G(\g d_0)\Vert_2\leq \frac{2\sqrt{n}c_1}{m}\Vert \g d-\g d_0\Vert_2.
\end{equation}
Finally, 
\begin{equation}
|\Lambda(\g d)-\Lambda(\g d_0)|\leq \frac{1}{\alpha_{\min}}|\beta_\text{LS}-\beta_0|+\frac{\beta_{\max}}{(\alpha_{\min})^2}|\alpha_\text{LS}-\beta_\text{LS}|\leq \left(\frac{1}{\alpha_{\min}}+\frac{\beta_{\max}}{(\alpha_{\min})^2}\right)\frac{2\sqrt{n}c_1}{m}\Vert \g d-\g d_0\Vert_{2}. 
\end{equation}
\end{proof}

\bibliographystyle{abbrv}
\bibliography{biblio}

\end{document}